\documentclass[11pt,english]{article}
\usepackage[T1]{fontenc}

\usepackage{graphicx, subfigure}
\expandafter\def\csname ver@subfig.sty\endcsname{}
\usepackage[latin9]{inputenc}
\usepackage{geometry}
\geometry{verbose,tmargin=2.5cm,bmargin=2.5cm,lmargin=2.5cm}
\usepackage{amsmath}
\usepackage{appendix}
\usepackage{verbatim}
\usepackage{amssymb}
\usepackage{esint}
\usepackage{makecell,multirow,diagbox}
\usepackage{mathtools}
\usepackage{epstopdf}
\usepackage[latin9]{inputenc}
\usepackage{pict2e}
\usepackage{xcolor}
\usepackage{colortbl,booktabs}
\usepackage{stmaryrd}
\usepackage{esint}
\usepackage[all,cmtip]{xy}
\usepackage{mathtools}
\usepackage{float}
\usepackage{setspace}
\usepackage{authblk}
\usepackage{amsthm}
\usepackage{mathrsfs}
\usepackage{stmaryrd}
\usepackage{bbold}
\usepackage{color}
\usepackage{tikz}
\usepackage[all,cmtip]{xy}
\usepackage{algorithm}  
\usepackage{algorithmic}

\theoremstyle{plain}\newtheorem{theorem}{Theorem}[section]
\theoremstyle{plain}
\newtheorem{remark}{Remark}
\usepackage{color}
\definecolor{marin}{rgb} {0., 0.3, 0.7}
\definecolor{rouge}{rgb} {0.8, 0., 0.}
\definecolor{sepia}{rgb} {0.8, 0.5, 0.}
\usepackage[colorlinks,citecolor=sepia,linkcolor=marin,bookmarksopen,bookmarksnumbered]{hyperref}

\def\xb{\mathbf{x}}

\def\be{\begin{equation}}
\def\ee{\end{equation}}

\theoremstyle{definition}

\DeclareSymbolFont{largesymbol}{OMX}{yhex}{m}{n}
\DeclareMathAccent{\Widehat}{\mathord}{largesymbol}{"62}

\newtheorem{proposition}[theorem]{Proposition}
\makeatletter

\usepackage{babel}
\usepackage{subfig}
\graphicspath{{Dirac/}}
           
\makeatother
\DeclareMathSizes{14}{14}{9.8}{7}

\usepackage{lipsum}

\begin{document}

\title{On the Poisson brackets of hybrid plasma  models with kinetic ions and massless electrons}
\date{}
\author[1]{Yingzhe Li}
\author[2]{Philip J. Morrison}
\author[1]{Stefan Possanner}
\author[1,3]{Eric Sonnendr\"ucker}
\affil[1]{Max Planck Institute for Plasma Physics, Boltzmannstrasse 2, 85748 Garching, Germany}
\affil[2]{Department of Physics and Institute for Fusion Studies, The University of Texas at Austin, Austin, TX78712, USA}
\affil[3]{Department of Mathematics, Technical University of Munich, Boltzmannstrasse 3, 85748 Garching, Germany}

\maketitle
\begin{abstract}
We investigate the conditions under which the Jacobi identity holds for a class of recently introduced anti-symmetric brackets for the hybrid plasma models with kinetic ions and massless electrons. In particular, we establish the precise conditions under which the brackets for the vector-potential-based formulations satisfy the Jacobi identity, and demonstrate that these conditions are fulfilled by all physically relevant functionals. Moreover, for the magnetic-field-based formulation, we show that the corresponding anti-symmetric bracket constitutes a Poisson bracket under the divergence-free condition of the magnetic field, and we provide a direct proof of the Jacobi identity. These results are further extended to models incorporating electron entropy as well as more general hybrid kinetic-fluid models.

\end{abstract}


\section{Introduction}

Many important physical models can be formulated in a Hamiltonian framework, 
 such as Maxwell's equations, and Schr\"odinger equation, the equations of which can be derived with the appropriate Hamiltonians and Poisson brackets~\cite{symmetry}. In plasma physics, Hamiltonian formulations with corresponding Poisson brackets have been developed for various models, including the Vlasov--Maxwell equations~\cite{morrisonvm, MW}, magnetohydrodynamics (MHD)~\cite{morrisonmhd}, two fluid model~\cite{twofluid}, and extended MHD~\cite{holm,Abdelhamid}. Recently, some Poisson brackets have been proposed for the hybrid plasma models in~\cite{Tronci}.

For the hybrid plasma model with kinetic ions and massless electrons (HKM), the Poisson brackets proposed in~\cite{Tronci} involve the charge densities of the ions and the electrons, which are assumed to be equal under the quasi-neutrality condition. Preserving quasi-neutrality numerically requires special care of both charge densities during simulations, particularly when particle-in-cell methods are used for ions and grid-based methods are used for electrons, as discussed in~\cite{LHPS, LHPS2}. From a theoretical perspective, the quasi-neutrality condition can be understood as a constraint of non-dissipative plasma models. The Dirac constraint theory for imposing constraints has been considered and developed in recent works~\cite{poissondirac, Kaltsasvp, burbyvp}.

In this work, we show that 
the quasi-neutrality condition of the HKM model corresponds to a momentum map in the canonical momentum based formulation due to the gauge symmetry and a Casimir functional in the velocity based formulations. In~\cite{LHPS, LHPS2}, the simplified anti-symmetric brackets are derived via neglecting terms associated with the functional derivatives of the electron charge density in the Poisson brackets given in~\cite{Tronci}.  In this work, we investigate the Jacobi identities for these simplified anti-symmetric brackets~\cite{LHPS, LHPS2}. Specifically, 
we determine the conditions for the Jacobi identities of the vector potential-based anti-symmetric brackets proposed in~\cite{LHPS2}, which are satisfied by all the physically meaningful functionals. Furthermore,
we show that the magnetic field-based anti-symmetric bracket proposed in~\cite{LHPS} satisfies the Jacobi identity as a Poisson bracket, provided that the divergence free condition of the magnetic field $\nabla \cdot {\mathbf B} = 0$ holds, and we present a proof via direct calculations of the Jacobi identity. 

Firstly, we present the dimensionless HKM model~\cite{LHPS} as follows, \begin{equation} \label{model}
\begin{aligned}
\text{kinetic ions:}\qquad &\frac{\partial f}{\partial t} + {\mathbf v} \cdot \frac{\partial f}{\partial \xb} +  ({\mathbf E} + {\mathbf v} \times {\mathbf B}) \cdot \frac{\partial f}{\partial {\mathbf v}} = 0\,,
\\[2mm]
\text{Faraday's law:}\qquad &\frac{\partial {\mathbf B}}{\partial t} = - \nabla \times {\mathbf E}\,,\qquad \nabla \cdot {\mathbf B} = 0\,,
\\[1mm]
\text{mass-less fluid  electrons:} \qquad & \frac{\partial n_e}{\partial t} +  \nabla \cdot \left(n_e {\mathbf u}_e  \right) = 0, \quad {\mathbf u}_e = {\mathbf u} - \frac{\mathbf J}{ne}\,,
\\[1mm]
\text{Ohm's law:}  \qquad & {\mathbf E} = - T_e \frac{\nabla n_e}{n_e} - \left(\mathbf u - \frac{\mathbf J}{n_e }  \right) \times {\mathbf B} \,.
\end{aligned}
\end{equation}
Here we take the single ion species case for an example, have assumed ion has an unit charge, 
  $f({\mathbf x}, {\mathbf v},t)$ denotes the ion distribution function depending on time $t \in \mathbb{R}$, position ${\mathbf x} \in \mathbb{R}^3$ and velocity ${\mathbf v} \in \mathbb{R}^3$, quasi-neutrality number density is $n_e = n$, where $n = \int f\,\mathrm{d}{\mathbf v}$ and $n_e$ are ion and electron charge densities, respectively, ${\mathbf u} =  \int {\mathbf v} f\, \mathrm{d}{\mathbf v}/n$ is the ion current carrying drift velocity,
$\mathbf E({\mathbf x},t)$ and $\mathbf B({\mathbf x},t)$ stand for the electromagnetic fields, $\mathbf J =\nabla \times {\mathbf B}$ denotes the plasma current, and the constant $T_e$ is the electron temperature. The Hamiltonian for the above hybrid model~\eqref{model} is given in~\cite{Tronci, LHPS} as
\begin{equation*}
    H(f, {\mathbf B}, n_e) = \frac{1}{2} \int |{\mathbf v}|^2 f \, \mathrm{d}{\mathbf v}\mathrm{d}{\mathbf x} + \frac{1}{2}\int |{\mathbf B}|^2 \, \mathrm{d}{\mathbf x} +T_e \int n_e \ln n_e \,\mathrm{d}{\mathbf x}.
\end{equation*}

The Poisson bracket of~\eqref{model} (called the xvB formulation in~\cite{LHPS2}) 
is obtained by simply transforming the Poisson bracket (the formula (76) in~\cite{Tronci}) in terms of the vector potential ${\mathbf A}$ to the Poisson bracket in terms of the magnetic field via the following two relations~\cite{Tronci,MW}
$
{\mathbf B} = \nabla \times {\mathbf A}, \frac{\delta F}{\delta {\mathbf A}} = \nabla \times \frac{\delta F}{\delta {\mathbf B}}.
$ Also the canonical momentum based Poisson bracket is given in~\cite{Tronci}, in which the distribution function depends on the canonical momentum ${\mathbf p}$ defined as ${\mathbf p} = {\mathbf v} + {\mathbf A}$. This formulation is called the xpA formulation in~\cite{LHPS2}. The third formulation, i.e., also refered to as the xvA formulation, includes the distribution function $f$ depending on velocity ${\mathbf v}$, and the vector potential ${\mathbf A}$.
Note that in the hybrid model~\eqref{model} and the Poisson brackets given in~\cite{Tronci}, the electron charge density $n_e$ is regarded as an independent unknown. 


 This work is organized as follows. The xpA, xvA, and xvB formulations of the HKM model are investigated in Section~\ref{sec:brackets}. 
 The conditions of the Jacobi identity of the simplified brackets proposed in~\cite{LHPS, LHPS2} are given in subsection~\ref{sec:xpa}-\ref{sec:xvB}, respectively. The direct proof of the Jacobi identity of the simplified Poisson bracket of the xvB formulation is presented  in subection~\ref{sec:xvB}. We establish relationships among the simplified Poisson brackets of these three formulations using the chain rules for functional derivatives in subection~\ref{sec:relation}. The extensions to more general hybrid models are briefly mentioned in Section~\ref{sec:general}.


\section{The Poisson brackets of the HKM model}\label{sec:brackets}
In this section, we investigate the three formulations of the HKM model. The quasi-neutrality condition corresponds to a momentum map in the canonical momentum based formulation due to the gauge symmetry and a Casimir functional in the velocity based formulations.
We also give the conditions of the Jacobi identities of the three anti-symmetric brackets proposed in~\cite{LHPS,LHPS2}.

\subsection{xpA formulation}\label{sec:xpa}
We start with the xpA formulation, the Poisson brackets of which are simple.
The following Poisson bracket is proposed in~\cite{Tronci} (in the following the time variable $t$ is omitted in the unknowns for simplicity), 
\begin{equation}
\label{eq:xpanebrackethelp}
    \begin{aligned}
 \{F, G  \}(f({\mathbf x}, {\mathbf p}), {\mathbf A}, n_e) & =  \int f \left[ \frac{\delta F}{\delta f}, \frac{\delta G}{\delta f} \right]_{xp}  \mathrm{d}{\mathbf x}\mathrm{d}{\mathbf p} - \int \frac{1}{n_e} \nabla \times {\mathbf A} \cdot \left( \frac{\delta F}{\delta {\mathbf A}} \times  \frac{\delta G}{\delta {\mathbf A}} \right)  \mathrm{d}{\mathbf x} \\
 & + \int \frac{\delta F}{\delta {\mathbf A}} \cdot \frac{\partial }{\partial {\mathbf x}} \frac{\delta G}{\delta n_e} -  \frac{\delta G}{\delta {\mathbf A}} \cdot \frac{\partial }{\partial {\mathbf x}}\frac{\delta F}{\delta n_e}\, \mathrm{d}{\mathbf x}.
        \end{aligned}
\end{equation}
Here we give a proof of the Jacobi identity of the Poisson bracket~\eqref{eq:xpanebrackethelp}.
\begin{theorem}\label{thm:xpa}
The bracket~\eqref{eq:xpanebrackethelp} satisfies the Jacobi identity.
\end{theorem}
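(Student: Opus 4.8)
The plan is to exploit the standard fact that, for any bilinear antisymmetric bracket written as $\{F,G\}=\int \frac{\delta F}{\delta\psi}\,\mathcal{J}(\psi)\,\frac{\delta G}{\delta\psi}$, the contributions to the Jacobiator $\{F,\{G,H\}\}+\mathrm{cyc}$ coming from second functional derivatives of $F$, $G$, $H$ cancel identically, so that only the terms in which a functional derivative hits the \emph{explicit} field-dependence of the Poisson operator $\mathcal{J}$ survive. First I would record that the bracket \eqref{eq:xpanebrackethelp} is manifestly antisymmetric (the kinetic term because $[\cdot,\cdot]_{xp}$ is antisymmetric, the magnetic term because $(\nabla\times\mathbf{A})\cdot(\cdot\times\cdot)$ changes sign under $F\leftrightarrow G$, and the $n_e$-coupling term by inspection), so this reduction applies.

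The key structural observation is that the explicit field-dependences of the three terms are disjoint: the kinetic term depends explicitly only on $f$ (through the prefactor $f$), the magnetic term depends explicitly only on $(\mathbf{A},n_e)$ (through the factor $\tfrac{1}{n_e}\nabla\times\mathbf{A}$), and the $n_e$-coupling term is constant-coefficient and carries no explicit field-dependence at all. Since $f$, $\mathbf{A}$ and $n_e$ are independent variables, a derivative $\delta/\delta f$ never produces an explicit contribution from the field terms, and $\delta/\delta\mathbf{A}$, $\delta/\delta n_e$ never produce one from the kinetic term. Consequently the surviving Jacobiator splits into a purely kinetic piece and a purely field piece, which can be treated separately.

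For the kinetic piece, $\int f\,[\tfrac{\delta F}{\delta f},\tfrac{\delta G}{\delta f}]_{xp}\,\mathrm{d}\mathbf{x}\,\mathrm{d}\mathbf{p}$ is the Lie--Poisson bracket associated with the algebra of phase-space functions under the canonical bracket $[\cdot,\cdot]_{xp}$; its Jacobi identity is classical and follows directly from the Jacobi identity of $[\cdot,\cdot]_{xp}$, so this part of the Jacobiator vanishes. The main work is the field piece, namely the Jacobi identity for
\[
\{F,G\}_{\mathrm{field}}=-\int \tfrac{1}{n_e}(\nabla\times\mathbf{A})\cdot\Big(\tfrac{\delta F}{\delta\mathbf{A}}\times\tfrac{\delta G}{\delta\mathbf{A}}\Big)\,\mathrm{d}\mathbf{x}+\int\Big(\tfrac{\delta F}{\delta\mathbf{A}}\cdot\nabla\tfrac{\delta G}{\delta n_e}-\tfrac{\delta G}{\delta\mathbf{A}}\cdot\nabla\tfrac{\delta F}{\delta n_e}\Big)\,\mathrm{d}\mathbf{x}.
\]
I would handle this by computing the two explicit functional derivatives of the magnetic term, $(\delta\{F,G\}/\delta\mathbf{A})_{\mathrm{expl}}=\nabla\times\big(\tfrac{1}{n_e}(\tfrac{\delta G}{\delta\mathbf{A}}\times\tfrac{\delta F}{\delta\mathbf{A}})\big)$ and $(\delta\{F,G\}/\delta n_e)_{\mathrm{expl}}=-\tfrac{1}{n_e^2}(\nabla\times\mathbf{A})\cdot(\tfrac{\delta G}{\delta\mathbf{A}}\times\tfrac{\delta F}{\delta\mathbf{A}})$, feeding them back through the field Poisson operator, and summing cyclically. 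The $\nabla(\delta H/\delta n_e)$-type contributions drop out at once because $\int(\nabla\times\mathbf{W})\cdot\nabla c=0$, leaving only two families of terms whose cyclic sum must vanish.

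This last cancellation is the crux. The cleanest route is to recognize that $\{\cdot,\cdot\}_{\mathrm{field}}$ is, up to an irrelevant overall sign, exactly the noncanonical Poisson bracket of barotropic compressible fluid dynamics in velocity--density variables $(\mathbf{v},\rho)$ under the identification $\mathbf{v}\leftrightarrow\mathbf{A}$, $\rho\leftrightarrow n_e$, whose Jacobi identity is established in the literature; here the role of $\nabla\cdot\mathbf{B}=0$ is played automatically by $\mathbf{B}=\nabla\times\mathbf{A}$, which is precisely why this vector-potential formulation needs no extra constraint. Alternatively, one verifies the cancellation directly by expanding the triple products in index notation and integrating by parts, using $\nabla\cdot(\nabla\times\mathbf{A})=0$, after which the antisymmetric combination produced by the cyclic sum vanishes. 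I expect the bookkeeping of this vector-calculus cancellation, rather than any conceptual difficulty, to be the principal obstacle, with the reduction to explicit terms and the kinetic/field splitting doing the essential structural work.
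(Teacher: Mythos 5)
Your proposal is correct and follows essentially the same route as the paper: reduction via Morrison's bracket theorem to the explicit dependence of the Poisson operator, the kinetic piece handled as a Lie--Poisson bracket, the cross terms killed by $\int (\nabla\times\mathbf{W})\cdot\nabla c\,\mathrm{d}\mathbf{x}=0$, and the cyclic cancellation of the remaining field terms delegated to a known fluid-bracket result. Your identification of the $(\mathbf{A},n_e)$ piece with the barotropic compressible fluid bracket in velocity--density variables (up to an overall, Jacobi-preserving sign) is precisely what the paper accomplishes by invoking Lemma 4.1 of~\cite{Abdelhamid} with $\mathbf{p}=\mathbf{q}=\mathbf{r}=\mathbf{A}$ and $\rho=n_e$.
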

\begin{proof}
 By the bracket theorem in~\cite{lifting}, to prove the Jacobi identity
for bracket $$\{F, G\}(\psi) = \int \frac{\delta F}{\delta \psi} J(\psi) \frac{\delta G}{\delta \psi}\, \mathrm{d}{\mathbf a},$$ where $J$ is a operator depending on $\psi$ and $\psi$ depends on ${\mathbf a}$, only the explicit dependence of $J$ on $\psi$ is needed to consider when
taking the functional derivative $\frac{\delta \{F, G \}}{\delta \psi}$. We calculate the functional derivatives of $ \{F, G  \}$ (omitting the second order functional derivatives) and get  
\begin{equation*}
    \begin{aligned}
    \frac{ \delta  \{F, G  \}  }{\delta f}  =\left[ \frac{\delta F}{\delta f}, \frac{\delta G}{\delta f} \right]_{xp}, 
    \frac{ \delta  \{F, G  \}  }{\delta {\mathbf A}}  = - \nabla \times \left( \frac{1}{n_e} \frac{\delta F}{\delta {\mathbf A}} \times  \frac{\delta G}{\delta {\mathbf A}} \right), 
     \frac{ \delta  \{F, G  \}  }{\delta n_e} = \frac{\nabla \times {\mathbf A}}{n_e^2}  \cdot \left( \frac{\delta F}{\delta {\mathbf A}} \times  \frac{\delta G}{\delta {\mathbf A}} \right).
      \end{aligned}
\end{equation*}
Then for any three smooth functionals $F, G$, and $K$ depending on $f, {\mathbf A}$, and $n_e$, we have
\begin{equation}
\label{eq:ne}
    \begin{aligned}
\{ \{F, G  \}, K \}(f({\mathbf x}, {\mathbf p}), & {\mathbf A}, n_e) = \int f \left[ \left[ \frac{\delta F}{\delta f}, \frac{\delta G}{\delta f} \right]_{xp}, \frac{\delta K}{\delta f} \right]_{xp} \mathrm{d}{\mathbf x} \mathrm{d}{\mathbf p} \\
& +  \int \frac{1}{n_e} \nabla \times {\mathbf A} \cdot \left( \nabla \times \left( \frac{1}{n_e} \frac{\delta F}{\delta {\mathbf A}} \times  \frac{\delta G}{\delta {\mathbf A}} \right) \times \frac{\delta K}{\delta {\mathbf A}} \right) \mathrm{d}{\mathbf x} \\
&- \underbrace{\int \nabla \times \left( \frac{1}{n_e}\left( \frac{\delta F}{\delta {\mathbf A}} \times  \frac{\delta G}{\delta {\mathbf A}} \right)  \right) \cdot \frac{\partial}{\partial {\mathbf x}} \frac{\delta K}{\delta n_e}\, \mathrm{d}{\mathbf x}}_{=0 \text{\ via integration by parts}} \\ 
& - \int \frac{\delta K}{\delta {\mathbf A}} \cdot \frac{\partial }{\partial {\mathbf x}} \left( \frac{1}{n_e^2} \nabla \times {\mathbf A} \cdot \left( \frac{\delta F}{\delta {\mathbf A}} \times  \frac{\delta G}{\delta {\mathbf A}} \right) \right)  \mathrm{d}{\mathbf x}.
  \end{aligned}
\end{equation}
The sum of the permutation of the first term is zero as it is a Lie--Poisson bracket~\cite{morrisonvm,MW}. The sum of the permutation of the second and fourth terms is zero can be proved using the lemma 4.1 in~\cite{Abdelhamid} by letting ${\mathbf p} = {\mathbf q} = {\mathbf r} = {\mathbf A}$ and $\rho = n_e$. 
Then we have shown that the bracket~\eqref{eq:xpanebrackethelp} satisfies the Jacobi identity, i.e., 
$\{ \{F, G  \}, K \} + \{ \{G, K  \}, F \} + \{ \{K, F  \}, G \} = 0 $.
\end{proof}
Based on theorem~\ref{thm:xpa}, we know that $P = C^{\infty}(\mathbb{R}^6) \times [C^{\infty}(\mathbb{R}^3)]^3 \times C^{\infty}(\mathbb{R}^3)$ is a Poisson manifold~\cite{symmetry} with the above Poisson bracket~\eqref{eq:xpanebrackethelp}.
We define the action $\Phi$ of the Lie group $G = C^\infty(\mathbb{R}^3)$ on the Poisson manifold $P$, i.e., $\Phi: G \times P \rightarrow P$, as
\begin{equation}\label{eq:actionxpa}
\Phi_\psi\left(f({\mathbf x}, {\mathbf p}), {\mathbf A}, n_e \right) = \left(f({\mathbf x}, {\mathbf p} - \nabla \psi), {\mathbf A} + \nabla \psi, n_e \right), \quad \forall \psi \in G.
\end{equation}
The Hamiltonian $$
H(f({\mathbf x}, {\mathbf p}), {\mathbf A}, n_e)= \frac{1}{2} \int |{\mathbf p} - {\mathbf A}|^2 f\, \mathrm{d}{\mathbf p}\mathrm{d}{\mathbf x} + \frac{1}{2} \int |\nabla \times {\mathbf A}|^2 \, \mathrm{d}{\mathbf x} + T_e \int n_e \ln n_e \, \mathrm{d}{\mathbf x}
$$
is invariant under the group action~\eqref{eq:actionxpa}. And the Lie group $G$ acts on the Poisson manifold $P$ canonically~\cite{symmetry}, i.e., 
$$
\Phi_\psi^*\{F, G \} = \{\Phi_\psi^*F, \Phi_\psi^*G \},
$$
where $\Phi_\psi^*$ denotes  the pull back of the map $\Phi_\psi$. Next we prove the quasi-neutrality condition is related with a momentum map associated with the Lie group action~\eqref{eq:actionxpa}.
\begin{proposition}
The map ${\mathbf J}: P \rightarrow g^*$
\begin{equation}\label{eq:mm}
{\mathbf J}\left(f({\mathbf x}, {\mathbf p}), {\mathbf A}, n_e \right) =  n_e - \int f({\mathbf x}, {\mathbf p}) \, \mathrm{d}{\mathbf p},
\end{equation}
 is a momentum map of the action defined in~\eqref{eq:actionxpa}, where $g = C^{\infty}(\mathbb{R}^3)$ is the Lie algebra of the Lie group $G = C^{\infty}(\mathbb{R}^3)$, and $g^*$ is the dual space of $g$ in the sense of $L^2$ integral.
\end{proposition}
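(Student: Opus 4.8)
The plan is to verify the defining property of a momentum map directly against the Poisson bracket~\eqref{eq:xpanebrackethelp}. For each element $\psi$ of the Lie algebra $g = C^{\infty}(\mathbb{R}^3)$, I would pair $\mathbf{J}$ with $\psi$ in the $L^2$ sense to form the scalar functional
\begin{equation*}
J_\psi := \langle \mathbf{J}, \psi \rangle = \int \psi \left( n_e - \int f\, \mathrm{d}{\mathbf p} \right) \mathrm{d}{\mathbf x},
\end{equation*}
and then establish that its Hamiltonian vector field coincides with the infinitesimal generator $\psi_P$ of the action $\Phi$ from~\eqref{eq:actionxpa}; in bracket form this is the identity $\{F, J_\psi\} = \psi_P[F]$, required for every smooth functional $F$ on $P$.

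First I would compute the infinitesimal generator. As $G = C^{\infty}(\mathbb{R}^3)$ is the additive group, one has $\exp(t\psi) = t\psi$, and differentiating $\Phi_{t\psi}$ at $t=0$ gives the three components
\begin{equation*}
\psi_P = \left( -\nabla \psi \cdot \frac{\partial f}{\partial {\mathbf p}}, \ \nabla \psi, \ 0 \right),
\end{equation*}
so that $\psi_P[F] = -\int \frac{\delta F}{\delta f}\, \nabla\psi \cdot \frac{\partial f}{\partial {\mathbf p}}\, \mathrm{d}{\mathbf x}\mathrm{d}{\mathbf p} + \int \frac{\delta F}{\delta {\mathbf A}} \cdot \nabla\psi\, \mathrm{d}{\mathbf x}$.

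Next I would record the functional derivatives of $J_\psi$, namely $\frac{\delta J_\psi}{\delta f} = -\psi$, $\frac{\delta J_\psi}{\delta {\mathbf A}} = 0$, and $\frac{\delta J_\psi}{\delta n_e} = \psi$, and substitute them into~\eqref{eq:xpanebrackethelp} with $G = J_\psi$. Because $\frac{\delta J_\psi}{\delta {\mathbf A}}$ vanishes, the second term of the bracket and one half of the third term drop out; the surviving $n_e$-contribution is $\int \frac{\delta F}{\delta {\mathbf A}} \cdot \nabla\psi\, \mathrm{d}{\mathbf x}$, which already matches the $\mathbf{A}$-part of $\psi_P[F]$. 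For the remaining $f$-term, inserting $\frac{\delta J_\psi}{\delta f} = -\psi({\mathbf x})$ into the canonical bracket and using that $\psi$ has no $\mathbf{p}$-dependence collapses $[\,\frac{\delta F}{\delta f}, -\psi\,]_{xp}$ to $\frac{\partial}{\partial {\mathbf p}}\frac{\delta F}{\delta f} \cdot \nabla\psi$, so this term becomes $\int f\, \frac{\partial}{\partial {\mathbf p}}\frac{\delta F}{\delta f} \cdot \nabla\psi\, \mathrm{d}{\mathbf x}\mathrm{d}{\mathbf p}$.

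The one step that requires care is reconciling this $f$-sector expression with the first term of $\psi_P[F]$: an integration by parts in $\mathbf{p}$, justified by the decay of $f$ at infinity so that boundary terms vanish, turns $-\int \frac{\delta F}{\delta f}\, \nabla\psi \cdot \frac{\partial f}{\partial {\mathbf p}}$ into exactly $\int f\, \nabla\psi \cdot \frac{\partial}{\partial {\mathbf p}}\frac{\delta F}{\delta f}$. Hence both sectors agree and $\{F, J_\psi\} = \psi_P[F]$ holds for all $F$, which is the momentum-map identity. Finally, since the acting group is abelian, the coadjoint representation is trivial and $\mathrm{Ad}^*$-equivariance of $\mathbf{J}$ follows automatically, so no separate cocycle or equivariance computation is needed.
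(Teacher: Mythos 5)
Your proof is correct and takes essentially the same approach as the paper: you form the paired functional $J_\psi = \langle \mathbf{J}, \psi\rangle$, compute the infinitesimal generator of the action~\eqref{eq:actionxpa}, and verify directly from the bracket~\eqref{eq:xpanebrackethelp} that the Hamiltonian vector field of $J_\psi$ coincides with $\psi_P$ (the same verification the paper invokes but leaves implicit with ``we can check''), including the integration by parts in $\mathbf{p}$ that reconciles the two forms of the $f$-sector term. One caveat on your closing remark: for an abelian group, triviality of $\mathrm{Ad}^*$ does \emph{not} by itself make equivariance automatic (non-trivial cocycles can occur, e.g.\ for translations on a symplectic vector space); here equivariance does hold, but because $\mathbf{J}$ is manifestly invariant under $\Phi_\psi$ via the change of variables $\mathbf{p} \mapsto \mathbf{p} - \nabla\psi$ --- in any case this is immaterial, since the proposition asserts only the momentum-map property, which your argument establishes.
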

\begin{proof}
For $\phi \in g$, 
the infinitesimal generator $\phi_P$ is 
$$
\phi_P(f({\mathbf x}, {\mathbf p}), {\mathbf A}, n_e) = \frac{\mathrm{d}}{\mathrm{d}s} \Phi_{\exp(s\phi)}\left(f({\mathbf x}, {\mathbf p}), {\mathbf A}, n_e\right)|_{s=0} = \left(-\frac{\partial \phi}{\partial {\mathbf x}} \cdot \frac{\partial f}{\partial {\mathbf p}}, \frac{\partial \phi}{\partial {\mathbf x}}, 0\right).
$$
For the linear map $J: g \rightarrow \mathcal{F}(P)$, where $\mathcal{F}(P)$ is the set of all smooth functionals defined on $P$,
$$
J(\phi)(f, {\mathbf A}, n_e) = \int n_e\phi\, \mathrm{d}{\mathbf x} - \int f \phi\, \mathrm{d}{\mathrm{x}}\mathrm{d}{\mathbf p},
$$
we can check that the vector field $\phi_P$ is the Hamiltonian vector field (with the Poisson bracket~\eqref{eq:xpanebrackethelp}) of the functional $J(\phi)$.
By the definition of momentum map ${\mathbf J}: P \rightarrow g^*$ in the definition 11.2.1 in~\cite{symmetry}, we know that 
$$
<{\mathbf J}(f, {\mathbf A}, n_e), \phi> = J(\phi)(f, {\mathbf A}, n_e),
$$
where $<\cdot, \cdot>$ denotes the $L^2$ inner product. Then we have 
${\mathbf J}(f, {\mathbf A}, n_e) = n_e - \int f({\mathbf x}, {\mathbf p},t) \, \mathrm{d}{\mathbf p}$.
\end{proof}

\noindent{\bf Conservation of the Momentum Map} The above Lie algrbra $g$ action, which comes from the canonical left Lie group action $\Phi$, acts canonically on the Poisson manifold $P$ and admits the above momentum mapping~\eqref{eq:mm}. From the theorem 11.4.1 in~\cite{symmetry}, we know the momentum map is a constant of the motion for the Hamiltonian, i.e., the quasi-neutrality always holds if it holds initially.

\noindent{\bf Simplified bracket} In~\cite{LHPS2} we prefer to use the formulation with the following anti-symmetric bracket for the numerical discretization, due to the low level numerical noise  and good conservation properties, which is obtained by removing the last term  containing the functional derivative of $n_e$ in~\eqref{eq:xpanebrackethelp}, and replacing $n_e$ by $\int f \, \mathrm{d}{\mathbf p}$ because of the quasi-neutrality condition,
\begin{equation}
\label{eq:1}
    \begin{aligned}
        \{F, G  \}_l(f({\mathbf x}, {\mathbf p}), {\mathbf A}) =  \int f \left[ \frac{\delta F}{\delta f}, \frac{\delta G}{\delta f} \right]_{xp}  \mathrm{d}{\mathbf x}\mathrm{d}{\mathbf p} - \int \frac{1}{n} \nabla \times {\mathbf A} \cdot \left( \frac{\delta F}{\delta {\mathbf A}} \times  \frac{\delta G}{\delta {\mathbf A}} \right)  \mathrm{d}{\mathbf x},
    \end{aligned}
\end{equation}
where $n = \int f \, \mathrm{d}{\mathbf p}$ denotes the ion/electron charge density. 
This bracket
does not satisfy Jacobi identity, as shown below.
\begin{theorem}
The bracket~\eqref{eq:1} does not satisfy Jacobi identity.
\end{theorem}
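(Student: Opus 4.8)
The plan is to prove this negative result by exhibiting an explicit triple of functionals $F, G, K$ for which the Jacobiator $\{\{F,G\}_l, K\}_l + \{\{G,K\}_l, F\}_l + \{\{K,F\}_l, G\}_l$ does not vanish. The conceptual source of the failure is the replacement of the independent variable $n_e$ by $n = \int f\,\mathrm{d}{\mathbf p}$. In the full bracket~\eqref{eq:xpanebrackethelp} the coefficient $1/n_e$ does not depend on $f$, so that $\delta\{F,G\}/\delta f=[\delta F/\delta f,\delta G/\delta f]_{xp}$ receives no contribution from the magnetic term; in~\eqref{eq:1} the factor $1/n$ does depend on $f$ and produces an extra piece
$$\frac{\delta\{F,G\}_l}{\delta f} = \left[\frac{\delta F}{\delta f}, \frac{\delta G}{\delta f}\right]_{xp} + \frac{1}{n^2}\,\nabla\times{\mathbf A}\cdot\left(\frac{\delta F}{\delta{\mathbf A}}\times\frac{\delta G}{\delta{\mathbf A}}\right),$$
modulo the second-order functional derivatives that cancel exactly as in the proof of Theorem~\ref{thm:xpa}. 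This extra term is precisely the $\delta/\delta n_e$ contribution that was dropped, and it is the mechanism of the obstruction.

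First I would choose $F$ and $G$ to depend on ${\mathbf A}$ only and take $K$ linear in $f$, say $K=\int f\,k\,\mathrm{d}{\mathbf x}\mathrm{d}{\mathbf p}$ with $k=k({\mathbf x},{\mathbf p})$ fixed, so that $\delta K/\delta{\mathbf A}=0$ and $\delta K/\delta f=k$. With this choice $\{F,G\}_l$ reduces to the single magnetic term $\Phi:=-\int\frac{1}{n}\nabla\times{\mathbf A}\cdot(\delta F/\delta{\mathbf A}\times\delta G/\delta{\mathbf A})\,\mathrm{d}{\mathbf x}$, and since $\delta F/\delta{\mathbf A}$ and $\delta G/\delta{\mathbf A}$ do not depend on $f$, its $f$-derivative $\beta({\mathbf x}):=\delta\Phi/\delta f=\frac{1}{n^2}\nabla\times{\mathbf A}\cdot(\delta F/\delta{\mathbf A}\times\delta G/\delta{\mathbf A})$ is a function of ${\mathbf x}$ alone. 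Evaluating $\{\Phi,K\}_l$ then leaves only the canonical term $\int f\,[\beta,k]_{xp}\,\mathrm{d}{\mathbf x}\mathrm{d}{\mathbf p}=\int f\,\partial_{\mathbf x}\beta\cdot\partial_{\mathbf p}k\,\mathrm{d}{\mathbf x}\mathrm{d}{\mathbf p}$, where I have used $\partial_{\mathbf p}\beta=0$ and $\delta\Phi/\delta{\mathbf A}$ drops out against $\delta K/\delta{\mathbf A}=0$.

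Next I would dispose of the two remaining cyclic terms. Because $G$ depends on ${\mathbf A}$ only and $K$ is $f$-linear, $\{G,K\}_l=0$ (the canonical term vanishes since $\delta G/\delta f=0$, the magnetic term since $\delta K/\delta{\mathbf A}=0$), hence $\{\{G,K\}_l,F\}_l=0$; the same argument gives $\{K,F\}_l=0$ and therefore $\{\{K,F\}_l,G\}_l=0$. The Jacobiator thus collapses to $\int f\,\partial_{\mathbf x}\beta\cdot\partial_{\mathbf p}k\,\mathrm{d}{\mathbf x}\mathrm{d}{\mathbf p}$. To finish I would make everything concrete: take $F=\int A_1\,\mathrm{d}{\mathbf x}$ and $G=\int A_2\,\mathrm{d}{\mathbf x}$ (understood formally as in the computations above), so that $\delta F/\delta{\mathbf A}\times\delta G/\delta{\mathbf A}={\mathbf e}_3$ and $\beta=(\nabla\times{\mathbf A})_3/n^2$, and take $k=p_1$, so that $\partial_{\mathbf p}k={\mathbf e}_1$. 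Then the Jacobiator reduces to $\int n\,\partial_{x_1}\beta\,\mathrm{d}{\mathbf x}=-\int(\partial_{x_1}n)\,\beta\,\mathrm{d}{\mathbf x}$, which is manifestly nonzero for a generic pair $(f,{\mathbf A})$, for instance any state for which $n$ and $(\nabla\times{\mathbf A})_3$ have suitable non-constant profiles. This establishes that~\eqref{eq:1} violates the Jacobi identity.

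The main obstacle is not the algebra but the bookkeeping of the new $f$-dependence: one must take the \emph{total} functional derivative $\delta\{F,G\}_l/\delta f$, including the contribution of $1/n$, rather than invoking the shortcut of the bracket theorem in~\cite{lifting}, which was valid only because $n_e$ was independent of $f$ in~\eqref{eq:xpanebrackethelp}. The counterexample is engineered so that this single extra term survives in isolation, which both pinpoints the precise mechanism of the failure and makes its nonvanishing transparent; the choice of $K$ linear in $f$ further guarantees that every nested bracket is computed unambiguously from first functional derivatives alone, so that no compensating terms can intervene.
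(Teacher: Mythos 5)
Your proposal is correct and follows essentially the same route as the paper's proof: take $F,G$ depending only on ${\mathbf A}$ and $K$ linear in $f$, so that the two cyclic terms $\{\{G,K\}_l,F\}_l$ and $\{\{K,F\}_l,G\}_l$ vanish identically and the Jacobiator collapses to the single term produced by the $f$-dependence of $1/n$; the paper's only refinement is to choose $\delta K/\delta f={\mathbf p}\cdot\nabla h$ with $h$ equal to your $\beta$ at the reference state, which makes the surviving term $\int f|\nabla h|^2\,\mathrm{d}{\mathbf x}\,\mathrm{d}{\mathbf p}$ manifestly positive, whereas your choice $k=p_1$ with linear $F,G$ requires the (trivial) extra step of selecting a state for which $\int(\partial_{x_1}n)\,\beta\,\mathrm{d}{\mathbf x}\neq 0$. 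One small correction to your closing remark: the paper does invoke the bracket theorem of~\cite{lifting} for the simplified bracket~\eqref{eq:1} as well --- that theorem licenses dropping second-order functional derivatives while \emph{keeping} the explicit dependence of the operator on $f$, which is exactly the $1/n^2$ term you isolate --- so it is not a shortcut that fails when $n$ depends on $f$; this mischaracterization does not affect your argument, since with your all-linear choices of $F$, $G$, $K$ every nested bracket is computed exactly.
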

\begin{proof}
In order to consider the Jacobi identity, we need to 
calculate the functional derivatives of $ \{F, G  \}$ (ignoring the second order derivatives by the bracket theorem in~\cite{lifting}), then we have 
\begin{equation*}
    \begin{aligned}
\frac{\delta \{F, G  \}_l}{\delta f} = \left[ \frac{\delta F}{\delta f}, \frac{\delta G}{\delta f} \right]_{xp} + \frac{1}{n^2} \nabla \times {\mathbf A} \cdot  \left( \frac{\delta F}{\delta {\mathbf A}} \times  \frac{\delta G}{\delta {\mathbf A}} \right), \quad 
\frac{\delta \{F, G  \}_l}{\delta {\mathbf A}} = - \nabla \times \left( \frac{1}{n} \frac{\delta F}{\delta {\mathbf A}} \times  \frac{\delta G}{\delta {\mathbf A}} \right).
       \end{aligned}
\end{equation*} 
Based on the above functional derivatives, we have
\begin{equation}
\label{eq:nne}
    \begin{aligned}
&\{ \{F, G  \}_l, K \}_l =  \int \frac{1}{n} \nabla \times {\mathbf A} \cdot \left( \nabla \times \left( \frac{1}{n} \frac{\delta F}{\delta {\mathbf A}} \times  \frac{\delta G}{\delta {\mathbf A}} \right) \times \frac{\delta K}{\delta {\mathbf A}} \right) \mathrm{d}{\mathbf x}\\ 
& + \int f \left[ \left[ \frac{\delta F}{\delta f}, \frac{\delta G}{\delta f} \right]_{xp}, \frac{\delta K}{\delta f} \right]_{xp} \mathrm{d}{\mathbf x} \mathrm{d}{\mathbf p} + \int f  \left[  \frac{1}{n^2} \nabla \times {\mathbf A} \cdot  \left( \frac{\delta F}{\delta {\mathbf A}} \times  \frac{\delta G}{\delta {\mathbf A}} \right), \frac{\delta K}{\delta f} \right]_{xp} \mathrm{d}{\mathbf x} \mathrm{d}{\mathbf p}.
    \end{aligned}
\end{equation}  
For functionals $F, G$ depending only on ${\mathbf A}$, we define specially a functional depending only on $f$ as $K = \int f {\mathbf p} \cdot \nabla h({\mathbf x})\, \mathrm{d}{\mathbf p}\mathrm{d}{\mathbf x}$, where  $h({\mathbf x}) =  \frac{1}{n^2} \nabla \times {\mathbf A}_0 \cdot  \left( \frac{\delta F}{\delta {\mathbf A}}|_{{\mathbf A} =
{\mathbf A}_0} \times  \frac{\delta G}{\delta {\mathbf A}}|_{{\mathbf A} = {\mathbf A}_0} \right)$. Then we have 
$$\{ \{F, G  \}_l, K \}_l|_{(f, {\mathbf A}_0)} + \text{cyc} = \int f |\nabla h|^2\, \mathrm{d}{\mathbf x}\mathrm{d}{\mathbf p},$$
where the 'cyc' denotes the results via the permutations of $F, G$, and $K$.
The integrand is non-negative when $f$ is non-negative, and thus $\{ \{F, G  \}_l, K \}_l + \text{cyc}$ is not zero at $(f, {\mathbf A}_0)$ for positive $f|\nabla h|^2$. Thus we have shown that the anti-symmetric bracket~\eqref{eq:1} does not satisfy Jacobi identity and is not a Poisson bracket.
\end{proof}

A natural question is under what kind of condition is the Jacobi identity satisfied by the bracket~\eqref{eq:1}.
\begin{theorem}
The bracket~\eqref{eq:1} satisfies the Jacobi identity when the functionals satisfy the condition,
\begin{equation}\label{eq:conditionxpa}
    \nabla \cdot \frac{\delta K}{\delta {\mathbf A}} = - \nabla \cdot \int f \frac{\partial}{\partial {\mathbf p}} \frac{\delta K}{\delta f} \,\mathrm{d}{\mathbf p}, \quad \forall K.
\end{equation}

\end{theorem}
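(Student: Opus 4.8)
The plan is to start from the already-computed expression \eqref{eq:nne} for the iterated bracket $\{\{F,G\}_l, K\}_l$ and to sum over the three cyclic permutations of $(F,G,K)$. Writing the three summands of \eqref{eq:nne}, in order, as $T_1$, $T_2$, $T_3$, the term $T_2 = \int f\,[[\frac{\delta F}{\delta f}, \frac{\delta G}{\delta f}]_{xp}, \frac{\delta K}{\delta f}]_{xp}\,\mathrm{d}{\mathbf x}\mathrm{d}{\mathbf p}$ has a vanishing cyclic sum, since it is precisely the Jacobiator of the canonical Lie--Poisson bracket $[\cdot,\cdot]_{xp}$, which obeys the Jacobi identity \cite{morrisonvm,MW}. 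It therefore remains to show that the cyclic sum of $T_1 + T_3$ vanishes under the hypothesis \eqref{eq:conditionxpa}.

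For the purely magnetic term $T_1 = \int\frac{1}{n}\nabla\times{\mathbf A}\cdot(\nabla\times(\frac{1}{n}\frac{\delta F}{\delta{\mathbf A}}\times\frac{\delta G}{\delta{\mathbf A}})\times\frac{\delta K}{\delta{\mathbf A}})\,\mathrm{d}{\mathbf x}$, I would invoke Lemma 4.1 of \cite{Abdelhamid} exactly as in the proof of Theorem~\ref{thm:xpa}, but now with $\rho = n$ in place of $n_e$, which is legitimate because the lemma is an algebraic identity valid for any positive scalar density. Abbreviating $h_{FG} := \frac{1}{n^2}\nabla\times{\mathbf A}\cdot(\frac{\delta F}{\delta{\mathbf A}}\times\frac{\delta G}{\delta{\mathbf A}})$, a function of ${\mathbf x}$ alone, the lemma gives $\sum_{\mathrm{cyc}} T_1 = \sum_{\mathrm{cyc}}\int\frac{\delta K}{\delta{\mathbf A}}\cdot\frac{\partial}{\partial{\mathbf x}}h_{FG}\,\mathrm{d}{\mathbf x}$.

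For the mixed term $T_3 = \int f\,[h_{FG}, \frac{\delta K}{\delta f}]_{xp}\,\mathrm{d}{\mathbf x}\mathrm{d}{\mathbf p}$, the key simplification is that $h_{FG}$ depends only on ${\mathbf x}$, so $\partial h_{FG}/\partial{\mathbf p} = 0$ and the canonical bracket collapses to $[h_{FG}, \frac{\delta K}{\delta f}]_{xp} = \frac{\partial h_{FG}}{\partial{\mathbf x}}\cdot\frac{\partial}{\partial{\mathbf p}}\frac{\delta K}{\delta f}$. Performing the ${\mathbf p}$-integration yields $T_3 = \int\frac{\partial h_{FG}}{\partial{\mathbf x}}\cdot(\int f\frac{\partial}{\partial{\mathbf p}}\frac{\delta K}{\delta f}\,\mathrm{d}{\mathbf p})\,\mathrm{d}{\mathbf x}$. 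Adding the rewritten $T_1$ and integrating by parts in ${\mathbf x}$, the cyclic sum becomes $\sum_{\mathrm{cyc}}(T_1+T_3) = -\sum_{\mathrm{cyc}}\int h_{FG}\,\nabla\cdot(\frac{\delta K}{\delta{\mathbf A}} + \int f\frac{\partial}{\partial{\mathbf p}}\frac{\delta K}{\delta f}\,\mathrm{d}{\mathbf p})\,\mathrm{d}{\mathbf x}$. Each summand vanishes identically by the hypothesis \eqref{eq:conditionxpa} applied to the functional occupying the $K$-slot; since the cyclic sum places each of $F,G,K$ in that slot once, the condition is indeed required for all three functionals, as stated. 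Together with $\sum_{\mathrm{cyc}} T_2 = 0$, this establishes the Jacobi identity.

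I expect the main obstacle to be the bookkeeping in the two rewritings of $T_1$ and $T_3$: verifying that Lemma 4.1 of \cite{Abdelhamid} applies verbatim with $n$ substituted for $n_e$, so that $\sum_{\mathrm{cyc}} T_1$ is converted into a term built from $\nabla\cdot\frac{\delta K}{\delta{\mathbf A}}$, and tracking the integration-by-parts so that $T_1$ and $T_3$ combine under a common factor $h_{FG}$ with the correct functional in the $K$-slot. Once these two terms are aligned under the shared factor $h_{FG}$, the condition \eqref{eq:conditionxpa} produces the cancellation summand by summand.
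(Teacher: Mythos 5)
Your proof is correct and follows essentially the same route as the paper: the paper's own argument is a terse comparison of \eqref{eq:nne} with \eqref{eq:ne} (setting $n_e = n$), which, once unwound, consists of exactly your three steps --- the Lie--Poisson Jacobiator handling $T_2$, Lemma 4.1 of \cite{Abdelhamid} (valid for any scalar density, so $\rho = n$ is legitimate) handling $T_1$, and the condition \eqref{eq:conditionxpa} applied after collapsing and integrating by parts the mixed term $T_3$. Your write-up simply makes explicit the term-by-term matching that the paper leaves implicit.
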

\begin{proof}
By comparing~\eqref{eq:ne} with~\eqref{eq:nne}, we know that when the condition~\eqref{eq:conditionxpa} is satisfied, we have 
$$
 \{ \{F, G  \}_l, K \}_l(f({\mathbf x}, {\mathbf p}), {\mathbf A}) =
 \{ \{F, G  \}, K \}(f({\mathbf x}, {\mathbf p}), {\mathbf A}, n_e) = 0
$$
where the '$=$' holds in the sense that $n_e = n$ and we ignore the second order functional derivatives when calculating the functional derivatives of the brackets~\eqref{eq:xpanebrackethelp} and~\eqref{eq:1}.  Then this theorem is proved.
\end{proof}

\begin{remark}
The condition~\eqref{eq:conditionxpa} is satisfied by the Hamiltonian of the hybrid model
\begin{equation*}
H(f({\mathbf x}, {\mathbf p}),{\mathbf A}) = \frac{1}{2}\int |{\mathbf p} - {\mathbf A}|^2 f\, \mathrm{d}{\mathbf p} \mathrm{d}{\mathbf x} + \frac{1}{2} \int |\nabla \times {\mathbf A}|^2 \mathrm{d}{\mathbf x} + T_e \int n \ln n\, \mathrm{d}{\mathbf x}.
\end{equation*}
\end{remark}

We here define the following set of functionals satisfying the condition~\eqref{eq:conditionxpa}, 
\begin{equation}\label{eq:classxpa}
\mathcal{A} = \{ F(f({\mathbf x}, \, {\mathbf p}), \, {\mathbf A}) =  \tilde{F}(\tilde{f}({\mathbf x}, \, {\mathbf v}), \, {\mathbf B})| {\mathbf B} = \nabla \times {\mathbf A}, \, {\mathbf v} = {\mathbf p} - {\mathbf A}, \, f({\mathbf x}, {\mathbf p}) := \tilde{f}({\mathbf x}, {\mathbf v}) = \tilde{f}({\mathbf x}, {\mathbf p} - {\mathbf A}) \},
\end{equation}
where the time variable is omitted for convenience. This set $\mathcal{A}$ includes  functionals determined by all smooth functionals about $\tilde{f}({\mathbf x}, \, {\mathbf v}), \, {\mathbf B}$ with $\nabla \cdot {\mathbf B} = 0$.
\begin{remark}
The bracket~\eqref{eq:1} can be obtained by the following transformation from the bracket~\eqref{eq:xpanebrackethelp},
$$
\left(f({\mathbf x}, {\mathbf p}),\, {\mathbf A}, \,n_e \right) \rightarrow \left(f({\mathbf x}, {\mathbf p}),\, {\mathbf A},\, C = n_e - \int f \, \mathrm{d}{\mathbf p} \right).
$$
Any functional $\tilde{F}$ of $ \left(f({\mathbf x}, {\mathbf p}), {\mathbf A}, C \right)$ can be regarded as a functional $F$ of $\left(f({\mathbf x}, {\mathbf p}), {\mathbf A}, n_e \right)$ via 
$F\left(f({\mathbf x}, {\mathbf p}), {\mathbf A}, n_e \right) = \tilde{F}\left(f({\mathbf x}, {\mathbf p}), {\mathbf A}, C\right)$.
By the chain rules of the functional derivatives we have 
$$
\frac{\delta F}{\delta f} = \frac{\delta \tilde{F}}{\delta f} - \frac{\delta \tilde{F}}{\delta C},\quad  \frac{\delta F}{\delta {\mathbf A}} = \frac{\delta \tilde{F}}{\delta {\mathbf A}}, \quad \frac{\delta F}{\delta C} =  \frac{\delta \tilde{F}}{\delta C}. 
$$
By plugging the above relations into bracket~\eqref{eq:xpanebrackethelp}, we get
\begin{equation}
\label{eq:transxpa}
    \begin{aligned}
 \{\tilde{F}, & \tilde{G}  \}(f({\mathbf x}, {\mathbf p}), {\mathbf A}, C)  =  \int f \left[ \frac{\delta \tilde{F}}{\delta f}, \frac{\delta \tilde{G}}{\delta f} \right]_{xp}  \mathrm{d}{\mathbf x}\mathrm{d}{\mathbf p} - \int \frac{1}{C + \int f \, \mathrm{d}{\mathbf p}} \nabla \times {\mathbf A} \cdot \left( \frac{\delta \tilde{F}}{\delta {\mathbf A}} \times  \frac{\delta \tilde{G}}{\delta {\mathbf A}} \right)  \mathrm{d}{\mathbf x} \\
 & + \int \frac{\delta \tilde{F}}{\delta {\mathbf A}} \cdot \frac{\partial }{\partial {\mathbf x}} \frac{\delta \tilde{G}}{\delta C} -  \frac{\delta \tilde{G}}{\delta {\mathbf A}} \cdot \frac{\partial }{\partial {\mathbf x}}\frac{\delta \tilde{F}}{\delta C} \, \mathrm{d}{\mathbf x} + \int f \left( -\frac{\partial}{\partial {\mathbf x}}\frac{\delta \tilde{F}}{\delta C} \cdot \frac{\partial}{\partial {\mathbf p}}\frac{\delta \tilde{G}}{\delta f} + \frac{\partial}{\partial {\mathbf x}}\frac{\delta \tilde{G}}{\delta C} \cdot \frac{\partial}{\partial {\mathbf p}}\frac{\delta \tilde{F}}{\delta f}  \right) \mathrm{d}{\mathbf p} \mathrm{d}{\mathbf x}.
        \end{aligned}
\end{equation}
The sum of the last two terms is zero when $\tilde{F}$ and $\tilde{G}$ are in $\mathcal{A}$~\eqref{eq:classxpa}. Then for functionals $\tilde{F}$ and $\tilde{G}$ in $\mathcal{A}$, bracket~\eqref{eq:transxpa} becomes
$$
\{\tilde{F}, \tilde{G}  \}(f({\mathbf x}, {\mathbf p}), {\mathbf A}, C)  =  \int f \left[ \frac{\delta \tilde{F}}{\delta f}, \frac{\delta \tilde{G}}{\delta f} \right]_{xp}  \mathrm{d}{\mathbf x}\mathrm{d}{\mathbf p} - \int \frac{1}{C + \int f \, \mathrm{d}{\mathbf v}} \nabla \times {\mathbf A} \cdot \left( \frac{\delta \tilde{F}}{\delta {\mathbf A}} \times  \frac{\delta \tilde{G}}{\delta {\mathbf A}} \right)  \mathrm{d}{\mathbf x},
$$
which is the bracket~\eqref{eq:1} when $C = 0$, no functional derivative about $C$ is conducted, and $C$ can be regarded as a parameter function.
\end{remark}

We still need to answer that if $\{F, G\}_l \in \mathcal{A}$ for $\forall F, G \in \mathcal{A}$, before we can say that the bracket~\eqref{eq:1} is a Poisson bracket for the functionals in $\mathcal{A}$. The answer is 'yes',  which shall be proved in Theorem~\ref{thm:relation} after the magnetic field based formulation, i.e., the xvB formulation,  is investigated.

\noindent{\bf Coulomb gauge}
When Coulomb gauge is used, i.e., $\nabla \cdot {\mathbf A} = 0$, we let $\psi_A = - \Delta^{-1}(\nabla \cdot {\mathbf A})$, then the map $\left(f({\mathbf x}, {\mathbf p}), {\mathbf A}\right) \rightarrow \left(f({\mathbf x}, {\mathbf p} - \nabla \psi_A), {\mathbf A} + \nabla \psi_A\right)$~\cite{MW} transforms the bracket~\eqref{eq:1} as
\begin{equation*}
    \begin{aligned}
         \{F, G  \}^{\text{Coulomb}}_A(f({\mathbf x}, {\mathbf p}, t), {\mathbf A})  =  \int f \left[ \frac{\delta F}{\delta f}, \frac{\delta G}{\delta f} \right]_{xp}  \mathrm{d}{\mathbf x}\mathrm{d}{\mathbf p} - \int \frac{1}{n} \nabla \times {\mathbf A} \cdot \left({\mathbf M}\times {\mathbf N} \right) \mathrm{d}{\mathbf x},
    \end{aligned}
\end{equation*}
where $${\mathbf M} =  \frac{\delta F}{\delta {\mathbf A}} + \nabla \Delta^{-1} \nabla \cdot \frac{\delta F}{\delta {\mathbf A}} + \nabla \Delta^{-1} \nabla \cdot 
 \int f \frac{\partial}{\partial {\mathbf p}}\frac{\delta F}{\delta f} \, \mathrm{d}{\mathbf p},$$
 $${\mathbf N} = \frac{\delta G}{\delta {\mathbf A}} + \nabla \Delta^{-1} \nabla \cdot \frac{\delta G}{\delta {\mathbf A}} + \nabla \Delta^{-1} \nabla \cdot  
 \int f \frac{\partial}{\partial {\mathbf p}}\frac{\delta G}{\delta f} \, \mathrm{d}{\mathbf p}. $$

When the magnetic field ${\mathbf B}$ is adopted, we have the following bracket by ${\mathbf B} = \nabla \times {\mathbf A}$ and $\frac{\delta F}{\delta {\mathbf A}} = \nabla \times \frac{\delta F}{\delta {\mathbf B}}$,
\begin{equation*}
    \begin{aligned}
         \{F, G  \}^{\text{Coulomb}}_{B}(f({\mathbf x}, {\mathbf p}, t), {\mathbf B})  =  \int f \left[ \frac{\delta F}{\delta f}, \frac{\delta G}{\delta f} \right]_{xp}  \mathrm{d}{\mathbf x}\mathrm{d}{\mathbf p} - \int \frac{1}{n} {\mathbf B} \cdot \left(\tilde{\mathbf M}\times \tilde{\mathbf N} \right) \mathrm{d}{\mathbf x},
    \end{aligned}
\end{equation*}
where $$\tilde{\mathbf M} =  \nabla \times \frac{\delta F}{\delta {\mathbf B}} + \nabla \Delta^{-1} \nabla \cdot 
 \int f \frac{\partial}{\partial {\mathbf p}}\frac{\delta F}{\delta f} \, \mathrm{d}{\mathbf p}, \quad \tilde{\mathbf N} = \nabla \times \frac{\delta G}{\delta {\mathbf B}} + \nabla \Delta^{-1} \nabla \cdot  
 \int f \frac{\partial}{\partial {\mathbf p}}\frac{\delta G}{\delta f} \, \mathrm{d}{\mathbf p}. $$

\subsection{xvA formulation}\label{sec:xva}
Next we consider the xvA formulation~\cite{Tronci,chacon,LHPS2}, for which the Poisson bracket is given in the formula (76) in~\cite{Tronci} as,
\begin{equation}
\label{eq:fxvane}
    \begin{aligned}
        &\{ F, G\}(f({\mathbf x}, {\mathbf v}), {\mathbf A}, n_e)  = \int f \left[ \frac{\delta F}{\delta f}, \frac{\delta G}{\delta f}  \right]_{xv}\mathrm{d}{\mathbf x} \mathrm{d}{\mathbf v} + \int f \nabla \times {\mathbf A} \cdot \left( \frac{\partial}{\partial {\mathbf v}} \frac{\delta F}{\delta f} \times  \frac{\partial}{\partial {\mathbf v}} \frac{\delta G}{\delta f}   \right) \mathrm{d}{\mathbf x} \mathrm{d}{\mathbf v}\\
        & - \int \frac{1}{n_e} \nabla \times {\mathbf A} \cdot \left( \frac{\delta F}{\delta {\mathbf A}} \times  \frac{\delta G}{\delta {\mathbf A}}  \right)  \mathrm{d}{\mathbf x} - \int \frac{f}{n_e} \nabla \times {\mathbf A} \cdot \left( \frac{\partial }{\partial {\mathbf v}} \frac{\delta G}{\delta f} \times \frac{\delta F}{\delta {\mathbf A}} -  \frac{\partial }{\partial {\mathbf v}} \frac{\delta F}{\delta f} \times \frac{\delta G}{\delta {\mathbf A}}  \right) \mathrm{d}{\mathbf x} \mathrm{d}{\mathbf v}\\
        & - \int \frac{1}{n_e} \nabla \times {\mathbf A} \cdot \left( \int f \frac{\partial }{\partial {\mathbf v}} \frac{\delta F}{\delta f} \mathrm{d}{\mathbf v} \times \int f \frac{\partial }{\partial {\mathbf v}} \frac{\delta G}{\delta f} \mathrm{d}{\mathbf v} 
         \right)  \mathrm{d} {\mathbf x} + \int \frac{\delta F}{\delta {\mathbf A}} \cdot  \frac{\partial}{\partial {\mathbf x}}\frac{\delta G}{\delta n_e} 
         - \frac{\delta G}{\delta {\mathbf A}} \cdot  \frac{\partial}{\partial {\mathbf x}}\frac{\delta F}{\delta n_e} \mathrm{d}{\mathbf x} \\
         & + \int f \left(-\frac{\partial}{\partial {\mathbf v}} 
         \frac{\delta F}{\delta f} \cdot  \frac{\partial}{\partial {\mathbf x}} \frac{\delta G}{\delta n_e} + \frac{\partial}{\partial {\mathbf v}} 
         \frac{\delta G}{\delta f} \cdot  \frac{\partial}{\partial {\mathbf x}} \frac{\delta F}{\delta n_e} \right)\mathrm{d}{\mathbf x}  \mathrm{d}{\mathbf v} = \{F,G \}_0 + \{F,G \}_{n_e},
    \end{aligned}
\end{equation}
where we denote the sum of the first six terms in the above bracket as $\{F,G \}_0$, and we denote the last term as $\{F,G \}_{n_e}$.
In order to verify the Jacobi identity, as above we firstly calculate the functional derivatives of the $\{F, G\}$ (omit the second order derivatives by the bracket theorem in~\cite{lifting}),
then we have 
\begin{equation}
\label{eq:xvanefgh}
    \begin{aligned}
        \{\{ F, G\},K\} & =  - \int \frac{1}{n_e} \nabla \times {\mathbf A} \cdot \left( \frac{\delta \{ F, G\}}{\delta {\mathbf A}} \times  \frac{\delta K}{\delta {\mathbf A}}  \right)  \mathrm{d}{\mathbf x}\\
        &+ \int f \left[ \frac{\delta \{ F, G\}_0}{\delta f}, \frac{\delta K}{\delta f}  \right]_{xv}  \mathrm{d}{\mathbf x} \mathrm{d}{\mathbf v} + \int f \nabla \times {\mathbf A} \cdot \left( \frac{\partial}{\partial {\mathbf v}} \frac{\delta \{ F, G\}_0}{\delta f}\times  \frac{\partial}{\partial {\mathbf v}} \frac{\delta K}{\delta f}   \right) \mathrm{d}{\mathbf x} \mathrm{d}{\mathbf v}\\
        & - \int \frac{f}{n_e} \nabla \times {\mathbf A} \cdot \left( \frac{\partial }{\partial {\mathbf v}} \frac{\delta K}{\delta f} \times \frac{\delta \{ F, G\}}{\delta {\mathbf A}} -  \frac{\partial }{\partial {\mathbf v}} \frac{\delta \{ F, G\}_0}{\delta f} \times \frac{\delta K}{\delta {\mathbf A}}  \right) \mathrm{d}{\mathbf x} \mathrm{d}{\mathbf v}\\
        & - \int \frac{1}{n_e} \nabla \times {\mathbf A} \cdot \left( \int f \frac{\partial }{\partial {\mathbf v}} \frac{\delta \{ F, G\}_0}{\delta f} \mathrm{d}{\mathbf v} \times \int f \frac{\partial }{\partial {\mathbf v}} \frac{\delta K}{\delta f} \mathrm{d}{\mathbf v} 
         \right)  \mathrm{d} {\mathbf x}\\
         & + \int f \left(\frac{\partial}{\partial {\mathbf v}} 
         \frac{\delta K}{\delta f}  \cdot \frac{\partial}{\partial {\mathbf x}} \frac{\delta \{ F, G\}}{\delta n_e} \right)\mathrm{d}{\mathbf x}  \mathrm{d}{\mathbf v} - \underbrace{\int \frac{\delta K}{\delta {\mathbf A}} \cdot  \frac{\partial}{\partial {\mathbf x}}\frac{\delta \{ F, G\}}{\delta n_e} \mathrm{d}{\mathbf x}}_{T^1_{n_e}} + T_{n_e}^2,
    \end{aligned}
\end{equation}
where all the other terms related with the functional derivative about $n_e$ are included in $T^2_{n_e}$, 
\begin{equation*}
    \begin{aligned}
         T_{n_e}^2 & = \underbrace{\int \frac{\delta \{ F, G\}}{\delta {\mathbf A}} \cdot  \frac{\partial}{\partial {\mathbf x}}\frac{\delta K}{\delta n_e}  \mathrm{d}{\mathbf x}}_{=0, \ \text{via integration by parts}} + \int f \left(-\frac{\partial}{\partial {\mathbf v}} 
          \frac{\delta \{ F, G\}}{\delta f} \cdot \frac{\partial}{\partial {\mathbf x}} \frac{\delta K}{\delta n_e} \right)\mathrm{d}{\mathbf x}  \mathrm{d}{\mathbf v}\\
         & + \int f \left[ \frac{\delta \{ F, G\}_{n_e}}{\delta f}, \frac{\delta K}{\delta f}  \right]_{xv} \mathrm{d}{\mathbf x} \mathrm{d}{\mathbf v}  + \int f \nabla \times {\mathbf A} \cdot \left( \frac{\partial}{\partial {\mathbf v}} \frac{\delta \{ F, G\}_{n_e}}{\delta f}\right) \times  \frac{\partial}{\partial {\mathbf v}} \frac{\delta K}{\delta f}\,  \mathrm{d}{\mathbf x} \mathrm{d}{\mathbf v}\\
         & + \int \frac{f}{n_e} \nabla \times {\mathbf A} \cdot \left(  \frac{\partial }{\partial {\mathbf v}} \left( \frac{\delta \{ F, G\}_{n_e}}{\delta f}\right) \times \frac{\delta K}{\delta {\mathbf A}}  \right) \mathrm{d}{\mathbf x} \mathrm{d}{\mathbf v} \\
         & - \int \frac{1}{n_e} \nabla \times {\mathbf A} \cdot \left( \int f \frac{\partial }{\partial {\mathbf v}} \frac{\delta \{ F, G\}_{n_e}}{\delta f} \mathrm{d}{\mathbf v} \times \int f \frac{\partial }{\partial {\mathbf v}} \frac{\delta K}{\delta f} \mathrm{d}{\mathbf v} 
         \right)  \mathrm{d} {\mathbf x},
    \end{aligned}
\end{equation*}
which is compared with the results presented below for determining the condition of the Jacobi identity of the simplified bracket given 
in the appendix of~\cite{LHPS2}.
\begin{remark}
The quasi-neutrality condition $n_e = \int f\, \mathrm{d}{\mathbf v}$ corresponds a class of Casimir functional of the bracket~\eqref{eq:fxvane}, i.e., for any given ${\mathbf x}_0$,
$$
C(f, n_e, {\mathbf A}) = \int \delta({\mathbf x} - {\mathbf x}_0) \left( n_e - \int f \, \mathrm{d}{\mathbf v} \right) \mathrm{d}{\mathbf x}.
$$
\end{remark}

\noindent{\bf Simplified bracket} The simplified bracket given in~\cite{LHPS2} is obtained via removing the terms involving the functional derivatives about $n_e$ and replacing $n_e$ with $\int f \,\mathrm{d}{\mathbf v}$,
\begin{equation}
\label{eq:XVAbracket}
    \begin{aligned}
        & \{ F, G\}_l(f({\mathbf x}, {\mathbf v}), {\mathbf A})  = \int f \left[ \frac{\delta F}{\delta f}, \frac{\delta G}{\delta f}  \right]_{xv} \mathrm{d}{\mathbf x} \mathrm{d}{\mathbf v} + \int f \nabla \times {\mathbf A} \cdot \left( \frac{\partial}{\partial {\mathbf v}} \frac{\delta F}{\delta f} \times  \frac{\partial}{\partial {\mathbf v}} \frac{\delta G}{\delta f}   \right) \mathrm{d}{\mathbf x} \mathrm{d}{\mathbf v}\\
        & - \int \frac{1}{n} \nabla \times {\mathbf A} \cdot \left( \frac{\delta F}{\delta {\mathbf A}} \times  \frac{\delta G}{\delta {\mathbf A}}  \right)  \mathrm{d}{\mathbf x}- \int \frac{f}{n} \nabla \times {\mathbf A} \cdot \left( \frac{\partial }{\partial {\mathbf v}} \frac{\delta G}{\delta f} \times \frac{\delta F}{\delta {\mathbf A}} -  \frac{\partial }{\partial {\mathbf v}} \frac{\delta F}{\delta f} \times \frac{\delta G}{\delta {\mathbf A}}  \right) \mathrm{d}{\mathbf x} \mathrm{d}{\mathbf v}\\
        & - \int \frac{1}{n} \nabla \times {\mathbf A} \cdot \left( \int f \frac{\partial }{\partial {\mathbf v}} \frac{\delta F}{\delta f} \mathrm{d}{\mathbf v} \times \int f \frac{\partial }{\partial {\mathbf v}} \frac{\delta G}{\delta f} \mathrm{d}{\mathbf v} 
         \right)  \mathrm{d} {\mathbf x}.
    \end{aligned}
\end{equation}
In the following theorem, we give the condition of the Jacobi identity of the bracket~\eqref{eq:XVAbracket}.
\begin{theorem}
The bracket~\eqref{eq:XVAbracket} satisfies the Jacobi identity when the functionals satisfy the condition,  $$\nabla \cdot \frac{\delta K}{\delta {\mathbf A}} = 0, \quad \forall K.$$
\end{theorem}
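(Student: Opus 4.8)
The plan is to reuse the template of the xpA argument and compare the Jacobiator of the simplified bracket \eqref{eq:XVAbracket} against that of the full bracket \eqref{eq:fxvane}, which is a genuine Poisson bracket \cite{Tronci} and whose Jacobiator therefore vanishes identically. The full double bracket has already been organized in \eqref{eq:xvanefgh}, with every contribution carrying a functional derivative with respect to $n_e$ collected into $T^1_{n_e}$ and $T^2_{n_e}$. The structural fact I would exploit throughout is that $\frac{\delta\{F,G\}}{\delta n_e}$ is a function of ${\mathbf x}$ alone: it is built entirely from the $1/n_e^2$ terms, in which the velocity has been integrated out.

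First I would record the functional derivatives of \eqref{eq:XVAbracket} (ignoring second-order derivatives, by the bracket theorem in \cite{lifting}). Since the factors $1/n$ now depend on $f$ through $n=\int f\,\mathrm{d}{\mathbf v}$, differentiation in $f$ produces, besides the explicit-$f$ pieces that reproduce $\frac{\delta\{F,G\}}{\delta f}$ of the full bracket at $n_e=n$, the extra ${\mathbf v}$-independent contribution $\frac{\delta\{F,G\}}{\delta n_e}\big|_{n_e=n}$, because $\delta n/\delta f=1$ pointwise in ${\mathbf x}$. By contrast $\frac{\delta\{F,G\}_l}{\delta{\mathbf A}}$ coincides with $\frac{\delta\{F,G\}}{\delta{\mathbf A}}\big|_{n_e=n}$, as the factors $1/n$ are independent of ${\mathbf A}$. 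Moreover, for functionals of $(f,{\mathbf A})$ alone the $n_e$-part $\{F,G\}_{n_e}$ vanishes, so the entire block $T^2_{n_e}$ drops out of \eqref{eq:xvanefgh}.

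The heart of the argument is to track where the extra ${\mathbf v}$-independent piece enters the simplified Jacobiator. It appears through the slot $\frac{\delta\{F,G\}_l}{\delta f}$ in each term of \eqref{eq:XVAbracket}. In every term that differentiates this slot by $\partial_{\mathbf v}$ — the two magnetization-type terms and the current-coupling term — the piece is annihilated, since $\partial_{\mathbf v}$ of a ${\mathbf v}$-independent quantity is zero. The only survivor comes from the canonical term $\int f\big[\frac{\delta\{F,G\}_l}{\delta f},\frac{\delta K}{\delta f}\big]_{xv}\,\mathrm{d}{\mathbf x}\,\mathrm{d}{\mathbf v}$, whose $xv$-bracket then reduces to $\int f\,\frac{\partial}{\partial{\mathbf x}}\frac{\delta\{F,G\}}{\delta n_e}\cdot\frac{\partial}{\partial{\mathbf v}}\frac{\delta K}{\delta f}\,\mathrm{d}{\mathbf x}\,\mathrm{d}{\mathbf v}$. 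This coincides with the single $n_e$-derivative term $\int f\,\frac{\partial}{\partial{\mathbf v}}\frac{\delta K}{\delta f}\cdot\frac{\partial}{\partial{\mathbf x}}\frac{\delta\{F,G\}}{\delta n_e}\,\mathrm{d}{\mathbf x}\,\mathrm{d}{\mathbf v}$ retained in \eqref{eq:xvanefgh}, so the two cancel when the simplified and full Jacobiators are subtracted, leaving the clean identity $\{\{F,G\}_l,K\}_l = \{\{F,G\},K\}\big|_{n_e=n} + T^1_{n_e}$.

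Finally I would sum over the cyclic permutations of $F,G,K$. The full Jacobiator $\sum_{\mathrm{cyc}}\{\{F,G\},K\}$ vanishes because \eqref{eq:fxvane} is a Poisson bracket, so only $\sum_{\mathrm{cyc}}T^1_{n_e}$ remains, where $T^1_{n_e}=\int\frac{\delta K}{\delta{\mathbf A}}\cdot\frac{\partial}{\partial{\mathbf x}}\frac{\delta\{F,G\}}{\delta n_e}\,\mathrm{d}{\mathbf x}$. Integrating by parts in ${\mathbf x}$ rewrites each such term as $-\int\big(\nabla\cdot\frac{\delta K}{\delta{\mathbf A}}\big)\,\frac{\delta\{F,G\}}{\delta n_e}\,\mathrm{d}{\mathbf x}$, which is zero precisely under the hypothesis $\nabla\cdot\frac{\delta K}{\delta{\mathbf A}}=0$; hence the simplified Jacobiator vanishes. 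The main obstacle I anticipate is the bookkeeping in the third step: one must check carefully that each of the several velocity-gradient terms of \eqref{eq:XVAbracket} is genuinely killed by the ${\mathbf v}$-independence of $\frac{\delta\{F,G\}}{\delta n_e}$, so that exactly one matching term survives — this is where the xvA computation is substantially more delicate than the xpA case.
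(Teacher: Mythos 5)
Your proposal is correct and follows essentially the same route as the paper: it relates the functional derivatives of $\{\cdot,\cdot\}_l$ to those of the full bracket~\eqref{eq:fxvane} via $\frac{\delta\{F,G\}_l}{\delta f}=\frac{\delta\{F,G\}_0}{\delta f}+\frac{\delta\{F,G\}}{\delta n_e}$, uses the ${\mathbf v}$-independence of $\frac{\delta\{F,G\}}{\delta n_e}$ to show the extra piece survives only in the canonical $[\cdot,\cdot]_{xv}$ term where it matches the retained $n_e$-term of~\eqref{eq:xvanefgh}, and then reduces the simplified Jacobiator to $\sum_{\mathrm{cyc}}T^1_{n_e}$, killed by integration by parts under $\nabla\cdot\frac{\delta K}{\delta{\mathbf A}}=0$. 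The only (harmless) difference is that you dispose of $T^2_{n_e}$ by noting it vanishes term by term for functionals independent of $n_e$, a slight sharpening of the paper's statement that its cyclic sum vanishes.
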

\begin{proof}
In order to prove the Jacobi identity, we need to calculate the functional derivatives of the bracket~\eqref{eq:XVAbracket}, 
\begin{equation*}
    \begin{aligned}
        \frac{\delta \{F, G \}_l}{\delta f} = \frac{\delta \{F, G \}_0}{\delta f} + \frac{\delta \{F, G \}}{\delta n_e}, \quad \frac{\delta \{F, G \}_l}{\delta {\mathbf A}} =\frac{\delta \{F, G \}}{\delta {\mathbf A}},
    \end{aligned}
\end{equation*}
where $\{\cdot , \cdot\}$ is the bracket \eqref{eq:fxvane}, the '$=$' holds in the sense that we replace $n_e$ with $\int f\, \mathrm{d}{\mathbf v}$ on the right hand side and ignore the second order functional derivatives when calculating the functional derivatives of $\{F,G\}$ and $\{F,G\}_l$. 
Then we have 
\begin{equation}
\label{eq:xvafgh}
    \begin{aligned}
        \{\{ F, G\}_l,K\}_l & = \int f \left[ \frac{\delta \{F,G\}_0}{\delta f} +  \frac{\delta \{F,G\}}{\delta n_e}, \frac{\delta K}{\delta f}  \right]_{xv} \\
        & + \int f \nabla \times {\mathbf A} \cdot \left( \frac{\partial}{\partial {\mathbf v}} \frac{\delta \{F,G\}_0}{\delta f} \times  \frac{\partial}{\partial {\mathbf v}} \frac{\delta K}{\delta f}   \right) \mathrm{d}{\mathbf x} \mathrm{d}{\mathbf v}\\
        & - \int \frac{1}{n} \nabla \times {\mathbf A} \cdot \left( \frac{\delta \{F,G\}}{\delta {\mathbf A}} \times  \frac{\delta K}{\delta {\mathbf A}}  \right)  \mathrm{d}{\mathbf x}\\
        & - \int \frac{f}{n} \nabla \times {\mathbf A} \cdot \left( \frac{\partial }{\partial {\mathbf v}} \frac{\delta K}{\delta f} \times \frac{\delta \{F,G\}}{\delta {\mathbf A}} -  \frac{\partial }{\partial {\mathbf v}} \frac{\delta \{F,G\}_0}{\delta f} \times \frac{\delta K}{\delta {\mathbf A}}  \right) \mathrm{d}{\mathbf x} \mathrm{d}{\mathbf v}\\
        & - \int \frac{1}{n} \nabla \times {\mathbf A} \cdot \left( \int f \frac{\partial }{\partial {\mathbf v}} \frac{\delta \{F,G\}_0}{\delta f} \mathrm{d}{\mathbf v} \times \int f \frac{\partial }{\partial {\mathbf v}} \frac{\delta K}{\delta f} \mathrm{d}{\mathbf v} 
         \right)  \mathrm{d} {\mathbf x}.
    \end{aligned}
\end{equation}
The sum of the permutation of $T^2_{n_e}$ in~\eqref{eq:xvanefgh} (containing all terms involving the functional derivatives of $n_e$) is zero, and the other terms left in~\eqref{eq:xvanefgh} except $T^1_{n_e}$ are the same as the terms in~\eqref{eq:xvafgh}. Thus, when $$\nabla \cdot \frac{\delta K}{\delta {\mathbf A}} = 0, \quad \forall K, $$  we have $T^1_{n_e} = 0$,  and the bracket~\eqref{eq:XVAbracket} satisfies the Jacobi identity. 
\end{proof}

We can define the following set of functionals satisfying the condition $\nabla \cdot \frac{\delta K}{\delta {\mathbf A}} = 0$, 
\begin{equation}\label{eq:classxva}
\mathcal{B} = \{ F(f({\mathbf x}, \, {\mathbf v}), \, {\mathbf A}) =  \tilde{F}(f({\mathbf x}, \, {\mathbf v}), \, {\mathbf B})| {\mathbf B} = \nabla \times {\mathbf A})
\end{equation}
which includes the functionals determined by all smooth functionals about $f({\mathbf x}, \, {\mathbf v}), \, {\mathbf B}$ with $\nabla \cdot {\mathbf B} = 0$.

Similar to the results of the bracket~\eqref{eq:1}, we also need to answer that if $\{F, G\}_l \in \mathcal{B}$ for $\forall F, G \in \mathcal{B}$, before we can say that the bracket~\eqref{eq:XVAbracket} is a Poisson bracket for the functionals in $\mathcal{B}$. The answer is also 'yes',  which shall be proved in theorem~\ref{thm:relation} after the magnetic field based formulation, i.e., the xvB formulation,  is investigated.

\subsection{xvB formulation}\label{sec:xvB}

Finally, we consider the magnetic field based formulation. 
 When the electron charge density $n_e$ is treated as an independent unknown, there is a Poisson bracket that can be derived  from~\cite{Tronci} as follows
\begin{equation}
\label{eq:xvbnebrackethelp}
    \begin{aligned}
        &\{ F, G\}(f({\mathbf x}, {\mathbf v}), {\mathbf B}, n_e)  = \int f \left[ \frac{\delta F}{\delta f}, \frac{\delta G}{\delta f}  \right]_{xv} \mathrm{d}{\mathbf x} \mathrm{d}{\mathbf v} + \int f  {\mathbf B} \cdot \left( \frac{\partial}{\partial {\mathbf v}} \frac{\delta F}{\delta f} \times  \frac{\partial}{\partial {\mathbf v}} \frac{\delta G}{\delta f}   \right) \mathrm{d}{\mathbf x} \mathrm{d}{\mathbf v}\\
        & - \int \frac{1}{n_e}  {\mathbf B} \cdot \left(\nabla \times  \frac{\delta F}{\delta {\mathbf B}} \times \nabla \times  \frac{\delta G}{\delta {\mathbf B}}  \right)  \mathrm{d}{\mathbf x} - \int \frac{1}{n_e}  {\mathbf B} \cdot \left( \int f \frac{\partial }{\partial {\mathbf v}} \frac{\delta F}{\delta f} \mathrm{d}{\mathbf v} \times \int f \frac{\partial }{\partial {\mathbf v}} \frac{\delta G}{\delta f} \mathrm{d}{\mathbf v} 
         \right)  \mathrm{d} {\mathbf x}\\
        & - \int \frac{f}{n_e}  {\mathbf B} \cdot \left( \frac{\partial }{\partial {\mathbf v}} \frac{\delta G}{\delta f} \times \nabla \times \frac{\delta F}{\delta {\mathbf B}} -  \frac{\partial }{\partial {\mathbf v}} \frac{\delta F}{\delta f} \times \nabla \times \frac{\delta G}{\delta {\mathbf B}}  \right) \mathrm{d}{\mathbf x} \mathrm{d}{\mathbf v}\\
         & + \int f \left(-\frac{\partial}{\partial {\mathbf v}} \frac{\delta F}{\delta f}\cdot\frac{\partial}{\partial {\mathbf x}} \frac{\delta G}{\delta n_e} + \frac{\partial}{\partial {\mathbf v}} \frac{\delta G}{\delta f}\cdot\frac{\partial}{\partial {\mathbf x}} \frac{\delta F}{\delta n_e}  \right) \mathrm{d}{\mathbf x} \mathrm{d}{\mathbf v}.
    \end{aligned}
\end{equation}

\begin{remark}
The quasi-neutrality condition $n_e = \int f\, \mathrm{d}{\mathbf v}$ corresponds a class of Casimir functional of the bracket~\eqref{eq:xvbnebrackethelp}, i.e., for any given ${\mathbf x}_0$,
$$
C(f, n_e, {\mathbf B}) = \int \delta({\mathbf x} - {\mathbf x}_0) \left( n_e - \int f \, \mathrm{d}{\mathbf v} \right) \mathrm{d}{\mathbf x}.
$$
\end{remark}

\noindent{\bf Simplified bracket}
In~\cite{LHPS} we propose the following bracket for the xvB formulation via removing the terms of functional derivatives about $n_e$ and replacing $n_e$ with $n = \int f\, \mathrm{d}{\mathbf v}$,
\begin{equation}
\label{eq:xvBbracket}
    \begin{aligned}
        \{ F, G\}_l(f({\mathbf x}, {\mathbf v}), {\mathbf B}) & = \int f \left[ \frac{\delta F}{\delta f}, \frac{\delta G}{\delta f}  \right]_{xv} \mathrm{d}{\mathbf x} \mathrm{d}{\mathbf v} + \int f  {\mathbf B} \cdot \left( \frac{\partial}{\partial {\mathbf v}} \frac{\delta F}{\delta f} \times  \frac{\partial}{\partial {\mathbf v}} \frac{\delta G}{\delta f}   \right) \mathrm{d}{\mathbf x} \mathrm{d}{\mathbf v}\\
        & - \int \frac{1}{n}  {\mathbf B} \cdot \left(\nabla \times  \frac{\delta F}{\delta {\mathbf B}} \times \nabla \times  \frac{\delta G}{\delta {\mathbf B}}  \right)  \mathrm{d}{\mathbf x}\\
        & - \int \frac{f}{n}  {\mathbf B} \cdot \left( \frac{\partial }{\partial {\mathbf v}} \frac{\delta G}{\delta f} \times \nabla \times \frac{\delta F}{\delta {\mathbf B}} -  \frac{\partial }{\partial {\mathbf v}} \frac{\delta F}{\delta f} \times \nabla \times \frac{\delta G}{\delta {\mathbf B}}  \right) \mathrm{d}{\mathbf x} \mathrm{d}{\mathbf v}\\
        & - \int \frac{1}{n}  {\mathbf B} \cdot \left( \int f \frac{\partial }{\partial {\mathbf v}} \frac{\delta F}{\delta f} \mathrm{d}{\mathbf v} \times \int f \frac{\partial }{\partial {\mathbf v}} \frac{\delta G}{\delta f} \mathrm{d}{\mathbf v} 
         \right)  \mathrm{d} {\mathbf x}.
    \end{aligned}
\end{equation}

\begin{remark}\label{rm:darboux}
    As $n_e = \int f \,\mathrm{d}{\mathbf v}$ corresponds to the Casimir functional of the Poisson bracket~\eqref{eq:xvbnebrackethelp}, we choose the Casimir functional as an unknown to simplify the Poisson bracket as in the  Darboux' theorem. Specifically, we conduct the transformation of unknowns as 
    $$
    \left(f({\mathbf x}, {\mathbf v}), \ {\mathbf B}, \ n_e \right) \rightarrow \left( f({\mathbf x}, {\mathbf v}), \ {\mathbf B}, \ C = n_e - \int f \, \mathrm{d}{\mathbf v} \right).
    $$
    Then any smooth functional $\tilde{F}$ depending on $f, {\mathbf B}, C$ can be considered as a functional $F$ depending on $f, {\mathbf B}, n_e$, and we have the following relations between the functional derivatives,
    $$
    \frac{\delta F}{\delta f} = \frac{\delta \tilde{F}}{\delta f} - \frac{\delta \tilde{F}}{\delta C}, \quad \frac{\delta F}{\delta {\mathbf A}} = \frac{\delta \tilde{F}}{\delta {\mathbf A}}, \quad \frac{\delta F}{\delta n_e} = \frac{\delta \tilde{F}}{\delta C}.
    $$
    By plugging the above relations of the functional derivatives into~\eqref{eq:xvbnebrackethelp}, we get (we omit the tilde for convenience),
    \begin{equation*}
    \begin{aligned}
        &\{ F, G\}(f({\mathbf x}, {\mathbf v}), {\mathbf B}, C)  = \int f \left[ \frac{\delta F}{\delta f}, \frac{\delta G}{\delta f}  \right]_{xv}\mathrm{d}{\mathbf x} \mathrm{d}{\mathbf v} - \int f\left( \frac{\partial}{\partial {\mathbf x}} \frac{\delta F}{\delta C} \cdot \frac{\partial}{\partial {\mathbf v}} \frac{\delta G}{\delta f}\right) \mathrm{d}{\mathbf x} \mathrm{d}{\mathbf v} \\
        & + \int f\left( \frac{\partial}{\partial {\mathbf x}} \frac{\delta G}{\delta C} \cdot \frac{\partial}{\partial {\mathbf v}} \frac{\delta F}{\delta f}\right) \mathrm{d}{\mathbf x} \mathrm{d}{\mathbf v}  + \int f  {\mathbf B} \cdot \left( \frac{\partial}{\partial {\mathbf v}} \frac{\delta F}{\delta f} \times  \frac{\partial}{\partial {\mathbf v}} \frac{\delta G}{\delta f}   \right) \mathrm{d}{\mathbf x} \mathrm{d}{\mathbf v}\\
        & - \int \frac{1}{C + \int f \, \mathrm{d}{\mathbf v}}  {\mathbf B} \cdot \left(\nabla \times  \frac{\delta F}{\delta {\mathbf B}} \times \nabla \times  \frac{\delta G}{\delta {\mathbf B}}  \right)  \mathrm{d}{\mathbf x} \\
        & - \int \frac{1}{C + \int f \, \mathrm{d}{\mathbf v}}  {\mathbf B} \cdot \left( \int f \frac{\partial }{\partial {\mathbf v}} \frac{\delta F}{\delta f} \mathrm{d}{\mathbf v} \times \int f \frac{\partial }{\partial {\mathbf v}} \frac{\delta G}{\delta f} \mathrm{d}{\mathbf v} 
         \right)  \mathrm{d} {\mathbf x}\\
        & - \int \frac{f}{C + \int f \, \mathrm{d}{\mathbf v}}  {\mathbf B} \cdot \left( \frac{\partial }{\partial {\mathbf v}} \frac{\delta G}{\delta f} \times \nabla \times \frac{\delta F}{\delta {\mathbf B}} -  \frac{\partial }{\partial {\mathbf v}} \frac{\delta F}{\delta f} \times \nabla \times \frac{\delta G}{\delta {\mathbf B}}  \right) \mathrm{d}{\mathbf x} \mathrm{d}{\mathbf v}\\
         & + \int f \left(-\frac{\partial}{\partial {\mathbf v}} \frac{\delta F}{\delta f}\cdot\frac{\partial}{\partial {\mathbf x}} \frac{\delta G}{\delta C} + \frac{\partial}{\partial {\mathbf v}} \frac{\delta G}{\delta f}\cdot\frac{\partial}{\partial {\mathbf x}} \frac{\delta F}{\delta C}  \right) \mathrm{d}{\mathbf x} \mathrm{d}{\mathbf v},
    \end{aligned}
\end{equation*}
where the sum of the second and third terms cancel with the last term. In the final result there is no functional derivative about $C$ and $C$ can be considered as a parameter. Note that when $C = 0$ the result is just the bracket~\eqref{eq:xvBbracket}.
    
\end{remark}

Next we give a direct proof of the Jacobi identity of the bracket~\eqref{eq:xvBbracket} under the condition that $\nabla \cdot {\mathbf B} = 0$.

\begin{theorem}\label{theorem:bracket}
    The bracket~\eqref{eq:xvBbracket} satisfies the Jacobi identity under the condition that $\nabla \cdot {\mathbf B} = 0$.
\end{theorem}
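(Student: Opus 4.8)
The plan is to verify the Jacobi identity by the same functional-derivative method used in the proof of Theorem~\ref{thm:xpa}: by the bracket theorem of~\cite{lifting}, to assemble $\{\{F,G\}_l,K\}_l$ I only need $\delta\{F,G\}_l/\delta f$ and $\delta\{F,G\}_l/\delta{\mathbf B}$ with all second-order functional derivatives discarded. The one point requiring care is that, in computing $\delta\{F,G\}_l/\delta f$, one differentiates not only the explicit factors of $f$ but also the weight $1/n$ with $n=\int f\,\mathrm{d}{\mathbf v}$; each such differentiation yields a $1/n^2$ contribution. These are exactly the terms that, in the extended bracket~\eqref{eq:xvbnebrackethelp}, originated from $\delta\{F,G\}/\delta n_e$, so they are the analogue of the $n_e$-derivative bookkeeping ($T^1_{n_e}$, $T^2_{n_e}$) encountered in the xvA formulation.

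A useful structural observation is that the three $1/n$-weighted field-coupling terms of~\eqref{eq:xvBbracket} collapse, after the ${\mathbf v}$-integration, into a single Lie--Poisson-type expression
$$
-\int \frac{1}{n}\,{\mathbf B}\cdot\big({\mathbf w}_F\times{\mathbf w}_G\big)\,\mathrm{d}{\mathbf x},\qquad
{\mathbf w}_F:=\nabla\times\frac{\delta F}{\delta{\mathbf B}}-\int f\,\frac{\partial}{\partial{\mathbf v}}\frac{\delta F}{\delta f}\,\mathrm{d}{\mathbf v},
$$
since the cross products recombine as $\mathbf C_F\times\mathbf C_G-\mathbf C_F\times\mathbf j_G-\mathbf j_F\times\mathbf C_G+\mathbf j_F\times\mathbf j_G=(\mathbf C_F-\mathbf j_F)\times(\mathbf C_G-\mathbf j_G)$ with $\mathbf C_F=\nabla\times\tfrac{\delta F}{\delta{\mathbf B}}$ and $\mathbf j_F=\int f\,\partial_{\mathbf v}\tfrac{\delta F}{\delta f}\,\mathrm{d}{\mathbf v}$. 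Meanwhile the first two terms assemble the magnetized phase-space bracket $\{g,h\}_{\mathbf B}:=[g,h]_{xv}+{\mathbf B}\cdot(\partial_{\mathbf v}g\times\partial_{\mathbf v}h)$. This exhibits~\eqref{eq:xvBbracket} as a magnetized-Vlasov sector coupled to a field sector whose structure operator is ${\mathbf B}/n$, directly paralleling the $\tfrac{1}{n_e}\nabla\times{\mathbf A}\cdot(\,\cdot\times\cdot\,)$ term handled in Theorem~\ref{thm:xpa}.

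First I would insert the functional derivatives into~\eqref{eq:xvBbracket}, expand $\{\{F,G\}_l,K\}_l$ fully, and sort the terms into three families: (i) pure phase-space terms built only from $\{\cdot,\cdot\}_{\mathbf B}$; (ii) pure field-sector terms built from the ${\mathbf B}/n$-weighted products of the ${\mathbf w}$'s; and (iii) cross terms in which a velocity derivative $\partial_{\mathbf v}$ meets a curl $\nabla\times(\delta/\delta{\mathbf B})$. Summing over cyclic permutations of $(F,G,K)$, family (i) reduces by the standard identity
$$
\{\{a,b\}_{\mathbf B},c\}_{\mathbf B}+\mathrm{cyc}=-(\nabla\cdot{\mathbf B})\,\Big(\frac{\partial a}{\partial{\mathbf v}}\times\frac{\partial b}{\partial{\mathbf v}}\Big)\cdot\frac{\partial c}{\partial{\mathbf v}},
$$
which vanishes precisely when $\nabla\cdot{\mathbf B}=0$; family (ii), viewed as a Lie--Poisson bracket in the ${\mathbf w}$ variables, is dispatched by Lemma~4.1 of~\cite{Abdelhamid} with its three vector fields taken to be ${\mathbf w}_F,{\mathbf w}_G,{\mathbf w}_K$ and $\rho=n$, which again requires $\nabla\cdot{\mathbf B}=0$ in place of the automatic solenoidality $\nabla\cdot(\nabla\times{\mathbf A})=0$ available in the xpA case. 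The remaining task is to show that the cyclic sum of family (iii) cancels identically, using the cyclicity of the scalar triple product, the Jacobi identity for $\nabla\times(\mathbf a\times\mathbf b)$, repeated integration by parts in ${\mathbf x}$ and ${\mathbf v}$, and the matching of the explicit $1/n$ terms against the $1/n^2$ terms produced by differentiating the weight.

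I expect the main obstacle to be exactly this cross sector (iii): these are the terms where the kinetic and field structures interlock, fed by the $f$-dependence of ${\mathbf w}_F$ through the current-like piece $\int f\,\partial_{\mathbf v}(\delta/\delta f)\,\mathrm{d}{\mathbf v}$. Tracking which $1/n^2$ contribution pairs with which explicit $1/n$ term, and checking that every residue is proportional either to $\nabla\cdot{\mathbf B}$ or to a total ${\mathbf x}$- or ${\mathbf v}$-derivative that integrates to zero, is the delicate bookkeeping; the two genuinely new inputs relative to the classical Vlasov--Maxwell computation are the $1/n$ weight and this current-like contribution, and the whole sum closes only after $\nabla\cdot{\mathbf B}=0$ is invoked to annihilate the residual family-(i) and family-(ii) pieces.
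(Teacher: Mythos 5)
Your setup is sound, and your structural observation is genuinely the right one: recombining the three $1/n$-weighted field terms of~\eqref{eq:xvBbracket} into $-\int\tfrac{1}{n}\,{\mathbf B}\cdot({\mathbf w}_F\times{\mathbf w}_G)\,\mathrm{d}{\mathbf x}$ with ${\mathbf w}_F=\nabla\times\tfrac{\delta F}{\delta{\mathbf B}}-\int f\,\partial_{\mathbf v}\tfrac{\delta F}{\delta f}\,\mathrm{d}{\mathbf v}$ is exactly the functional-derivative relation~\eqref{eq:relaxpaxvb} by which the paper itself connects the xvB bracket to the xpA bracket~\eqref{eq:1}, and your family (i) identity (magnetized Vlasov bracket with residue proportional to $\nabla\cdot{\mathbf B}$) and family (ii) tool (the Abdelhamid-type lemma with $\rho=n$, now requiring $\nabla\cdot{\mathbf B}=0$ by hand) are the correct mechanisms for the corresponding pieces. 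The problem is that this is a plan, not a proof: the entire cross sector, your family (iii), is deferred as anticipated ``delicate bookkeeping.'' In the paper's direct proof that sector is not a remainder — it is the bulk of the argument. Of the fourteen groups in~\eqref{eq:each}, roughly ten are precisely the mixed kinetic--field terms generated by the $f$-dependence of ${\mathbf w}$ through the current $\int f\,\partial_{\mathbf v}(\delta/\delta f)\,\mathrm{d}{\mathbf v}$ and by the $1/n$ weight, and each requires its own cancellation pattern (the identities~\eqref{eq:idcurl}--\eqref{eq:iddiv}, Lemma~2 of~\cite{lifting}, repeated integrations by parts), with several producing residues proportional to $\nabla\cdot{\mathbf B}$ that must be recognized as such rather than shown to vanish identically. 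Until that computation is executed, the Jacobi identity has not been established.

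There is a second, subtler gap: you presume that the cyclic sum splits into three families each of which vanishes separately. That splitting is itself an unproven claim — in particular, applying the Abdelhamid lemma to the composite objects ${\mathbf w}_F,{\mathbf w}_G,{\mathbf w}_K$ as if they were independent field variations is only legitimate once all terms in which derivatives strike the internal structure of the ${\mathbf w}$'s have been segregated into family (iii) and shown to cancel among themselves; the two assertions stand or fall together. The splitting would follow cleanly if you instead proved that~\eqref{eq:xvBbracket} is the pullback of~\eqref{eq:1} under~\eqref{eq:relaxpaxvb}, so that the Jacobi identity transfers from the conditional xpA result, but that route needs ${\mathbf B}=\nabla\times{\mathbf A}$ (existence of a potential, not merely pointwise $\nabla\cdot{\mathbf B}=0$), gauge-independence of the correspondence, and the closure statement of Theorem~\ref{thm:relation}; you establish none of these. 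As written, the proposal gets the two easier thirds right and leaves out the third that makes the statement a theorem.
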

\begin{proof}
    By the bracket theorem in~\cite{lifting}, to prove the Jacobi identity
for bracket $$\{F, G\} = \int \frac{\delta F}{\delta \psi} J(\psi) \frac{\delta G}{\delta \psi}\, \mathrm{d}{\mathbf a},$$ where $J$ is a operator depending on $\psi$ and $\psi$ depends on ${\mathbf a}$, only the explicit dependence of $J$ on $\psi$ is needed to consider when
taking the functional derivative $\frac{\delta \{F, G \}}{\delta \psi}$. Then we have the following derivatives of the bracket in~\eqref{eq:xvBbracket}, where we remove the index $l$ of the bracket~\eqref{eq:xvBbracket} for simplicity,
\begin{equation*}
    \begin{aligned}
        \frac{\delta \{F, G \}}{\delta f} &= \left[\frac{\delta F}{\delta f}, \frac{\delta G}{\delta f}  
        \right]_{xv} +  {\mathbf B} \cdot  \left( \frac{\partial}{\partial {\mathbf v}} \frac{\delta F}{\delta f} \times  \frac{\partial}{\partial {\mathbf v}} \frac{\delta G}{\delta f}   \right) + \frac{1}{n^2} {\mathbf B} \cdot \left( \nabla \times \frac{\delta F}{\delta {\mathbf B}} \times  \nabla \times \frac{\delta G}{\delta {\mathbf B}}  \right) \\
         & + \int \frac{f}{n^2}  {\mathbf B} \cdot \left( \frac{\partial }{\partial {\mathbf v}} \frac{\delta G}{\delta f} \times \nabla \times \frac{\delta F}{\delta {\mathbf B}} -  \frac{\partial }{\partial {\mathbf v}} \frac{\delta F}{\delta f} \times \nabla \times \frac{\delta G}{\delta {\mathbf B}}  \right) \mathrm{d}{\mathbf v}\\
        & -  \frac{1}{n} {\mathbf B} \cdot \left( \frac{\partial }{\partial {\mathbf v}} \frac{\delta G}{\delta f} \times \nabla \times \frac{\delta F}{\delta {\mathbf B}} -  \frac{\partial }{\partial {\mathbf v}} \frac{\delta F}{\delta f} \times \nabla \times  \frac{\delta G}{\delta {\mathbf B}}  \right)\\
         & + \frac{1}{n^2} {\mathbf B} \cdot \left( \int f \frac{\partial }{\partial {\mathbf v}} \frac{\delta F}{\delta f} \mathrm{d}{\mathbf v} \times \int f \frac{\partial }{\partial {\mathbf v}} \frac{\delta G}{\delta f} \mathrm{d}{\mathbf v} \right)\\
        & -  \frac{1}{n} {\mathbf B} \cdot \left(\frac{\partial }{\partial {\mathbf v}} \frac{\delta F}{\delta f}\times \int f \frac{\partial }{\partial {\mathbf v}} \frac{\delta G}{\delta f} \mathrm{d}{\mathbf v} \right)  -  \frac{1}{n} {\mathbf B} \cdot \left( \int f \frac{\partial }{\partial {\mathbf v}} \frac{\delta F}{\delta f} \mathrm{d}{\mathbf v} \times \frac{\partial }{\partial {\mathbf v}} \frac{\delta G}{\delta f}  \right),\\
         \frac{\delta \{F, G \}}{\delta {\mathbf B}} &=  \int f \left( \frac{\partial}{\partial {\mathbf v}} \frac{\delta F}{\delta f} \times  \frac{\partial}{\partial {\mathbf v}} \frac{\delta G}{\delta f}   \right) \mathrm{d}{\mathbf v}  - \left( \frac{1}{n} \nabla \times \frac{\delta F}{\delta {\mathbf B}} \times \nabla \times \frac{\delta G}{\delta {\mathbf B}} \right) \\
        & -  \int \frac{f}{n} \left( \frac{\partial}{\partial {\mathbf v}} \frac{\delta G}{\delta f} \times \nabla \times \frac{\delta F}{\delta {\mathbf B}} -  \frac{\partial}{\partial {\mathbf v}} \frac{\delta F}{\delta f} \times \nabla \times \frac{\delta G}{\delta {\mathbf B}}  \right) \mathrm{d}{\mathbf v}\\
        & -\frac{1}{n} \int f \frac{\partial}{\partial {\mathbf v}} \frac{\delta F}{\delta f} \mathrm{d}{\mathbf v} \times \int f \frac{\partial}{\partial {\mathbf v}} \frac{\delta G}{\delta f} \mathrm{d}{\mathbf v}.
    \end{aligned}
\end{equation*}
Here we write out $\{\{F, G\}, K\}$ as 
\begin{equation*}
    \begin{aligned}
        \{\{F, G\}, K\} = \{\{F, G\}, K\}_{fff} + \{\{F, G\}, K\}_{ffB} + \{\{F, G\}, K\}_{fBB} + \{\{F, G\}, K\}_{BBB},
    \end{aligned}
\end{equation*}
where $\{\{F, G\}, K\}_{abc}$ denotes the terms including the functional derivatives about $a,b,c$. Next we write out the further expression of each group of terms.  
\begin{small}
\begin{equation}
\label{eq:each}
    \begin{aligned}
        \{\{F, G\}, K\}_{fff} & = \{\{F, G\}, K\}_{fff}^{f} + \{\{F, G\}, K\}_{fff}^{fB} + \{\{F, G\}, K\}_{fff}^{ffB/n} + \{\{F, G\}, K\}_{fff}^{fffB/n^2}\\ 
        & + 
        \{\{F, G\}, K\}_{fff}^{fB^2}
         +
        \{\{F, G\}, K\}_{fff}^{ffB^2/n} + 
        \{\{F, G\}, K\}_{fff}^{fffB^2/n^2},\\
        \{\{F, G\}, K\}_{ffB} & = \{\{F, G\}, K\}_{ffB}^{fB/n} + \{\{F, G\}, H\}_{ffB}^{ffB/n^2} + \{\{F, G\}, K\}_{ffB}^{fB^2/n} + \{\{F, G\}, K\}_{ffB}^{ffB^2/n^2},\\
        \{\{F, G\}, K\}_{fBB}& = \{\{F, G\}, K\}_{fBB}^{fB/n^2} + \{\{F, G\}, K\}_{fBB}^{fB^2/n^2},\\
        \{\{F, G\}, K\}_{BBB} & =  \{\{F, G\}, K\}_{BBB}^{B/n^2},
    \end{aligned}
\end{equation}
    \end{small}
where the upper indices denote the functions involved except the functional derivatives.

The proof of the sum of the permutation of the above each term is zero can be found in the Appendix~\ref{sec:appendix}, some of which hold under the condition that $\nabla \cdot {\mathbf B} = 0$. Then we know that the Jacobi identity holds, i.e., 
$$
\{\{F, G\}, K\} + \{\{G, K\}, F\} + \{\{K, F\}, G\} = 0,
$$
under the condition that $\nabla \cdot {\mathbf B} = 0$.

\end{proof}

\subsection{The relations between the brackets}\label{sec:relation}
Here we show the relations of the above three brackets, the same approach is used for deriving the magnetic field based Poisson brackets from the canonical momentum based Poisson brackets in~\cite{Tronci, DP, MW}.\\ 
1)
The bracket~\eqref{eq:xvBbracket} can be derived from the bracket~\eqref{eq:1} through the following transformation and relation of the functional derivatives,
\begin{equation}\label{eq:relaxpaxvb}
\begin{aligned}
        &f({\mathbf x}, {\mathbf p}) = \tilde{f}({\mathbf x}, {\mathbf v}),\quad  {\mathbf p} = {\mathbf v} + {\mathbf A}, \quad \nabla \times {\mathbf A} = {\mathbf B}, \\
&\left[f({\mathbf x}, {\mathbf p}),  g({\mathbf x}, {\mathbf p})\right]_{xp} = \left[\tilde{f}({\mathbf x}, {\mathbf v}),  \tilde{g}({\mathbf x}, {\mathbf v})\right]_{xv} + \nabla \times {\mathbf A} \cdot \frac{\partial }{\partial {\mathbf v}} \tilde{f}({\mathbf x}, {\mathbf v}) \times \frac{\partial }{\partial {\mathbf v}}\tilde{g}({\mathbf x}, {\mathbf v}),\\
&\frac{\delta F}{\delta f} = \frac{\delta \tilde{F}}{\delta \tilde{f}}, \quad \frac{\delta F}{\delta {\mathbf A}} = \nabla \times \frac{\delta \tilde{F}}{\delta {\mathbf B}} - \int f  \frac{\partial }{\partial {\mathbf v}} \frac{\delta \tilde{F}}{\delta \tilde{f}} \,\mathrm{d}{\mathbf v}.
\end{aligned}
\end{equation}
2)
The bracket~\eqref{eq:XVAbracket} can be derived from the bracket~\eqref{eq:1} through the following transformation and relation of the functional derivatives,
\begin{equation*}
\begin{aligned}
        &f({\mathbf x}, {\mathbf p}) = \tilde{f}({\mathbf x}, {\mathbf v}),\quad  {\mathbf p} = {\mathbf v} + {\mathbf A}, \\
&\left[f({\mathbf x}, {\mathbf p}),  g({\mathbf x}, {\mathbf p})\right]_{xp} = \left[\tilde{f}({\mathbf x}, {\mathbf v}),  \tilde{g}({\mathbf x}, {\mathbf v})\right]_{xv} + \nabla \times {\mathbf A} \cdot \frac{\partial }{\partial {\mathbf v}} \tilde{f}({\mathbf x}, {\mathbf v}) \times \frac{\partial }{\partial {\mathbf v}}\tilde{g}({\mathbf x}, {\mathbf v}),\\
&\frac{\delta F}{\delta f} = \frac{\delta \tilde{F}}{\delta \tilde{f}}, \quad \frac{\delta F}{\delta {\mathbf A}} = \frac{\delta \tilde{F}}{\delta {\mathbf A}} - \int f  \frac{\partial }{\partial {\mathbf v}} \frac{\delta \tilde{F}}{\delta \tilde{f}} \,\mathrm{d}{\mathbf v}.
\end{aligned}
\end{equation*}
3)
The bracket~\eqref{eq:xvBbracket} can be derived from the bracket~\eqref{eq:XVAbracket} through the following transformation and relation of the functional derivatives,
$$
       \nabla \times {\mathbf A} = {\mathbf B}, \quad
\frac{\delta F}{\delta {\mathbf A}} = \nabla \times \frac{\delta \tilde{F}}{\delta {\mathbf B}}.
$$

\begin{theorem}\label{thm:relation}
  The bracket~\eqref{eq:1} of two arbitrary smooth functionals in $\mathcal{A}$~\eqref{eq:classxpa} is also in $\mathcal{A}$~\eqref{eq:classxpa}. 
    The bracket~\eqref{eq:XVAbracket} of arbitrary smooth functionals in $\mathcal{B}$~\eqref{eq:classxva} is also in $\mathcal{B}$~\eqref{eq:classxva}.
\end{theorem}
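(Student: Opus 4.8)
The plan is to reduce both assertions to a single closure observation about the xvB bracket~\eqref{eq:xvBbracket}, exploiting the change-of-variables relations already recorded in Section~\ref{sec:relation}. The key reading of the definitions is that membership in $\mathcal{A}$~\eqref{eq:classxpa} (resp. $\mathcal{B}$~\eqref{eq:classxva}) means nothing more than that a functional of $(f({\mathbf x},{\mathbf p}),{\mathbf A})$ (resp. $(f({\mathbf x},{\mathbf v}),{\mathbf A})$) depends on its arguments only through the pair $(\tilde{f},{\mathbf B})$ (resp. $(f,{\mathbf B})$), where ${\mathbf B}=\nabla\times{\mathbf A}$ and $\tilde{f}({\mathbf x},{\mathbf v})=f({\mathbf x},{\mathbf v}+{\mathbf A})$. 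Since ${\mathbf B}=\nabla\times{\mathbf A}$ automatically satisfies $\nabla\cdot{\mathbf B}=0$, every such functional lies in the admissible class for which Theorem~\ref{theorem:bracket} applies. The crucial structural fact is that the right-hand side of~\eqref{eq:xvBbracket} is built solely out of $f$, ${\mathbf B}$, $n=\int f\,\mathrm{d}{\mathbf v}$, and the functional derivatives $\delta/\delta f$ and $\delta/\delta{\mathbf B}$, with no explicit appearance of ${\mathbf A}$; hence the xvB bracket of two functionals of $(f,{\mathbf B})$ is again a functional of $(f,{\mathbf B})$ alone.

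For the $\mathcal{B}$ statement I would take $F,G\in\mathcal{B}$, write $F=\tilde{F}(f,{\mathbf B})$ and $G=\tilde{G}(f,{\mathbf B})$, and invoke relation 3) of Section~\ref{sec:relation}: the substitution $\nabla\times{\mathbf A}={\mathbf B}$, $\delta F/\delta{\mathbf A}=\nabla\times(\delta\tilde{F}/\delta{\mathbf B})$ carries the xvA bracket~\eqref{eq:XVAbracket} into the xvB bracket~\eqref{eq:xvBbracket}, so that $\{F,G\}_l^{\mathrm{xvA}}=\{\tilde{F},\tilde{G}\}_l^{\mathrm{xvB}}$. By the closure remark, the right-hand side is a functional of $(f,{\mathbf B})$, so $\{F,G\}_l^{\mathrm{xvA}}$ depends on $(f,{\mathbf A})$ only through $(f,{\mathbf B})$; that is exactly the defining property of $\mathcal{B}$, giving $\{F,G\}_l^{\mathrm{xvA}}\in\mathcal{B}$.

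The $\mathcal{A}$ statement is handled identically using relation 1) in place of relation 3). For $F,G\in\mathcal{A}$ one writes $F=\tilde{F}(\tilde{f},{\mathbf B})$, $G=\tilde{G}(\tilde{f},{\mathbf B})$ and uses the transformation~\eqref{eq:relaxpaxvb}, whose functional-derivative rule $\delta F/\delta{\mathbf A}=\nabla\times(\delta\tilde{F}/\delta{\mathbf B})-\int f\,\partial_{\mathbf v}(\delta\tilde{F}/\delta\tilde{f})\,\mathrm{d}{\mathbf v}$ sends the xpA bracket~\eqref{eq:1} to the xvB bracket~\eqref{eq:xvBbracket}, yielding $\{F,G\}_l^{\mathrm{xpA}}=\{\tilde{F},\tilde{G}\}_l^{\mathrm{xvB}}$. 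Again the right-hand side is a functional of $(\tilde{f},{\mathbf B})$, and since the map $(f,{\mathbf A})\mapsto(\tilde{f},{\mathbf B})$ is well-defined, $\{F,G\}_l^{\mathrm{xpA}}$ depends on $(f,{\mathbf A})$ only through $(\tilde{f},{\mathbf B})$, i.e.\ $\{F,G\}_l^{\mathrm{xpA}}\in\mathcal{A}$.

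I expect the only genuine work — and the main obstacle — to sit inside the two bracket relations rather than in the closure argument itself: one must verify that the functional-derivative substitution formulas in~\eqref{eq:relaxpaxvb} and in relation 3) are internally consistent and transform the simplified brackets term by term, and in particular that the extra velocity-moment term $\int f\,\partial_{\mathbf v}(\delta\tilde{F}/\delta\tilde{f})\,\mathrm{d}{\mathbf v}$ is precisely what reconciles the divergence condition~\eqref{eq:conditionxpa} defining $\mathcal{A}$ with the divergence-free reduction underlying the xvB bracket. Once those relations from Section~\ref{sec:relation} are in hand, Theorem~\ref{thm:relation} follows immediately from the observation that~\eqref{eq:xvBbracket} produces a functional of $(f,{\mathbf B})$, with no residual ${\mathbf A}$-dependence to spoil membership in $\mathcal{A}$ or $\mathcal{B}$.
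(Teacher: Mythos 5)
Your proposal is correct and follows essentially the same route as the paper's own proof: both represent functionals in $\mathcal{A}$ (resp.\ $\mathcal{B}$) through functionals of $(\tilde{f},{\mathbf B})$ (resp.\ $(f,{\mathbf B})$), invoke the transformation relations 1) and 3) of Section~\ref{sec:relation} to identify the xpA bracket~\eqref{eq:1} (resp.\ the xvA bracket~\eqref{eq:XVAbracket}) with the xvB bracket~\eqref{eq:xvBbracket}, and conclude membership from the fact that the resulting bracket is a functional of $(\tilde{f},{\mathbf B})$ (resp.\ $(f,{\mathbf B})$) alone. The only difference is cosmetic: you make explicit the closure observation that~\eqref{eq:xvBbracket} involves no residual ${\mathbf A}$-dependence, which the paper leaves implicit in the phrase ``is determined by the functional $\{\tilde{F},\tilde{G}\}(\tilde{f},{\mathbf B})$.''
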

\begin{proof}
    For two arbitrary smooth functionals $F$ and $G$ in~\eqref{eq:classxpa} depending on $f({\mathbf x}, {\mathbf p})$ and ${\mathbf A
    }$, we have 
    $$
F(f({\mathbf x}, {\mathbf p}), {\mathbf A
    }) = \tilde{F}(\tilde{f}({\mathbf x}, {\mathbf v}), {\mathbf B}), \quad G(f({\mathbf x}, {\mathbf p}), {\mathbf A
    }) = \tilde{G}(\tilde{f}({\mathbf x}, {\mathbf v}), {\mathbf B}),
    $$
    where $\tilde{F}$ and $\tilde{G}$ are two functionals determining $F$ and $G$ through the definition in~\eqref{eq:classxpa}. 
    Then we have \begin{equation}\label{eq:fgxpa}
\{F, G \}(f({\mathbf x}, {\mathbf p}), {\mathbf A
    }) = \{\tilde{F}, \tilde{G}\}(\tilde{f}({\mathbf x}, {\mathbf v}), {\mathbf B}), 
    \end{equation}
    where the bracket on the left hand side is the bracket~\eqref{eq:1} of the xpA formulation, and the bracket on the right hand side is the bracket~\eqref{eq:xvBbracket} of the xvB formulation. 
    In~\eqref{eq:fgxpa},  the $'='$ holds as the bracket~\eqref{eq:xvBbracket} can be derived from the bracket~\eqref{eq:1} via the  transformation with the chain rules of the functional derivatives~\eqref{eq:relaxpaxvb}. 
    Note that equality~\eqref{eq:fgxpa} tells us that the functional $\{F, G \}(f({\mathbf x}, {\mathbf p}), {\mathbf A})$ is determined by the functional $\{\tilde{F}, \tilde{G} \}(\tilde{f}({\mathbf x}, {\mathbf v}), {\mathbf B})$ and is thus in the functional set defined in $\mathcal{A}$~\eqref{eq:classxpa}, also it satisfies the condition~\eqref{eq:conditionxpa}.

    By the similar arguments, we know that the bracket~\eqref{eq:XVAbracket} of arbitrary smooth functionals in $\mathcal{B}$~\eqref{eq:classxva} is also in $\mathcal{B}$~\eqref{eq:classxva}.
\end{proof}

\section{Generalization}\label{sec:general}
In this section, we generalize the above results to more general cases and models in plasma physics, including the HKM model with electron entropy, kinetic Hall MHD hybrid model~\cite{DP}, Hall MHD~\cite{holm}, and kinetic-multi-fluid model~\cite{Tronci}. The two fluid model has been considered in~\cite{burby}.

\subsection{The HKM model with electron entropy}\label{sec:entropy}
Here we present the results of the general case of the HKM model with entropy. 
We add a new term related with the entropy per unit mass, i.e., $s$,  to the bracket proposed in~\cite{Tronci} and get
\begin{equation}
\label{eq:xpas}
    \begin{aligned}
 & \{F, G  \}(f({\mathbf x}, {\mathbf p}), {\mathbf A}, n_e, s)  =  \int f \left[ \frac{\delta F}{\delta f}, \frac{\delta G}{\delta f} \right]_{xp}  \mathrm{d}{\mathbf x}\mathrm{d}{\mathbf p} - \int \frac{1}{n_e} \nabla \times {\mathbf A} \cdot \left( \frac{\delta F}{\delta {\mathbf A}} \times  \frac{\delta G}{\delta {\mathbf A}} \right)  \mathrm{d}{\mathbf x} \\
 & + \int \frac{\delta F}{\delta {\mathbf A}} \cdot \frac{\partial }{\partial {\mathbf x}} \frac{\delta G}{\delta n_e} -  \frac{\delta G}{\delta {\mathbf A}} \cdot \frac{\partial }{\partial {\mathbf x}}\frac{\delta F}{\delta n_e}\, \mathrm{d}{\mathbf x} + \int \frac{\nabla s}{n_e} \cdot \left(\frac{\delta F}{\delta s}\frac{\delta G}{\delta {\mathbf A}} - \frac{\delta G}{\delta s}\frac{\delta F}{\delta {\mathbf A}} \right) \mathrm{d}{\mathbf x},
        \end{aligned}
\end{equation}
with the following Hamiltonian
$$
H(f({\mathbf x}, {\mathbf p}), {\mathbf A}, n_e, s)= \frac{1}{2} \int |{\mathbf p} - {\mathbf A}|^2 f\, \mathrm{d}{\mathbf p}\mathrm{d}{\mathbf x} + \frac{1}{2} \int |\nabla \times {\mathbf A}|^2 \, \mathrm{d}{\mathbf x} +  \int n_e\, \mathcal{U}(n_e, s) \, \mathrm{d}{\mathbf x}.
$$
The Jacobi identity can be shown satisfied by this bracket via a direct proof.
Similarly as the isothermal electron case in subsection~\ref{sec:xpa}, we get the following simplified bracket satisfying the Jacobi identity under condition~\eqref{eq:conditionxpa} and corresponding Hamiltonian
\begin{equation}
\label{eq:xpass}
    \begin{aligned}
 & \{F, G  \}_l(f({\mathbf x}, {\mathbf p}), {\mathbf A}, s)  =  \int f \left[ \frac{\delta F}{\delta f}, \frac{\delta G}{\delta f} \right]_{xp}  \mathrm{d}{\mathbf x}\mathrm{d}{\mathbf p} - \int \frac{1}{n} \nabla \times {\mathbf A} \cdot \left( \frac{\delta F}{\delta {\mathbf A}} \times  \frac{\delta G}{\delta {\mathbf A}} \right)  \mathrm{d}{\mathbf x} \\
 & + \int \frac{\nabla s}{n} \cdot \left(\frac{\delta F}{\delta s}\frac{\delta G}{\delta {\mathbf A}} - \frac{\delta G}{\delta s}\frac{\delta F}{\delta {\mathbf A}} \right) \mathrm{d}{\mathbf x}, \quad n = \int f \, \mathrm{d}{\mathbf p},\\
&H(f({\mathbf x}, {\mathbf p}), {\mathbf A}, s)= \frac{1}{2} \int |{\mathbf p} - {\mathbf A}|^2 f\, \mathrm{d}{\mathbf p}\mathrm{d}{\mathbf x} + \frac{1}{2} \int |\nabla \times {\mathbf A}|^2 \, \mathrm{d}{\mathbf x} +  \int n\, \mathcal{U}(n, s) \, \mathrm{d}{\mathbf x}.
 \end{aligned}
\end{equation}

Next, we present the results of the magnetic field based formulations. We add two terms related the entropy into the bracket~\eqref{eq:xvbnebrackethelp} and get 
\begin{equation}
\label{eq:xvbnebrackethelps}
    \begin{aligned}
        &\{ F, G\}(f({\mathbf x}, {\mathbf v}), {\mathbf B}, n_e, s)  = \int f \left[ \frac{\delta F}{\delta f}, \frac{\delta G}{\delta f}  \right]_{xv} \mathrm{d}{\mathbf x} \mathrm{d}{\mathbf v} + \int f  {\mathbf B} \cdot \left( \frac{\partial}{\partial {\mathbf v}} \frac{\delta F}{\delta f} \times  \frac{\partial}{\partial {\mathbf v}} \frac{\delta G}{\delta f}   \right) \mathrm{d}{\mathbf x} \mathrm{d}{\mathbf v}\\
        & - \int \frac{1}{n_e}  {\mathbf B} \cdot \left(\nabla \times  \frac{\delta F}{\delta {\mathbf B}} \times \nabla \times  \frac{\delta G}{\delta {\mathbf B}}  \right)  \mathrm{d}{\mathbf x} - \int \frac{1}{n_e}  {\mathbf B} \cdot \left( \int f \frac{\partial }{\partial {\mathbf v}} \frac{\delta F}{\delta f} \mathrm{d}{\mathbf v} \times \int f \frac{\partial }{\partial {\mathbf v}} \frac{\delta G}{\delta f} \mathrm{d}{\mathbf v} 
         \right)  \mathrm{d} {\mathbf x}\\
        & - \int \frac{f}{n_e}  {\mathbf B} \cdot \left( \frac{\partial }{\partial {\mathbf v}} \frac{\delta G}{\delta f} \times \nabla \times \frac{\delta F}{\delta {\mathbf B}} -  \frac{\partial }{\partial {\mathbf v}} \frac{\delta F}{\delta f} \times \nabla \times \frac{\delta G}{\delta {\mathbf B}}  \right) \mathrm{d}{\mathbf x} \mathrm{d}{\mathbf v}\\
         & + \int f \left(-\frac{\partial}{\partial {\mathbf v}} \frac{\delta F}{\delta f}\cdot\frac{\partial}{\partial {\mathbf x}} \frac{\delta G}{\delta n_e} + \frac{\partial}{\partial {\mathbf v}} \frac{\delta G}{\delta f}\cdot\frac{\partial}{\partial {\mathbf x}} \frac{\delta F}{\delta n_e}  \right) \mathrm{d}{\mathbf x} \mathrm{d}{\mathbf v}\\
          & + \int \frac{\nabla s}{n_e} \cdot \left(\frac{\delta F}{\delta s} \nabla \times \frac{\delta G}{\delta {\mathbf B}} - \frac{\delta G}{\delta s} \nabla \times \frac{\delta F}{\delta {\mathbf B}} \right) \mathrm{d}{\mathbf x} - \int f\,\frac{\nabla s}{n_e} \cdot \left(\frac{\delta F}{\delta s} \frac{\partial}{\partial {\mathbf v}}\frac{\delta G}{\delta f} - \frac{\delta G}{\delta s} \frac{\partial}{\partial {\mathbf v}} \frac{\delta F}{\delta f} \right) \mathrm{d}{\mathbf x}\mathrm{d}{\mathbf v},\\
&H(f({\mathbf x}, {\mathbf v}), {\mathbf B}, n_e, s)= \frac{1}{2} \int |{\mathbf v}|^2 f\, \mathrm{d}{\mathbf v}\mathrm{d}{\mathbf x} + \frac{1}{2} \int |{\mathbf B}|^2 \, \mathrm{d}{\mathbf x} +  \int n_e\, \mathcal{U}(n_e, s) \, \mathrm{d}{\mathbf x}.
 \end{aligned}
\end{equation}
Similar to subsection~\ref{sec:xvB}, we get the following simplified Poisson bracket and corresponding Hamiltonian
\begin{equation}
\label{eq:xvBbrackets}
    \begin{aligned}
        & \{ F, G\}_l(f({\mathbf x}, {\mathbf v}), {\mathbf B}, s)  = \int f \left[ \frac{\delta F}{\delta f}, \frac{\delta G}{\delta f}  \right]_{xv} \mathrm{d}{\mathbf x} \mathrm{d}{\mathbf v} + \int f  {\mathbf B} \cdot \left( \frac{\partial}{\partial {\mathbf v}} \frac{\delta F}{\delta f} \times  \frac{\partial}{\partial {\mathbf v}} \frac{\delta G}{\delta f}   \right) \mathrm{d}{\mathbf x} \mathrm{d}{\mathbf v}\\
        & - \int \frac{1}{n}  {\mathbf B} \cdot \left(\nabla \times  \frac{\delta F}{\delta {\mathbf B}} \times \nabla \times  \frac{\delta G}{\delta {\mathbf B}}  \right)  \mathrm{d}{\mathbf x} - \int \frac{f}{n}  {\mathbf B} \cdot \left( \frac{\partial }{\partial {\mathbf v}} \frac{\delta G}{\delta f} \times \nabla \times \frac{\delta F}{\delta {\mathbf B}} -  \frac{\partial }{\partial {\mathbf v}} \frac{\delta F}{\delta f} \times \nabla \times \frac{\delta G}{\delta {\mathbf B}}  \right) \mathrm{d}{\mathbf x} \mathrm{d}{\mathbf v}\\
        & - \int \frac{1}{n}  {\mathbf B} \cdot \left( \int f \frac{\partial }{\partial {\mathbf v}} \frac{\delta F}{\delta f} \mathrm{d}{\mathbf v} \times \int f \frac{\partial }{\partial {\mathbf v}} \frac{\delta G}{\delta f} \mathrm{d}{\mathbf v} 
         \right)  \mathrm{d} {\mathbf x} + \int \frac{\nabla s}{n} \cdot \left(\frac{\delta F}{\delta s} \nabla \times \frac{\delta G}{\delta {\mathbf B}} - \frac{\delta G}{\delta s} \nabla \times \frac{\delta F}{\delta {\mathbf B}} \right) \mathrm{d}{\mathbf x}\\
         & - \int f\,\frac{\nabla s}{n} \cdot \left(\frac{\delta F}{\delta s} \frac{\partial}{\partial {\mathbf v}}\frac{\delta G}{\delta f} - \frac{\delta G}{\delta s} \frac{\partial}{\partial {\mathbf v}} \frac{\delta F}{\delta f} \right) \mathrm{d}{\mathbf x}\mathrm{d}{\mathbf v},\quad n = \int f \, \mathrm{d}{\mathbf v},\\
&H(f({\mathbf x}, {\mathbf v}), {\mathbf B}, s)= \frac{1}{2} \int |{\mathbf v}|^2 f\, \mathrm{d}{\mathbf v}\mathrm{d}{\mathbf x} + \frac{1}{2} \int |{\mathbf B}|^2 \, \mathrm{d}{\mathbf x} +  \int n\, \mathcal{U}(n, s) \, \mathrm{d}{\mathbf x}.
\end{aligned}
\end{equation}

The Poisson brackets expressed with the entropy density $\sigma= n_e s$ or the electron pressure $p_e = n_e^2 \frac{\partial U}{\partial n_e}$ can be obtained from the above brackets via the chain rules of the functional derivatives. Similar simplifications can be done as above via the quasi-neutrality condition. Also similar numerical discretizations can be conducted based on the above brackets as~\cite{LHPS,LHPS2}.

\subsection{Kinetic Hall MHD hybrid model}
Then we consider a more general hybrid model~\cite{DP} with only part of the ions described by kinetic equations. Here we consider the single ion (with unit charge) species case for simplicity,  
\begin{equation}\label{eq:hybridequations}
\begin{aligned}
&\frac{\partial f}{\partial t} = - {\mathbf v} \cdot \nabla f - ({\mathbf E} + {\mathbf v} \times {\mathbf B}) \cdot \nabla_v f,\\
& \frac{\partial {\mathbf B}}{\partial t} = - \nabla \times {\mathbf E},\\
& \frac{\partial n_e}{\partial t} = - \nabla \cdot (n_i{\mathbf V}) - \nabla \cdot {\mathbf J}_k,\\
& \frac{\partial n_i}{\partial t} = - \nabla \cdot (n_i{\mathbf V}),\\
& n_i (\frac{\partial {\mathbf V}}{\partial t} + {\mathbf V}\cdot \nabla {\mathbf V}) = \left[ \frac{n_i}{n_e}\sigma_k{\mathbf V} + \frac{n_i}{n_e}(\nabla \times {\mathbf B} - {\mathbf J}_k)  \right] \times {\mathbf B} - \frac{n_i}{n_e} \nabla \cdot {\mathbf P}_e - \nabla P_i
\end{aligned}
\end{equation}
where ${\mathbf J} = \nabla \times {\mathbf B}, {\mathbf J}_k = \int {\mathbf v} f \mathrm{d}{\mathbf v}$, 
\begin{equation}\label{eq:ohmslaw}
 {\mathbf E} = -\frac{n_i}{n_e}{\mathbf V} \times {\mathbf B}  + \frac{1}{n_e}(\nabla \times {\mathbf B} - {\mathbf J}_k) \times {\mathbf B} - \frac{\nabla \cdot {\mathbf P}_e}{n_e},\quad {\mathbf P}_e = \frac{P_{e\parallel}-P_{\perp}}{B^2}{\mathbf B}{\mathbf B} + P_{e\perp}{\mathbf I},
\end{equation}
$$
P_{e\parallel} = {\mathbf P}_e : {\mathbf b}{\mathbf b}, \quad P_{e\perp} = \frac{1}{2} {\mathbf P}_e : ({\mathbf I} - {\mathbf b}{\mathbf b}), \quad {\mathbf b} := {\mathbf B}/|{\mathbf B}| 
$$
Here $f({\mathbf x}, {\mathbf v}, t)$ represents the distribution function of the ions depending on time $t$, space ${\mathbf x}$, and velocity ${\mathbf v}$. ${\mathbf E}$ and ${\mathbf B}$ are electric and magnetic fields, respectively. The kinetic ion charge density is $\int f \,\mathrm{d}{\mathbf v}$, the fluid ion charge density is $n_i$, the electron charge density is $n_e$, which satisfy $n_e = n_i + \int f \mathrm{d}{\mathbf v}$ because of the quasi-neutrality condition. ${\mathbf V}$ is the ion fluid velocity field. The electron internal energy has the form 
$
\mathcal{U}_e = \mathcal{U}_e(n_e, |{\mathbf B}|),
$
which gives 
$$
\frac{\partial \mathcal{U}_e}{\partial n_e} = \frac{P_{e\parallel}}{n_e^2}, \quad \frac{\partial \mathcal{U}_e}{\partial |{\mathbf B}|} = \frac{P_{e\perp} - P_{e\parallel}}{n_e|{\mathbf B}|}, \quad \frac{\partial \mathcal{U}_e}{\partial{\mathbf B}} = -\gamma{\mathbf B}, \quad \gamma = \frac{P_{e\parallel} - P_{e\perp}}{n_e|{\mathbf B}|^2}.
$$

In~\cite{DP}, various Poisson brackets are proposed for the different equivalent formulations of the above kinetic Hall MHD hybrid model~\eqref{eq:hybridequations}.
Similar to the HKM model, for the kinetic Hall MHD hybrid model~\eqref{eq:hybridequations}, simplified anti-symmetric brackets can be obtained by neglecting the terms involving the functional derivative with respect to the electron charge density $n_e$, while still preserving the equivalence of the derived equations with the corresponding Hamiltonians. In the following we determine the condition for the Jacobi identity of the simplified 'canonical bracket', and remark the magnetic field based brackets in remark~\ref{rm:DPmagnetic}.

\noindent{\bf Canonical bracket}
In~\cite{DP} the following canonical bracket with only 5 terms is proposed for the kinetic Hall MHD hybrid model~\eqref{eq:hybridequations}. Here the 'canonical' means canonical fluid momentum and canonical kinetic momentum.
\begin{equation}
\label{eq:fluidxvpne}
\begin{aligned}
& \{F, G\}(n_i, \bar{\mathbf M}, {\mathbf A}, f({\mathbf x}, {\mathbf p}), n_e) =  \int f \left[ \frac{\delta F}{\delta f}, \frac{\delta G}{\delta f} \right]_{xp}  \mathrm{d}{\mathbf x}\mathrm{d}{\mathbf p}\\
& + \int \bar{\mathbf M} \cdot \left(\frac{\delta G}{\delta \bar{\mathbf M}} \cdot \nabla \frac{\delta F}{\delta \bar{\mathbf M}} - \frac{\delta F}{\delta \bar{\mathbf M}} \cdot \nabla \frac{\delta G}{\delta \bar{\mathbf M}} \right) \mathrm{d}{\mathbf x} + \int n_i \left(\frac{\delta G}{\delta \bar{\mathbf M}} \cdot \nabla \frac{\delta F}{\delta n_i} - \frac{\delta F}{\delta \bar{\mathbf M}} \cdot \nabla \frac{\delta G}{\delta n_i}  \right) \mathrm{d}{\mathbf x}\\
& - \int \left( \frac{\delta G}{\delta {\mathbf A}} \cdot \nabla \frac{\delta F}{\delta n_e} - \frac{\delta F}{\delta {\mathbf A}} \cdot \nabla \frac{\delta G}{\delta n_e} \right)  \mathrm{d}{\mathbf x} - \int \frac{1}{n_e} \nabla \times {\mathbf A} \cdot \left( \frac{\delta F}{\delta {\mathbf A}}  \times \frac{\delta G}{\delta {\mathbf A}}  \right)  \mathrm{d}{\mathbf x}.
\end{aligned}
\end{equation}
We define the following transformation of the unknowns,
$$
\left(f, n_i, \bar{\mathbf M}, {\mathbf A}, n_e\right) \rightarrow \left(f, n_i, \bar{\mathbf M}, {\mathbf A}, C = n_e - n_i - \int f\, \mathrm{d}{\mathbf p}\right).
$$
Then
any functional $\tilde{F}$ about 
$\left(f, n_i, \bar{\mathbf M}, {\mathbf A}, C = n_e - n_i - \int f\, \mathrm{d}{\mathbf p}\right)$ can be regarded as a functional $F$ about $\left(f, n_i, \bar{\mathbf M}, {\mathbf A}, n_e\right)$. And we have the following relations of the functional derivatives
$$
\frac{\delta F}{\delta f} = \frac{\delta \tilde{F}}{\delta f} - \frac{\delta \tilde{F}}{\delta C},\ \frac{\delta F}{\delta n_i} = \frac{\delta \tilde{F}}{\delta n_i} - \frac{\delta \tilde{F}}{\delta C},\
\frac{\delta F}{\delta \bar{\mathbf M}} = \frac{\delta \tilde{F}}{\delta \bar{\mathbf M}},\  \frac{\delta F}{\delta \bar{\mathbf A}} = \frac{\delta \tilde{F}}{\delta \bar{\mathbf A}}, \ \frac{\delta F}{\delta n_e} = \frac{\delta \tilde{F}}{\delta C}.
$$
By plugging the above relations into the Poisson bracket~\eqref{eq:fluidxvpne}, we have (tilde is omitted for convenience)
\begin{equation}
\label{eq:DPsim}
\begin{aligned}
& \{F, G\}(n_i, \bar{\mathbf M}, {\mathbf A}, f({\mathbf x}, {\mathbf p}), C) =  \int f \left[ \frac{\delta F}{\delta f}, \frac{\delta G}{\delta f} \right]_{xp}  \mathrm{d}{\mathbf x}\mathrm{d}{\mathbf p} -  \int f \left[ \frac{\delta F}{\delta f}, \frac{\delta G}{\delta C} \right]_{xp}  \mathrm{d}{\mathbf x}\mathrm{d}{\mathbf p} \\
& -  \int f \left[ \frac{\delta F}{\delta C}, \frac{\delta G}{\delta f} \right]_{xp}  \mathrm{d}{\mathbf x}\mathrm{d}{\mathbf p} + \int \bar{\mathbf M} \cdot \left(\frac{\delta G}{\delta \bar{\mathbf M}} \cdot \nabla \frac{\delta F}{\delta \bar{\mathbf M}} - \frac{\delta F}{\delta \bar{\mathbf M}} \cdot \nabla \frac{\delta G}{\delta \bar{\mathbf M}} \right) \mathrm{d}{\mathbf x} \\
& + \int n_i \left(\frac{\delta G}{\delta \bar{\mathbf M}} \cdot \nabla \frac{\delta F}{\delta n_i} - \frac{\delta F}{\delta \bar{\mathbf M}} \cdot \nabla \frac{\delta G}{\delta n_i}  \right) \mathrm{d}{\mathbf x} - \int n_i \left(\frac{\delta G}{\delta \bar{\mathbf M}} \cdot \nabla \frac{\delta F}{\delta C} - \frac{\delta F}{\delta \bar{\mathbf M}} \cdot \nabla \frac{\delta G}{\delta C}  \right) \mathrm{d}{\mathbf x}\\
& - \int \left( \frac{\delta G}{\delta {\mathbf A}} \cdot \nabla \frac{\delta F}{\delta C} - \frac{\delta F}{\delta {\mathbf A}} \cdot \nabla \frac{\delta G}{\delta C} \right)  \mathrm{d}{\mathbf x} - \int \frac{1}{C + n_i + \int f \, \mathrm{d}{\mathbf p}} \nabla \times {\mathbf A} \cdot \left( \frac{\delta F}{\delta {\mathbf A}}  \times \frac{\delta G}{\delta {\mathbf A}}  \right)  \mathrm{d}{\mathbf x}.
\end{aligned}
\end{equation}
The sum of the second, the third, the sixth, and the seventh terms is zero when the the following condition is satisfied by the functionals
\begin{equation}\label{eq:DPcondition}
    -\nabla \cdot \frac{\delta K}{\delta {\mathbf A}} = \nabla \cdot \left( \int f \frac{\partial }{\partial {\mathbf p}} \frac{\delta K}{\delta f} \mathrm{d}{\mathbf p} + n_i \frac{\delta K}{\delta \bar{\mathbf M}} \right), \quad \forall K.
\end{equation} 
We can verify that
this condition~\eqref{eq:DPcondition} is satisfied by the Hamiltonian
\begin{equation*}
\begin{aligned}
& \int \frac{1}{2}n_i |n_i^{-1}\bar{\mathbf M} -  {\mathbf A}|^2 \, \mathrm{d}{\mathbf x} + \int n_i \mathcal{U}(n_i)\,  \mathrm{d}{\mathbf x} + \int \frac{1}{2} |\nabla \times {\mathbf A}|^2\,  \mathrm{d}{\mathbf x} + \frac{1}{2}\int |{\mathbf p} - {\mathbf A}|^2f\,\mathrm{d}{\mathbf x} \mathrm{d}{\mathbf p}\\
& + \int n_e \mathcal{U}_e(n_e, |\nabla \times {\mathbf A}|)\,  \mathrm{d}{\mathbf x}.
\end{aligned}
\end{equation*}
We can define the following set of functionals satisfying the above condition~\eqref{eq:DPcondition}, 
\begin{equation*}
\begin{aligned}
\mathcal{D} = \{ F(f({\mathbf x}, \, {\mathbf p}), \, {\mathbf A}, n_i,\, \bar{\mathbf M}) & = F(\tilde{f}({\mathbf x}, \, {\mathbf v}), \, {\mathbf B}, \ n_i, \ {\mathbf V})| {\mathbf B} = \nabla \times {\mathbf A}, \quad {\mathbf v} = {\mathbf p} - {\mathbf A},\\
& {\mathbf V} = n_i^{-1}\bar{\mathbf M} - {\mathbf A}, \quad f({\mathbf x}, {\mathbf p},t) := \tilde{f}({\mathbf x}, {\mathbf v}) = \tilde{f}({\mathbf x}, {\mathbf p} - {\mathbf A}) \},
\end{aligned}
\end{equation*}
which is a set of functionals determined by all smooth functionals about $\tilde{f}({\mathbf x}, \, {\mathbf v}), \, {\mathbf B}, \ n_i, \ {\mathbf V}$ with $\nabla \cdot {\mathbf B} = 0$.

In the simplified bracket~\eqref{eq:DPsim}, the functional derivative about $C$ does not appear, and when $C=0$ the resulting bracket is a Poisson bracket for the functionals in $\mathcal{D}$, which can be proved similarly as the Theorem~\ref{thm:relation} by going to the magnetic field formulation.

\begin{remark}\label{rm:DPmagnetic}
    For the magnetic field based formulations of kinetic Hall MHD hybrid model~\eqref{eq:hybridequations}, the Poisson brackets given in~\cite{DP} can be simplified via removing the terms involving the functional derivative of $n_e$, and replacing $n_e$ with $n_i + \int f \, \mathrm{d}{\mathbf v}$, which can also be obtained by the transformation using the Casimir function $C = n_e - n_i - \int f \, \mathrm{d}{\mathbf v}$ as remark~\ref{rm:darboux}. The obtained simplified bracket is a Poisson bracket under the condition that $\nabla \cdot {\mathbf B} = 0$.
\end{remark}


\begin{remark}
As Hall MHD is a subset of the hybrid model~\eqref{eq:hybridequations},
removing the terms related with the functional derivative of $n_e$ in the Poisson bracket proposed in~\cite{holm} can also be done. For the simplified bracket, the Jacobi identity  is satisfied by the functionals satisfying the condition 
$$
 -\nabla \cdot \frac{\delta K}{\delta {\mathbf A}} = \nabla \cdot \left( n_i \frac{\delta K}{\delta \bar{\mathbf M}} \right), \quad \forall K.
$$
The simplified magnetic field based Poisson bracket of Hall MHD has been proposed in~\cite{Abdelhamid}.
\end{remark}

\noindent{\bf Equivalence of the equations derived from the canonical brackets}
Here we prove that the equations derived from the two above brackets are equivalent. 
The following is the Hamiltonian proposed in~\cite{DP}.
\begin{equation}
\label{eq:h1}
\begin{aligned}
H_1(f({\mathbf x}, {\mathbf p}), n_i, \bar{\mathbf M}, {\mathbf A}, n_e) & = \int \frac{1}{2}n_i |n_i^{-1}\bar{\mathbf M} - {\mathbf A}|^2\,  \mathrm{d}{\mathbf x} + \int n_i \mathcal{U}(n_i)\,  \mathrm{d}{\mathbf x} + \int \frac{1}{2} |\nabla \times {\mathbf A}|^2\,  \mathrm{d}{\mathbf x} \\
& + \frac{1}{2}\int |{\mathbf p} - {\mathbf A}|^2f\,\mathrm{d}{\mathbf x} \mathrm{d}{\mathbf p} + \int n_e \mathcal{U}_e(n_e, |\nabla \times {\mathbf A}|)\,  \mathrm{d}{\mathbf x}.
\end{aligned}
\end{equation}
In~\eqref{eq:DPsim} when $C = 0$,
the simplified Poisson bracket for the functionals satisfying the condition~\eqref{eq:DPcondition} is 
\begin{equation}
\label{eq:canonical2}
\begin{aligned}
& \{F, G\}(n_i, \bar{\mathbf M}, {\mathbf A}, f({\mathbf x}, {\mathbf p}))  = \int \bar{\mathbf M} \cdot \left(\frac{\delta G}{\delta \bar{\mathbf M}} \cdot \nabla \frac{\delta F}{\delta \bar{\mathbf M}} - \frac{\delta F}{\delta \bar{\mathbf M}} \cdot \nabla \frac{\delta G}{\delta \bar{\mathbf M}} \right) \mathrm{d}{\mathbf x} + \int f \left[ \frac{\delta F}{\delta f}, \frac{\delta G}{\delta f} \right]_{xp}  \mathrm{d}{\mathbf x}\mathrm{d}{\mathbf p}\\
& + \int n_i \left(\frac{\delta G}{\delta \bar{\mathbf M}} \cdot \nabla \frac{\delta F}{\delta n_i} - \frac{\delta F}{\delta \bar{\mathbf M}} \cdot \nabla \frac{\delta G}{\delta n_i}  \right) \mathrm{d}{\mathbf x} - \int \frac{1}{n_e} \nabla \times {\mathbf A} \cdot \left( \frac{\delta F}{\delta {\mathbf A}}  \times \frac{\delta G}{\delta {\mathbf A}}  \right)  \mathrm{d}{\mathbf x}.
\end{aligned}
\end{equation}
Also in the Hamiltonian~\eqref{eq:h1}, we represent $n_e$ by $n_i$ and $f$ via the quasi-neutrality condition, and get 
\begin{equation}
\label{eq:h2}
\begin{aligned}
H_2(f({\mathbf x}, {\mathbf p}), n_i, \bar{\mathbf M}, {\mathbf A}) & = \int \frac{1}{2}n_i |n_i^{-1}\bar{\mathbf M} -  {\mathbf A}|^2\,  \mathrm{d}{\mathbf x} + \int n_i \mathcal{U}(n_i)\,  \mathrm{d}{\mathbf x}  + \int \frac{1}{2} |\nabla \times {\mathbf A}|^2\,  \mathrm{d}{\mathbf x}\\
& + \frac{1}{2}\int |{\mathbf p} -  {\mathbf A}|^2 f\, \mathrm{d}{\mathbf x} \mathrm{d}{\mathbf p} + \int \left( n_i +  \int f\, \mathrm{d}{\mathbf p} \right) \mathcal{U}_e(  n_i + \int f\, \mathrm{d}{\mathbf p}, |\nabla \times {\mathbf A}| ) \, \mathrm{d}{\mathbf x}.
\end{aligned}
\end{equation}

\begin{theorem}
    The equations derived from bracket~\eqref{eq:fluidxvpne} are equivalent to the equations derived from bracket~\eqref{eq:canonical2} with energy $H_1$~\eqref{eq:h1} and $H_2$~\eqref{eq:h2}, respectively.
\end{theorem}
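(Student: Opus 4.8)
The plan is to exhibit both Hamiltonian vector fields explicitly and to show that, although they differ in the gauge-dependent variables $(f({\mathbf x},{\mathbf p}),\bar{\mathbf M},{\mathbf A})$, they induce the same flow on the physical (gauge-invariant) fields, so that the two systems of equations coincide. First I would record the relation between the functional derivatives of $H_1$~\eqref{eq:h1} and $H_2$~\eqref{eq:h2}. Writing $\mu := \mathcal{U}_e + n_e\,\partial_{n_e}\mathcal{U}_e = \frac{\delta H_1}{\delta n_e}$ for the electron enthalpy and using $n_e = n_i + \int f\,\mathrm{d}{\mathbf p}$ with the chain rule, one finds
\begin{equation*}
\frac{\delta H_2}{\delta f} = \frac{\delta H_1}{\delta f} + \mu,\quad \frac{\delta H_2}{\delta n_i} = \frac{\delta H_1}{\delta n_i} + \mu,\quad \frac{\delta H_2}{\delta \bar{\mathbf M}} = \frac{\delta H_1}{\delta \bar{\mathbf M}},\quad \frac{\delta H_2}{\delta {\mathbf A}} = \frac{\delta H_1}{\delta {\mathbf A}},
\end{equation*}
because the electron energy enters $H_1$ only through the independent variable $n_e$ while in $H_2$ it is slaved to $n_i+\int f\,\mathrm{d}{\mathbf p}$; all other terms of $H_1,H_2$ are identical and $\mu$ is independent of ${\mathbf p}$.

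Next I would pass to the Casimir change of variables $C = n_e - n_i - \int f\,\mathrm{d}{\mathbf p}$ used to obtain \eqref{eq:DPsim}, under which the full bracket \eqref{eq:fluidxvpne} becomes \eqref{eq:DPsim} and $H_1$ becomes $\tilde H_1$ with $\frac{\delta \tilde H_1}{\delta C} = \frac{\delta H_1}{\delta n_e} = \mu$ and $\tilde H_1|_{C=0} = H_2$. Setting $C=0$, separating \eqref{eq:DPsim} into the four terms carrying a $\frac{\delta}{\delta C}$ and the terms that already constitute the reduced bracket \eqref{eq:canonical2}, and inserting $G=\tilde H_1$, I would obtain for any test functional $F$, writing $\{\,\cdot\,,\,\cdot\,\}^{\mathrm{red}}$ for the bracket \eqref{eq:canonical2} and $\{\,\cdot\,,\,\cdot\,\}$ for \eqref{eq:DPsim},
\begin{equation*}
\{F, \tilde H_1\}\big|_{C=0} = \{F, H_2\}^{\mathrm{red}} + \Big(-\int f\Big[\tfrac{\delta F}{\delta f}, \mu\Big]_{xp}\mathrm{d}{\mathbf x}\mathrm{d}{\mathbf p} + \int n_i\, \tfrac{\delta F}{\delta \bar{\mathbf M}}\cdot\nabla\mu\,\mathrm{d}{\mathbf x} + \int \tfrac{\delta F}{\delta {\mathbf A}}\cdot\nabla\mu\,\mathrm{d}{\mathbf x}\Big),
\end{equation*}
where the reduced part uses the identity $\frac{\delta\tilde H_1}{\delta\psi}|_{C=0} = \frac{\delta H_2}{\delta\psi}$, and the $\frac{\delta F}{\delta C}$ contributions assemble into the $n_e$-equation and vanish on the constraint since $C$ is a Casimir, so $\partial_t C = 0$ and quasi-neutrality is preserved.

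The crucial observation is that the parenthesised remainder is exactly the derivative of $F$ along the infinitesimal generator of the gauge action
$$\Phi_\psi(f, n_i, \bar{\mathbf M}, {\mathbf A}) = \big(f({\mathbf x}, {\mathbf p}-\nabla\psi),\, n_i,\, \bar{\mathbf M}+n_i\nabla\psi,\, {\mathbf A}+\nabla\psi\big)$$
evaluated at $\psi=\mu$ (the analogue for this model of the action \eqref{eq:actionxpa}), whose generator is $(\delta f,\delta n_i,\delta\bar{\mathbf M},\delta{\mathbf A}) = ([f,\mu]_{xp},\,0,\,n_i\nabla\mu,\,\nabla\mu)$; an integration by parts using the antisymmetry of $[\,\cdot\,,\,\cdot\,]_{xp}$ under $\int\mathrm{d}{\mathbf x}\mathrm{d}{\mathbf p}$ turns $\int\frac{\delta F}{\delta f}[f,\mu]_{xp}$ into $-\int f[\frac{\delta F}{\delta f},\mu]_{xp}$, matching the first term. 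Since every functional in $\mathcal{D}$ depends on the fields only through the gauge invariants ${\mathbf B}=\nabla\times{\mathbf A}$, ${\mathbf v}={\mathbf p}-{\mathbf A}$, $n_i$, and ${\mathbf V}=n_i^{-1}\bar{\mathbf M}-{\mathbf A}$, this directional derivative vanishes for all physical $F$, so $\{F,\tilde H_1\}|_{C=0} = \{F,H_2\}^{\mathrm{red}}$ on $\mathcal{D}$. Equivalently, the two Hamiltonian vector fields project to the same flow on physical variables: $\partial_t{\mathbf B}$ agrees because $\nabla\times\nabla\mu=0$, $\partial_t\tilde f$ agrees because the $[f,\mu]_{xp}$ difference between the two Vlasov operators cancels the $\nabla_{\mathbf p}f\cdot\nabla\mu$ produced by the mismatch $\nabla\mu$ in $\partial_t{\mathbf A}$, and $\partial_t{\mathbf V}$ agrees because $n_i^{-1}(n_i\nabla\mu)-\nabla\mu=0$.

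The main obstacle is purely organisational: correctly computing $\frac{\delta H_1}{\delta n_e}=\mu$ together with the chain-rule relations above, and then recognising the three leftover $\mu$-terms as a single gauge generator. This hinges on the integration by parts that rewrites $\int\frac{\delta F}{\delta f}[f,\mu]_{xp}$ in the form appearing in \eqref{eq:DPsim}, and on exploiting that $\mu$ is independent of ${\mathbf p}$, so that $[\,\cdot\,,\mu]_{xp} = -\nabla_{\mathbf p}(\cdot)\cdot\nabla\mu$. No term-by-term Jacobi-type cancellation is required here, since the equivalence follows structurally once the gauge interpretation is in place and $C$ is known to be a Casimir.
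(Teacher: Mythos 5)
Your proposal is correct in substance but follows a genuinely different route from the paper. The paper proves the theorem by direct computation: it writes out all equations of motion generated by~\eqref{eq:fluidxvpne} with $H_1$ and by~\eqref{eq:canonical2} with $H_2$, and then checks, one physical variable at a time, that ${\mathbf B}=\nabla\times{\mathbf A}$, $n_i$, the velocity-space distribution $F({\mathbf x},{\mathbf v})=f({\mathbf x},{\mathbf p})$, and the momentum $n_i{\mathbf V}=\bar{\mathbf M}-n_i{\mathbf A}$ obey identical equations, the momentum computation (culminating in~\eqref{eq:rv1} and~\eqref{eq:rv2}) being the longest step. You instead argue at the level of the brackets: passing to the variables behind~\eqref{eq:DPsim}, you show that $\{F,\tilde H_1\}|_{C=0}-\{F,H_2\}^{\mathrm{red}}$ equals the derivative of $F$ along the infinitesimal gauge transformation $(\delta f,\delta n_i,\delta\bar{\mathbf M},\delta{\mathbf A})=([f,\mu]_{xp},\,0,\,n_i\nabla\mu,\,\nabla\mu)$ with $\mu=\frac{\delta H_1}{\delta n_e}$, and that this derivative vanishes for every gauge-invariant functional, i.e.\ for every $F\in\mathcal{D}$. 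I checked the chain-rule relations between the derivatives of $H_1$ and $H_2$, the identification of the three leftover $\mu$-terms with the gauge generator (the integration by parts $\int \frac{\delta F}{\delta f}[f,\mu]_{xp}\,\mathrm{d}{\mathbf x}\mathrm{d}{\mathbf p}=-\int f[\frac{\delta F}{\delta f},\mu]_{xp}\,\mathrm{d}{\mathbf x}\mathrm{d}{\mathbf p}$, valid because $\mu$ is ${\mathbf p}$-independent), and the gauge invariance of the defining variables of $\mathcal{D}$; all are sound. Your route reaches the same conclusion with far less computation and, in addition, explains \emph{why} the gauge-dependent variables are allowed to disagree: the two flows differ exactly by a gauge flow with parameter $\mu$, which is precisely the discrepancy $\nabla\frac{\delta H_1}{\delta n_e}$ the paper observes in the ${\mathbf A}$-equation. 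The paper's approach, in exchange, produces the explicit fluid equations (e.g.~\eqref{eq:rv1}), which is of independent value.

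One slip should be repaired: you justify $\partial_t C=0$ by saying ``$C$ is a Casimir.'' For the canonical bracket~\eqref{eq:fluidxvpne} (equivalently~\eqref{eq:DPsim}) this is false: taking $F=\int\delta({\mathbf x}-{\mathbf x}_0)\,C\,\mathrm{d}{\mathbf x}$ one finds $\{F,G\}=\big[\nabla\cdot\big(\int f\,\frac{\partial}{\partial{\mathbf p}}\frac{\delta G}{\delta f}\,\mathrm{d}{\mathbf p}+n_i\frac{\delta G}{\delta\bar{\mathbf M}}+\frac{\delta G}{\delta{\mathbf A}}\big)\big]_{{\mathbf x}={\mathbf x}_0}$, which is nonzero for generic $G$ and vanishes exactly when $G$ satisfies~\eqref{eq:DPcondition}. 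So $C$ is the momentum map of the gauge action (the analogue of~\eqref{eq:mm} in the xpA discussion), not a Casimir; it becomes a Casimir only in the velocity/magnetic-field based formulations, a distinction the paper itself emphasizes. Its conservation therefore follows from the gauge invariance of $H_1$ --- the very invariance your argument already exploits --- so the conclusion you need is true and the fix is one sentence.
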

\begin{proof}
    From bracket~\eqref{eq:fluidxvpne} we have
    \begin{equation}
    \label{eq:rman1}
        \begin{aligned}
            &  \frac{\partial n_i}{\partial t} = - \nabla \cdot \left(n_i \frac{\delta H_1}{\delta \bar{\mathbf M}} \right) =  - \nabla \cdot \left( \bar{\mathbf M} - n_i {\mathbf A} \right) = - \nabla \cdot (n_i {\mathbf V}),\\
            &  \frac{\partial \bar{\mathbf M}_i}{\partial t} = \int \bar{\mathbf M} \cdot \left( \frac{\delta H_1}{\delta \bar{\mathbf M}} \cdot \nabla \delta_i \right) \mathrm{d}{\mathbf x} - \int \bar{\mathbf M} \cdot \left(\delta_i \cdot \nabla \frac{\delta H_1}{\delta \bar{\mathbf M}} \right) \mathrm{d}{\mathbf x} - n_i \frac{\partial}{\partial x_i} \frac{\delta H_1}{\delta n_i},\\
            & \frac{\partial {\mathbf A}}{\partial t} =  \nabla \frac{\delta H_1}{\delta n_e} + \frac{1}{n_e} \nabla \times {\mathbf A} \times \frac{\delta H_1}{\delta {\mathbf A}},\\
            & \frac{\partial n_e}{\partial t} =\nabla \cdot \frac{\delta H_1}{\delta {\mathbf A}}, \quad \frac{\partial f}{\partial t} = \int f \left[ \delta, \frac{\delta H_1}{\delta f}\right]_{xp} \mathrm{d}{\mathbf x} \mathrm{d}{\mathbf p},
        \end{aligned}
    \end{equation}
    where $\delta_i = \delta({\mathbf x} - {\mathbf x}'){\mathbf e}_i$, ${\mathbf e}_i$ is the $i$-th unit vector.\\
    From bracket~\eqref{eq:canonical2} we have 
    \begin{equation}
        \label{eq:rma2}
        \begin{aligned}
            & \frac{\partial n_i}{\partial t} = - \nabla \cdot \left(n_i \frac{\delta H_2}{\delta \bar{\mathbf M}} \right) =  - \nabla \cdot \left( \bar{\mathbf M} -  n_i {\mathbf A} \right) = - \nabla \cdot (n_i {\mathbf V}),\\
            &  \frac{\partial \bar{\mathbf M}_i}{\partial t}  = \int \bar{\mathbf M} \cdot \left( \frac{\delta H_2}{\delta \bar{\mathbf M}} \cdot \nabla \delta_i \right) \mathrm{d}{\mathbf x} - \int \bar{\mathbf M} \cdot \left(\delta_i \cdot \nabla \frac{\delta H_2}{\delta \bar{\mathbf M}} \right) \mathrm{d}{\mathbf x} - n_i \frac{\partial}{\partial x_i} \frac{\delta H_2}{\delta n_i},\\
            & \frac{\partial {\mathbf A}}{\partial t} = \frac{1}{n_e} \nabla \times {\mathbf A} \times \frac{\delta H_2}{\delta {\mathbf A}},\quad \frac{\partial f}{\partial t} = \int f \left[ \delta, \frac{\delta H_2}{\delta f}\right]_{xp} \mathrm{d}{\mathbf x} \mathrm{d}{\mathbf p},
        \end{aligned}
    \end{equation}
    where again $\delta_i = \delta({\mathbf x} - {\mathbf x}'){\mathbf e}_i$, ${\mathbf e}_i$ is the $i$-th unit vector.

    As the difference of the equations satisfied by vector potential ${\mathbf A}$ is that if there is a gradient function term $ \nabla \frac{\delta H_1}{\delta n_e} $, the equations satisfied by the magnetic field ${\mathbf B} =  \nabla \times {\mathbf A}$ are the same. 
    Also it is easy to see that the equation satisfied by $n_i$ is the same.

    As for the distribution function, the equation satisfied by $f$ in~\eqref{eq:rman1} is 
    \begin{equation}\label{eq:dis1}
        \frac{\partial f}{\partial t} + \left({\mathbf p} - {\mathbf A}\right)\cdot \frac{\partial f}{\partial {\mathbf x}} - \left(\frac{\partial {\mathbf A}}{\partial {\mathbf x}}\right)^{\top} ({\mathbf A} - {\mathbf p}) \cdot \frac{\partial f}{\partial {\mathbf p}} = 0.
    \end{equation}
    We define a new distribution function depending on velocity ${\mathbf v}$ via the relation ${\mathbf p} = {\mathbf v} - {\mathbf A}$ as 
    \begin{equation}\label{eq:defineF}
    F({\mathbf x}, {\mathbf v}, t) = f({\mathbf x}, {\mathbf p}, t).
    \end{equation}
    Then by plugging the following relations into  equation~\eqref{eq:dis1}
    $$
\frac{\partial f}{\partial t} = \frac{\partial F}{\partial t} - \frac{\partial {\mathbf A}}{\partial t} \cdot \frac{\partial F}{\partial {\mathbf v}}, \quad \frac{\partial f}{\partial {\mathbf x}} = \frac{\partial F}{\partial {\mathbf x}} - \frac{\partial {\mathbf A}}{\partial {\mathbf x}} \frac{\partial F}{\partial {\mathbf v}}, \quad \frac{\partial f}{\partial {\mathbf p}} =  \frac{\partial F}{\partial {\mathbf v}},
    $$
 we have 
 \begin{equation}\label{eq:dis2}
        \frac{\partial F}{\partial t} + {\mathbf v} \cdot \frac{\partial F}{\partial {\mathbf x}} - \frac{\partial {\mathbf A}}{\partial t} \cdot \frac{\partial F}{\partial {\mathbf v}} = 0.
    \end{equation}
    By the similar calculations for the distribution function of $f$ in~\eqref{eq:rma2}, we have 
     \begin{equation}\label{eq:dis3}
        \frac{\partial F}{\partial t} + {\mathbf v} \cdot \frac{\partial F}{\partial {\mathbf x}} - \left(\frac{\partial {\mathbf A}}{\partial t}  + \nabla \frac{\delta H_1}{\delta n_e} \right)\cdot \frac{\partial F}{\partial {\mathbf v}} = 0.
    \end{equation}
    equation~\eqref{eq:dis2} is the same as equation~\eqref{eq:dis3}, due to the difference between the $\frac{\partial {\mathbf A}}{\partial t}$ in~\eqref{eq:dis2} and~\eqref{eq:dis3} is $\nabla \frac{\delta H_1}{\delta n_e}$, which can be seen from the equations satisfied by the vector potential.

    Next we show the second equation in~\eqref{eq:rman1} and~\eqref{eq:rma2} give the same equation satisfied by $n_i{\mathbf V}:=\bar{\mathbf M} - n_i {\mathbf A}$. Note that the unknown $\bar{\mathbf M}$ in the above two equations is different, as the gauge of the vector potential chosen is different, which is involved in the definition of $\bar{\mathbf M}$.

   For the equation~\eqref{eq:rman1} derived from the bracket~\eqref{eq:fluidxvpne}, with the definition of $\bar{\mathbf M}$, we have 
    \begin{equation*}
    \begin{aligned}
        &\frac {\partial (n_i {\mathbf V})}{\partial t}  =  \frac{\partial \bar{\mathbf M}}{\partial t} -  \frac{\partial (n_i {\mathbf A})}{\partial t}, \\
        &  \frac{\partial \bar{\mathbf M}}{\partial t} =  \int (n_i{\mathbf V} + n_i {\mathbf A}) \cdot \left( \frac{\delta H_1}{\delta \bar{\mathbf M}} \cdot \nabla \sum_i \delta_i \right) \mathrm{d}{\mathbf x} - \int \bar{\mathbf M} \cdot \left(\sum_i \delta_i \cdot \nabla \frac{\delta H_1}{\delta \bar{\mathbf M}} \right) \mathrm{d}{\mathbf x} - n_i \nabla \frac{\delta H_1}{\delta n_i},\\
        & = \int (n_i{\mathbf V} + n_i {\mathbf A}) \cdot \left( {\mathbf V} \cdot \nabla \sum_i \delta_i \right) \mathrm{d}{\mathbf x} - \int (n_i{\mathbf V} + n_i {\mathbf A}) \cdot \left(\sum_i \delta_i \cdot \nabla {\mathbf V} \right) \mathrm{d}{\mathbf x} - n_i \nabla \frac{\delta H_1}{\delta n_i},\\
        & = \int n_i{\mathbf V} \cdot \left( {\mathbf V} \cdot \nabla \sum_i \delta_i \right) \mathrm{d}{\mathbf x} - \int n_i{\mathbf V} \cdot \left(\sum_i \delta_i \cdot \nabla {\mathbf V} \right) \mathrm{d}{\mathbf x} - n_i \nabla(-\frac{|{\mathbf V}|^2}{2} + \mathcal{U}(n_i) + n_i \mathcal{U}'(n_i)),\\
        & + \int  n_i {\mathbf A} \cdot \left( {\mathbf V} \cdot \nabla \sum_i \delta_i \right) \mathrm{d}{\mathbf x} - \int n_i {\mathbf A} \cdot \left(\sum_i \delta_i \cdot \nabla {\mathbf V} \right) \mathrm{d}{\mathbf x} +  n_i \nabla ({\mathbf V}\cdot{\mathbf A}),\\
        & = - \nabla \cdot (n_i {\mathbf V}) {\mathbf V} - n_i {\mathbf V} \cdot \nabla {\mathbf V} - n_i \nabla( \mathcal{U}(n_i) + n_i \mathcal{U}'(n_i)) + n_i {\mathbf V} \times \nabla \times {\mathbf A} - \nabla \cdot (n_i {\mathbf V}) {\mathbf A},\\
        &  \frac{\partial (n_i {\mathbf A})}{\partial t}  = 
 \frac{\partial n_i}{\partial t}  {\mathbf A}+ n_i\frac{\partial {\mathbf A}}{\partial t} 
  = -  \nabla \cdot (n_i {\mathbf V})  {\mathbf A}, \\ 
  & + n_i \left(  \nabla \frac{\delta H_1}{\delta n_e} + \frac{1}{n_e} \nabla \times {\mathbf A} \times (\nabla \times \nabla \times {\mathbf A} -  n_i {\mathbf V} - \int ({\mathbf p} -  {\mathbf A}) f\, \mathrm{d}{\mathbf p} ) - \nabla \times (\gamma \nabla \times {\mathbf A})\right).
    \end{aligned}
    \end{equation*}
    Then we obtain the equation satisfied by $n_i {\mathbf V}$ 
\begin{equation}
\label{eq:rv1}
    \begin{aligned}
        \frac{\partial (n_i {\mathbf V})}{\partial t}  &= - \nabla \cdot (n_i {\mathbf V}) {\mathbf V} - n_i {\mathbf V} \cdot \nabla {\mathbf V} - n_i \nabla( \mathcal{U}(n_i) + n_i \mathcal{U}'(n_i)) + n_i {\mathbf V} \times \nabla \times {\mathbf A}\\
        &  -  n_i \left(  \nabla \frac{\delta H_1}{\delta n_e} + \frac{1}{n_e} {\mathbf B} \times ( \nabla \times {\mathbf B} -  n_i {\mathbf V} -  \int({\mathbf p} -  {\mathbf A}) f\, \mathrm{d}{\mathbf p}) - \nabla \times (\gamma {\mathbf B})\right)
    \end{aligned}
\end{equation}
    Next for the equation~\eqref{eq:rma2} derived from the bracket~\eqref{eq:canonical2},
    we show that equations satisfied by $n_i {\mathbf V}$ is the same as~\eqref{eq:rv1}. By the similar calculations, we get 
     \begin{equation*}
    \begin{aligned}
         \frac{\partial (n_i {\mathbf V})}{\partial t}  &= \frac{\partial \bar{\mathbf M}}{\partial t} - \frac{\partial (n_i {\mathbf A})}{\partial t} \\
         \frac{\partial \bar{\mathbf M}}{\partial t} & = - \nabla \cdot (n_i {\mathbf V}) {\mathbf V} - n_i {\mathbf V} \cdot \nabla {\mathbf V}\\
         & - n_i \nabla\left( \mathcal{U}(n_i) + n_i \mathcal{U}'(n_i)   +  \frac{\delta H_1}{\delta n_e} \right) + n_i {\mathbf V} \times \nabla \times {\mathbf A} -  \nabla \cdot (n_i {\mathbf V}) {\mathbf A}\\
         \frac{\partial (n_i {\mathbf A})}{\partial t} &=   -  \nabla \cdot (n_i {\mathbf V})  {\mathbf A} \\
         & + n_i \left( \frac{1}{n_e} \nabla \times {\mathbf A} \times (\nabla \times \nabla \times {\mathbf A} -  n_i {\mathbf V} -  \int ({\mathbf p} -  {\mathbf A}) f\, \mathrm{d}{\mathbf p}) - \nabla \times (\gamma \nabla \times {\mathbf A})\right).
    \end{aligned}
    \end{equation*}
      Then we have 
\begin{equation}\label{eq:rv2}
    \begin{aligned}
        \frac{\partial (n_i {\mathbf V})}{\partial t}  & = - \nabla \cdot (n_i {\mathbf V}) {\mathbf V} - n_i {\mathbf V} \cdot \nabla {\mathbf V} - n_i \nabla( \mathcal{U}(n_i) + n_i \mathcal{U}'(n_i))  + n_i {\mathbf V} \times \nabla \times {\mathbf A}\\
        &  -  n_i \left( \nabla \frac{\delta H_1}{\delta n_e} + \frac{1}{n_e} {\mathbf B} \times ( \nabla \times {\mathbf B} - n_i {\mathbf V} -  \int ({\mathbf p} -  {\mathbf A}) f\, \mathrm{d}{\mathbf p} ) - \nabla \times (\gamma {\mathbf B})\right),
    \end{aligned}
\end{equation}
which is the same as~\eqref{eq:rv1}.
Note that the kinetic currents in~\eqref{eq:rv1} and~\eqref{eq:rv2} are equal, as $$\int ({\mathbf p} - {\mathbf A}) f \, \mathrm{d}{\mathbf p} = \int {\mathbf v} F({\mathbf x}, {\mathbf v}, t)\, \mathrm{d}{\mathbf v},$$ where $F$ is defined in~\eqref{eq:defineF}.
\end{proof}

\subsection{Kinetic-multifluid model}
For the two fluid model, removing the electron charge density via the Gauss' constraint  has been used in~\cite{burby}. As a direct generalization, here we consider the kinetic-multifluid model.
\begin{equation}
\label{eq:multifluidkinetic}
\begin{aligned}
    &\rho_i \frac{\partial {\mathbf u}_i}{\partial t} + \rho_i ({\mathbf u}_i \cdot \nabla){\mathbf u}_i = a_i \rho_i({\mathbf E} + {\mathbf u}_i \times {\mathbf B}) - \nabla p_i,\\
     &\rho_e \frac{\partial {\mathbf u}_e}{\partial t} + \rho_e ({\mathbf u}_e \cdot \nabla){\mathbf u}_e = a_e \rho_i({\mathbf E} + {\mathbf u}_e \times {\mathbf B}) - \nabla p_e,\\
     &\frac{\partial \rho_i}{\partial t} + \nabla \cdot (\rho_i {\mathbf u}_i) = 0,\quad \frac{\partial \rho_e}{\partial t} + \nabla \cdot (\rho_e {\mathbf u}_e) = 0,\\
     &\frac{\partial f}{\partial t} + \frac{\mathbf p}{m_h}\cdot \frac{\partial f}{\partial {\mathbf x}} + q_h \left( {\mathbf E} + \frac{\mathbf p}{m_h} \times {\mathbf B}\right) \cdot \frac{\partial f}{\partial {\mathbf p}} = 0,\\
    & \mu_0 \epsilon_0 \frac{\partial {\mathbf E}}{\partial t} = \nabla \times {\mathbf B} - \mu_0 (a_i \rho_i {\mathbf u}_i + a_e \rho_e {\mathbf u}_e) - \mu_0 a_h \int {\mathbf p} f \,\mathrm{d}{\mathbf p},\\
     &\frac{\partial {\mathbf B}}{\partial t} = - \nabla \times {\mathbf E},\quad \epsilon_0 \nabla \cdot {\mathbf E} = a_i \rho_i + a_e \rho_e + q_h \int f \,\mathrm{d}{\mathbf p}, \quad \nabla \cdot {\mathbf B} = 0,
\end{aligned}
\end{equation}
where we use the notations: the charge-to-mass ratio of the fluid $s$-th species $a_s = q_s / m_s$, mass density $\rho_s$, velocity ${\mathbf u}_s$, the distribution function $f({\mathbf x}, {\mathbf p}, t)$ of the ions depending on time $t$, position ${\mathbf x}$, and momentum ${\mathbf p}$, the kinetic ion mass $m_h$, kinetic ion charge $q_h$, electric field ${\mathbf E}$ , and magnetic field ${\mathbf B}$.

In~\cite{Tronci}, the following Poisson bracket is proposed, 
\begin{equation}
\label{eq:troncimulti}
\begin{aligned}
    \{ F, G \} & = \sum_{s = i, e} \int {\mathbf m}_s \cdot \left[ \frac{\delta F}{\delta {\mathbf m}_s}, \frac{\delta G}{\delta {\mathbf m}_s}\right] \mathrm{d}{\mathbf x} + \int f \left[ \frac{\delta F}{\delta f}, \frac{\delta G}{\delta f}\right]_{xp} \mathrm{d}{\mathbf x} \mathrm{d}{\mathbf p}\\
    & - \sum_{s = i, e} \int \rho_s \left(\frac{\delta F}{\delta {\mathbf m}_s} \cdot \nabla \frac{\delta G}{\delta \rho_s} - \frac{\delta G}{\delta {\mathbf m}_s} \cdot \nabla \frac{\delta F}{\delta \rho_s} \right) \mathrm{d}{\mathbf x}\\
    & + \sum_{s=i,e} \int a_s \rho_s \left( \frac{1}{\epsilon_0}\frac{\delta F}{\delta {\mathbf m}_s} \cdot \frac{\delta G}{\delta {\mathbf E}} -  \frac{1}{\epsilon_0}\frac{\delta G}{\delta {\mathbf m}_s} \cdot \frac{\delta F}{\delta {\mathbf E}} + {\mathbf B} \cdot  \frac{\delta F}{\delta {\mathbf m}_s} \times  \frac{\delta G}{\delta {\mathbf m}_s}   \right)  \mathrm{d}{\mathbf x}\\
    & + q_h \int f \left(\frac{1}{\epsilon_0} \frac{\partial }{\partial {\mathbf p}} \frac{\delta F}{\delta f} \cdot \frac{\delta G}{\delta {\mathbf E}}  -  \frac{1}{\epsilon_0}\frac{\partial }{\partial {\mathbf p}} \frac{\delta G}{\delta f} \cdot \frac{\delta F}{\delta {\mathbf E}}  + {\mathbf B} \cdot \left(\frac{\partial }{\partial {\mathbf p}} \frac{\delta F}{\delta f} \times \frac{\partial }{\partial {\mathbf p}} \frac{\delta G}{\delta f}  \right)  \right)  \mathrm{d}{\mathbf x} \mathrm{d}{\mathbf p}\\
    & + \frac{1}{\epsilon_0} \int \left( \frac{\delta F}{\delta {\mathbf E}} \cdot \nabla \times \frac{\delta G}{\delta {\mathbf B}} -   \frac{\delta G}{\delta {\mathbf E}} \cdot \nabla \times \frac{\delta F}{\delta {\mathbf B}}      \right)  \mathrm{d}{\mathbf x},
\end{aligned}
\end{equation}
where ${\mathbf m}_s = \rho_s {\mathbf u}_s$. Hamiltonian is 
\begin{equation}
\label{eq:multihamil}
    \begin{aligned}
        \frac{1}{2} \sum_{s = i, e} \int \frac{|{\mathbf m}_s|^2}{\rho_s} \mathrm{d}{\mathbf x} + \frac{1}{2m_h} \int f |{\mathbf p}|^2\, \mathrm{d}{\mathbf x} \mathrm{d}{\mathbf p} + \sum_{s=i,e} \int \rho_s \mathcal{U}(\rho_s)\, \mathrm{d}{\mathbf x} + \frac{\epsilon_0}{2} \int |{\mathbf E}|^2 \mathrm{d}{\mathbf x} + \frac{1}{2\mu_0} \int |{\mathbf B}|^2 \mathrm{d}{\mathbf x}.
    \end{aligned}
\end{equation}

By the constraint of the model, i.e.,
$
\epsilon_0 \nabla \cdot {\mathbf E} = a_i \rho_i + a_e \rho_e + q_h \int f\, \mathrm{d}{\mathbf p}
$,
we have 
$$
\rho_e = a_e^{-1} \left( \epsilon_0 \nabla \cdot {\mathbf E}  -  a_i \rho_i  -  q_h \int f \,\mathrm{d}{\mathbf p} \right),
$$
based on which we can define the following the transformation and get a new bracket,
$$
\left(\rho_i, \rho_e, {\mathbf m}_i, {\mathbf m}_e, f, {\mathbf E}, {\mathbf B} \right) \rightarrow \left(\rho_i, C = \rho_e  - a_e^{-1} \left( \epsilon_0 \nabla \cdot {\mathbf E}  -  a_i \rho_i  -  q_h \int f\, \mathrm{d}{\mathbf p} \right), {\mathbf m}_i, {\mathbf m}_e, f, {\mathbf E}, {\mathbf B}  \right).
$$
Any smooth functional $\tilde{F}$ about $\left(\rho_i, C, {\mathbf m}_i, {\mathbf m}_e, f, {\mathbf E}, {\mathbf B}  \right)$ can be considered as a functional $F$ about  $\left(\rho_i, \rho_e, {\mathbf m}_i, {\mathbf m}_e, f, {\mathbf E}, {\mathbf B} \right)$, and we have the following relations of the functional derivatives,
\begin{equation}
\begin{aligned}
&\frac{\delta F}{\delta f} = \frac{\delta \tilde{F}}{\delta f} + a_e^{-1}q_h\frac{\delta \tilde{F}}{\delta C}, \, \frac{\delta F}{\delta \rho_i} = \frac{\delta \tilde{F}}{\delta \rho_i} + a_e^{-1}a_i\frac{\delta \tilde{F}}{\delta C},\, \frac{\delta F}{\delta \rho_e} = \frac{\delta \tilde{F}}{\delta C}, \, \frac{\delta F}{\delta {\mathbf m}_i} = \frac{\delta \tilde{F}}{\delta {\mathbf m}_i},\\
& \frac{\delta F}{\delta {\mathbf m}_e} = \frac{\delta \tilde{F}}{\delta {\mathbf m}_e},\,
\frac{\delta F}{\delta {\mathbf B}} = \frac{\delta \tilde{F}}{\delta {\mathbf B}}, \,\frac{\delta F}{\delta {\mathbf E}} = \frac{\delta \tilde{F}}{\delta {\mathbf E}} + a_e^{-1}\epsilon_0\nabla \frac{\delta \tilde{F}}{\delta C}.
\end{aligned}
\end{equation}
Plugging the above relations into the bracket~\eqref{eq:troncimulti}, we get the simplified Poisson racket (we omit the tilde here for convenience), 
\begin{equation}
\label{eq:bracketkl}
\begin{aligned}
    \{ F, G \}_l & = \sum_{s = i, e} \int {\mathbf m}_s \cdot \left[ \frac{\delta F}{\delta {\mathbf m}_s}, \frac{\delta G}{\delta {\mathbf m}_s}\right] \mathrm{d}{\mathbf x} + \int f \left[ \frac{\delta F}{\delta f}, \frac{\delta G}{\delta f}\right]_{xp} \mathrm{d}{\mathbf x} \mathrm{d}{\mathbf p}\\
    & - \int \rho_i \left(\frac{\delta F}{\delta {\mathbf m}_i} \cdot \nabla \frac{\delta G}{\delta \rho_i} - \frac{\delta G}{\delta {\mathbf m}_i} \cdot \nabla \frac{\delta F}{\delta \rho_i} \right) \mathrm{d}{\mathbf x}\\
    & + \sum_{s=i,e} \int a_s \rho_s \left( \frac{1}{\epsilon_0}\frac{\delta F}{\delta {\mathbf m}_s} \cdot \frac{\delta G}{\delta {\mathbf E}} -  \frac{1}{\epsilon_0}\frac{\delta G}{\delta {\mathbf m}_s} \cdot \frac{\delta F}{\delta {\mathbf E}} + {\mathbf B} \cdot  \frac{\delta F}{\delta {\mathbf m}_s} \times  \frac{\delta G}{\delta {\mathbf m}_s}   \right)  \mathrm{d}{\mathbf x}\\
    & + q_h \int f \left(\frac{1}{\epsilon_0} \frac{\partial }{\partial {\mathbf p}} \frac{\delta F}{\delta f} \cdot \frac{\delta G}{\delta {\mathbf E}}  -  \frac{1}{\epsilon_0}\frac{\partial }{\partial {\mathbf p}} \frac{\delta G}{\delta f} \cdot \frac{\delta F}{\delta {\mathbf E}}  + {\mathbf B} \cdot \left(\frac{\partial }{\partial {\mathbf p}} \frac{\delta F}{\delta f} \times \frac{\partial }{\partial {\mathbf p}} \frac{\delta G}{\delta f}  \right)  \right)  \mathrm{d}{\mathbf x} \mathrm{d}{\mathbf p}\\
    & + \frac{1}{\epsilon_0} \int \left( \frac{\delta F}{\delta {\mathbf E}} \cdot \nabla \times \frac{\delta G}{\delta {\mathbf B}} -   \frac{\delta G}{\delta {\mathbf E}} \cdot \nabla \times \frac{\delta F}{\delta {\mathbf B}}      \right)  \mathrm{d}{\mathbf x}.
\end{aligned}
\end{equation}
We can see in~\eqref{eq:bracketkl} there is no term involving $C$ and the functional derivative about $C$, so $C$ can be regarded as a parameter. 

We plug this relation $
\rho_e = a_e^{-1} \left( \epsilon_0 \nabla \cdot {\mathbf E}  -  a_i \rho_i  -  q_h \int f\, \mathrm{d}{\mathbf p} \right)
$ into the Hamiltonian~\eqref{eq:multihamil} and get a new Hamiltonian which does not depend on $\rho_e$. The hybrid model~\eqref{eq:multifluidkinetic} without the equation about $\rho_e$ and without the constraint $\epsilon_0 \nabla \cdot {\mathbf E} = a_i \rho_i + a_e \rho_e + q_h \int f\, \mathrm{d}{\mathbf p}$ can be derived using the new bracket~\eqref{eq:bracketkl} and new Hamiltonian. Numerically there is no need to consider about preserving the constraint, 
$
\epsilon_0 \nabla \cdot {\mathbf E} = a_i \rho_i + a_e \rho_e + q_h \int f\, \mathrm{d}{\mathbf p},
$
which is built in and holds automatically.

\section{Conclusion}
In this work, we investigate the quasi-neutrality condition of the hybrid model with kinetic ions and massless electrons. This condition corresponds to a momentum map for the canonical momentum based formulation and a Casimir functional for the velocity based formulations. Moreover, we obtain the conditions of the Jacobi identity of the three simplified Poisson brackets given in~\cite{LHPS, LHPS2}. A direct proof of the Jacobi identity of the magnetic field based bracket~\cite{LHPS} is given. 
Generalizations to some plasma models including the kinetic Hall MHD hybrid model are also conducted.

\section{Appendix}\label{sec:appendix}
In this appendix, we show more details of the Theorem~\ref{theorem:bracket}. Specifically, we present the detailed proof of the sum of the permutation of each term in~\eqref{eq:each} is zero.
In the following $'\text{cyc}'$ denotes the terms by permutation of $F, G$, and $K$.
For convenience, we use the following notations,
\begin{equation}
\begin{aligned}
&{\mathbf F} = \int f \frac{\partial}{\partial {\mathbf v}} \frac{\delta F}{\delta f} \mathrm{d}{\mathbf v}, \quad {\mathbf G} = \int f \frac{\partial}{\partial {\mathbf v}} \frac{\delta G}{\delta f} \mathrm{d}{\mathbf v}, \quad {\mathbf K} = \int f \frac{\partial}{\partial {\mathbf v}} \frac{\delta K}{\delta f} \mathrm{d}{\mathbf v},\quad
{\mathbf F}^A = \nabla \times \frac{\delta F}{\delta {\mathbf B}},\\
&{\mathbf G}^A = \nabla \times \frac{\delta G}{\delta {\mathbf B}}, \quad {\mathbf K}^A = \nabla \times \frac{\delta K}{\delta {\mathbf B}},\quad F_f = \frac{\delta F}{\delta f},\quad G_f = \frac{\delta G}{\delta f},\quad K_f = \frac{\delta K}{\delta f}.
\end{aligned}
\end{equation}
The Lemma 2 in~\cite{lifting} and the following three vector identities are used in many places,
\begin{equation}\label{eq:idcurl}
\nabla \times ({\mathbf X} \times {\mathbf Y}) = {\mathbf X}(\nabla \cdot {\mathbf Y}) -{\mathbf Y}(\nabla \cdot {\mathbf X}) + ({\mathbf Y}\cdot \nabla){\mathbf X} - ({\mathbf X}\cdot \nabla){\mathbf Y}, 
\end{equation}
\begin{equation}\label{eq:idwai}
{\mathbf Z} \times ({\mathbf X} \times {\mathbf Y}) =({\mathbf Z} \cdot {\mathbf Y}) {\mathbf X} - ({\mathbf Z} \cdot {\mathbf X}) {\mathbf Y},
\end{equation}
\begin{equation}\label{eq:iddiv}
\nabla \cdot ({\mathbf X} \times {\mathbf Y}) = (\nabla \times {\mathbf X}) \cdot {\mathbf Y} - (\nabla \times {\mathbf Y}) \cdot {\mathbf X},
\end{equation}
 where ${\mathbf X}, {\mathbf Y}, {\mathbf Z}$ are three vector fields defined on $\mathbb{R}^3$.

\noindent{\bf The first term}\\ 
$\{\{F, G\}, K\}_{fff}^{f} + \text{cyc} = \int f \left[ \frac{\delta F}{\delta f}, \frac{\delta G}{\delta f}  \right]_{xv} \mathrm{d}{\mathbf x} \mathrm{d}{\mathbf v} + \text{cyc} = 0$, because $ \int f \left[ \frac{\delta F}{\delta f}, \frac{\delta G}{\delta f}  \right]_{xv} \mathrm{d}{\mathbf x} \mathrm{d}{\mathbf v}$ is a Lie--Poisson bracket~\cite{morrisonvm,MW}. 

\noindent{\bf The second term} 
\begin{small}
\begin{equation}
\label{eq:fff-fB}
    \begin{aligned}
        & \{\{F, G\}, K\}_{fff}^{fB} + \text{cyc} \\
        & = \int f \left[ {\mathbf B} \cdot \left( \frac{\partial}{\partial {\mathbf v}}\frac{\delta F}{\delta f} \times \frac{\partial}{\partial {\mathbf v}}\frac{\delta G}{\delta f}\right), \frac{\delta K}{\delta f}\right]_{xv} \mathrm{d}{\mathbf x} \mathrm{d}{\mathbf v} + \int f {\mathbf B} \cdot \left(\frac{\partial}{\partial {\mathbf v}} \left[ \frac{\delta F}{\delta f}, \frac{\delta G}{\delta f} \right]_{xv} \times \frac{\partial}{\partial {\mathbf v}}\frac{\delta K}{\delta f} \right)\mathrm{d}{\mathbf x} \mathrm{d}{\mathbf v} + \text{cyc} \\
        & = \underbrace{\int f \frac{\partial}{\partial {\mathbf v}}\frac{\delta K}{\delta f} \cdot \nabla {\mathbf B} \cdot \left( \frac{\partial}{\partial {\mathbf v}}\frac{\delta F}{\delta f} \times \frac{\partial}{\partial {\mathbf v}}\frac{\delta G}{\delta f} \right)  \mathrm{d}{\mathbf x} \mathrm{d}{\mathbf v}}_{\text{Term}\ 1} + \underbrace{\int f \frac{\partial}{\partial {\mathbf v}}\frac{\delta K}{\delta f} \cdot \nabla \left( \frac{\partial}{\partial {\mathbf v}}\frac{\delta F}{\delta f} \times \frac{\partial}{\partial {\mathbf v}}\frac{\delta G}{\delta f} \right) \cdot {\mathbf B}\,  \mathrm{d}{\mathbf x} \mathrm{d}{\mathbf v}}_{\text{Term}\,2} \\
        & \underbrace{- \int f \frac{\partial}{\partial {\mathbf x}}\frac{\delta K}{\delta f} \cdot \frac{\partial}{\partial {\mathbf v}} \left( \frac{\partial}{\partial {\mathbf v}}\frac{\delta F}{\delta f} \times \frac{\partial}{\partial {\mathbf v}}\frac{\delta G}{\delta f} \right) \cdot {\mathbf B} \,\mathrm{d}{\mathbf x} \mathrm{d}{\mathbf v}}_{\text{Term}\,3}  + \underbrace{\int f {\mathbf B} \cdot \left(\frac{\partial}{\partial {\mathbf v}} \left[ \frac{\delta F}{\delta f}, \frac{\delta G}{\delta f} \right]_{xv} \times \frac{\partial}{\partial {\mathbf v}}\frac{\delta K}{\delta f} \right)\mathrm{d}{\mathbf x} \mathrm{d}{\mathbf v}}_{\text{Term}\,4} + \text{cyc},
    \end{aligned}
\end{equation}
\end{small}
where we use the notation that $M \cdot \mathcal{D} N \cdot K = M_j \mathcal{D}_j N_i K_i$, $M, N, K$ are three vector fields in $\mathbb{R}^3$, and $\mathcal{D}$ is $\nabla$ or $\frac{\partial}{\partial {\mathbf v}}$. 

For the Term 4 $+ \, \text{cyc}$, we have
\begin{small}
\begin{equation*}
    \begin{aligned}
        & \int f {\mathbf B} \cdot \left(\frac{\partial}{\partial {\mathbf v}} \left[ \frac{\delta F}{\delta f}, \frac{\delta G}{\delta f} \right]_{xv} \times \frac{\partial}{\partial {\mathbf v}}\frac{\delta K}{\delta f} \right)\mathrm{d}{\mathbf x} \mathrm{d}{\mathbf v} + \text{cyc}\\
        & = \underbrace{\int f {\mathbf B} \cdot \left( \left(\frac{\partial}{\partial {\mathbf v}}\frac{\partial}{\partial \underline{\mathbf x}}\frac{\delta F}{\delta f} \cdot \frac{\partial}{\partial \underline{\mathbf v}} \frac{\delta G}{\delta f} \right)\times \frac{\partial}{\partial {\mathbf v}}\frac{\delta K}{\delta f} \right)\mathrm{d}{\mathbf x} \mathrm{d}{\mathbf v}}_{c} + \underbrace{\int f {\mathbf B} \cdot \left( \left( \frac{\partial}{\partial \underline{\mathbf x}}\frac{\delta F}{\delta f} \cdot \frac{\partial}{\partial {\mathbf v}} \frac{\partial}{\partial \underline{\mathbf v}} \frac{\delta G}{\delta f}\right)\times \frac{\partial}{\partial {\mathbf v}}\frac{\delta K}{\delta f} \right)\mathrm{d}{\mathbf x} \mathrm{d}{\mathbf v}}_{a}\\
        & \underbrace{- \int f {\mathbf B} \left(\cdot \left( \frac{\partial}{\partial {\mathbf v}}\frac{\partial}{\partial \underline{\mathbf v}} \frac{\delta F}{\delta f} \cdot \frac{\partial}{\partial \underline{\mathbf x}} \frac{\delta G}{\delta f} \right) \times \frac{\partial }{\partial {\mathbf v}} \frac{\delta K}{\delta f}\right) \mathrm{d}{\mathbf x} \mathrm{d}{\mathbf v}}_{b} \underbrace{- \int f {\mathbf B} \cdot \left(\left( \frac{\partial}{\partial \underline{\mathbf v}} \frac{\delta F}{\delta f} \cdot \frac{\partial}{\partial {\mathbf v}}\frac{\partial}{\partial \underline{\mathbf x}} \frac{\delta G}{\delta f} \right) \times \frac{\partial }{\partial {\mathbf v}} \frac{\delta K}{\delta f}\right) \mathrm{d}{\mathbf x} \mathrm{d}{\mathbf v}}_{d}\\
        & + \text{cyc},
    \end{aligned}
\end{equation*}
\end{small}
where the functions/operators with/without the underline symbol are calculated or contracted together, the same rules are also used in the remaining parts of this work.
For example, in the term $c$ above, the $j$-th component of the $\frac{\partial}{\partial {\mathbf v}}\frac{\partial}{\partial \underline{\mathbf x}}\frac{\delta F}{\delta f} \cdot \frac{\partial}{\partial \underline{\mathbf v}} \frac{\delta G}{\delta f} $ is 
$$
\frac{\partial}{\partial v_j}\frac{\partial}{\partial \underline{\mathbf x}}\frac{\delta F}{\delta f} \cdot \frac{\partial}{\partial \underline{\mathbf v}} \frac{\delta G}{\delta f} = \sum_{i=1}^3\frac{\partial^2}{\partial v_j \partial x_i}\frac{\delta F}{\delta f}\frac{\partial}{\partial v_i}\frac{\delta G}{\delta f} 
$$

For the Term 2 $+ \, \text{cyc}$, by the permutation of  $F, G, K$, we have
\begin{small}
\begin{equation*}
    \begin{aligned}
   & \underbrace{\int f \frac{\partial}{\partial {\mathbf v}}\frac{\delta F}{\delta f} \cdot \left(\frac{\partial}{\partial {\mathbf x}}\frac{\partial}{\partial \underline{\mathbf v}} \frac{\delta G}{\delta f} \times \frac{\partial}{\partial \underline{\mathbf v}}\frac{\delta K}{\delta f}\right) \cdot \underline{\mathbf B}\mathrm{d}{\mathbf x} \mathrm{d}{\mathbf v}}_{d}  +\underbrace{ \int f \frac{\partial}{\partial {\mathbf v}}\frac{\delta G}{\delta f} \cdot \left(\frac{\partial}{\partial \underline{\mathbf v}} \frac{\delta K}{\delta f} \times \frac{\partial}{\partial {\mathbf x}}\frac{\partial}{\partial \underline{\mathbf v}}\frac{\delta F}{\delta f}\right) \cdot \underline{\mathbf B} \mathrm{d}{\mathbf x} \mathrm{d}{\mathbf v}}_{c} + \text{cyc}.
    \end{aligned} 
\end{equation*}
\end{small}
For the Term 3 $+ \text{cyc}$, by the permutation of  $F, G, K$, we have
\begin{small}
\begin{equation*}
    \begin{aligned}
   & \underbrace{-\int f \frac{\partial}{\partial {\mathbf x}}\frac{\delta F}{\delta f} \cdot \left(\frac{\partial}{\partial {\mathbf v}}\frac{\partial}{\partial \underline{\mathbf v}} \frac{\delta G}{\delta f} \times \frac{\partial}{\partial \underline{\mathbf v}}\frac{\delta K}{\delta f}\right) \cdot \underline{\mathbf B}\mathrm{d}{\mathbf x} \mathrm{d}{\mathbf v}}_{a}  \underbrace{- \int f \frac{\partial}{\partial {\mathbf x}}\frac{\delta G}{\delta f} \cdot \left(\frac{\partial}{\partial \underline{\mathbf v}} \frac{\delta K}{\delta f} \times\frac{\partial}{\partial {\mathbf v}} \frac{\partial}{\partial \underline{\mathbf v}}\frac{\delta F}{\delta f}\right) \cdot \underline{\mathbf B} \mathrm{d}{\mathbf x} \mathrm{d}{\mathbf v}}_{b} + \text{cyc}.
    \end{aligned} 
    \end{equation*}
\end{small}
 The terms with the same label cancel with each other. 
 
 By Lemma 2 in~\cite{lifting}, we have the sum of the permutation of the Term 1 as 
\begin{equation*}
\int f (\nabla \cdot {\mathbf B}) \frac{\partial}{\partial {\mathbf v}}\frac{\delta K}{\delta f} \cdot  \left(\frac{\partial}{\partial {\mathbf v}}\frac{\delta F}{\delta f} \times \frac{\partial}{\partial {\mathbf v}}\frac{\delta G}{\delta f} \right) \mathrm{d}{\mathbf x} \mathrm{d}{\mathbf v},
\end{equation*}
which gives that $\{\{F, G\}, K\}_{fff}^{fB} + \text{cyc} = 0$ under the condition that $\nabla \cdot {\mathbf B} = 0$.

\noindent{\bf The third term}  
\begin{small}
\begin{equation*}
    \begin{aligned}
        & \{\{F, G\}, K\}_{fff}^{ffB/n} + \text{cyc} \\
        & = - \int f \left[ \frac{\mathbf B}{n} \cdot \left( \frac{\partial}{\partial {\mathbf v}}\frac{\delta F}{\delta f} \times \int f \frac{\partial}{\partial {\mathbf v}}\frac{\delta G}{\delta f} \mathrm{d}{\mathbf v}\right), \frac{\delta K}{\delta f}\right]_{xv} \mathrm{d}{\mathbf x} \mathrm{d}{\mathbf v} \\
        &- \int f \left[ \frac{\mathbf B}{n} \cdot \left( \int f \frac{\partial}{\partial {\mathbf v}}\frac{\delta F}{\delta f} \mathrm{d}{\mathbf v} \times \frac{\partial}{\partial {\mathbf v}}\frac{\delta G}{\delta f} \right), \frac{\delta K}{\delta f}\right]_{xv} \mathrm{d}{\mathbf x} \mathrm{d}{\mathbf v}\\
        & - \int \frac{f}{n} {\mathbf B} \cdot \left(\frac{\partial}{\partial {\mathbf v}}\frac{\delta K}{\delta f} \times \nabla \times \int f \left( \frac{\partial}{\partial {\mathbf v}}\frac{\delta F}{\delta f} \times  \frac{\partial}{\partial {\mathbf v}} \frac{\delta G}{\delta f} \right) \mathrm{d}{\mathbf v}  \right)\mathrm{d}{\mathbf x} \mathrm{d}{\mathbf v} \\
        & - \int \frac{\mathbf B}{n} \left( \int f \frac{\partial}{\partial {\mathbf v}}\left[ \frac{\delta F}{\delta f}, \frac{\delta G}{\delta f} \right]_{xv}  \mathrm{d}{\mathbf v} \times \int f \frac{\partial}{\partial {\mathbf v}}\frac{\delta K}{\delta f} \mathrm{d}{\mathbf v} \right) \mathrm{d}{\mathbf x}+ \text{cyc} \\
        & = \int \left( \nabla f \cdot \frac{\partial}{\partial {\mathbf v}} \frac{\delta K}{\delta f} + f \nabla \cdot \frac{\partial}{\partial {\mathbf v}} \frac{\delta K}{\delta f} \right) \frac{\mathbf B}{n}\cdot \left( \frac{\partial}{\partial {\mathbf v}} \frac{\delta F}{\delta f} \times \int f \frac{\partial}{\partial {\mathbf v}} \frac{\delta G}{\delta f} \mathrm{d}{\mathbf v}\right) \mathrm{d}{\mathbf v}\mathrm{d}{\mathbf x}\\
        & + \int \frac{f}{n} \nabla \frac{\delta K}{\delta f} \cdot \frac{\partial}{\partial {\mathbf v}} \left( \frac{\partial}{\partial {\mathbf v}} \frac{\delta F}{\delta f} \times \int f \frac{\partial}{\partial {\mathbf v}} \frac{\delta G}{\delta f} \mathrm{d}{\mathbf v}\right) \cdot {\mathbf B} \mathrm{d}{\mathbf v}\mathrm{d}{\mathbf x}\\
        & + \int \left( \nabla f \cdot \frac{\partial}{\partial {\mathbf v}} \frac{\delta K}{\delta f} + f \nabla \cdot \frac{\partial}{\partial {\mathbf v}} \frac{\delta K}{\delta f} \right)  \frac{\mathbf B}{n} \cdot\left(  \int f \frac{\partial}{\partial {\mathbf v}} \frac{\delta F}{\delta f} \mathrm{d}{\mathbf v}\times \frac{\partial}{\partial {\mathbf v}} \frac{\delta G}{\delta f} \right) \mathrm{d}{\mathbf v}\mathrm{d}{\mathbf x}\\
        & + \int \frac{f}{n} \nabla \frac{\delta K}{\delta f} \cdot \frac{\partial}{\partial {\mathbf v}} \left(  \int f \frac{\partial}{\partial {\mathbf v}} \frac{\delta F}{\delta f} \mathrm{d}{\mathbf v}\times \frac{\partial}{\partial {\mathbf v}} \frac{\delta G}{\delta f} \right) \cdot {\mathbf B} \mathrm{d}{\mathbf v}\mathrm{d}{\mathbf x}\\
         & - \int \frac{f}{n} {\mathbf B} \cdot \left(\frac{\partial}{\partial {\mathbf v}}\frac{\delta K}{\delta f} \times  \int \nabla f \times \left( \frac{\partial}{\partial {\mathbf v}}\frac{\delta F}{\delta f} \times  \frac{\partial}{\partial {\mathbf v}} \frac{\delta G}{\delta f} \right) + f \nabla \times \left( \frac{\partial}{\partial {\mathbf v}}\frac{\delta F}{\delta f} \times  \frac{\partial}{\partial {\mathbf v}} \frac{\delta G}{\delta f} \right)
      \mathrm{d}{\mathbf v}  \right)\mathrm{d}{\mathbf x} \mathrm{d}{\mathbf v} \\
        & - \int \frac{\mathbf B}{n} \left( \int f \frac{\partial}{\partial {\mathbf v}}\left[ \frac{\delta F}{\delta f}, \frac{\delta G}{\delta f} \right]_{xv}  \mathrm{d}{\mathbf v} \times \int f \frac{\partial}{\partial {\mathbf v}}\frac{\delta K}{\delta f} \mathrm{d}{\mathbf v} \right) \mathrm{d}{\mathbf x}+ \text{cyc}\\
        & = \int f \int f \frac{\partial}{\partial {\mathbf v}}\frac{\delta K}{\delta f} \mathrm{d}{\mathbf v} \times \frac{\mathbf B}{n} \cdot \text{Term 2} \ \mathrm{d}{\mathbf x}\mathrm{d}{\mathbf v} + \int \int f \frac{\partial}{\partial {\mathbf v}}\frac{\delta K}{\delta f} \mathrm{d}{\mathbf v} \times \frac{\mathbf B}{n} \cdot \text{Term 3} \ \mathrm{d}{\mathbf x}\mathrm{d}{\mathbf v} + \text{cyc},
    \end{aligned}
\end{equation*}
\end{small}
where $\text{Term 3} = \frac{\partial}{\partial {\mathbf v}}\frac{\delta G}{\delta f} \nabla f \cdot \frac{\partial}{\partial {\mathbf v}}\frac{\delta F}{\delta f} - \frac{\partial}{\partial {\mathbf v}}\frac{\delta F}{\delta f} \nabla f \cdot \frac{\partial}{\partial {\mathbf v}}\frac{\delta G}{\delta f} + \nabla f \times \left(\frac{\partial}{\partial {\mathbf v}}\frac{\delta F}{\delta f} \times \frac{\partial}{\partial {\mathbf v}}\frac{\delta G}{\delta f}\right) = 0$ with the help of the formula~\eqref{eq:idwai}.
As for Term 2, we have 
\begin{small}
\begin{equation*}
\begin{aligned}
    \text{Term 2} & = \frac{\partial}{\partial {\mathbf v}}\frac{\delta G}{\delta f} \nabla \cdot \frac{\partial}{\partial {\mathbf v}}\frac{\delta F}{\delta f} + \frac{\partial}{\partial {\mathbf x}}\frac{\delta F}{\delta f}\cdot\frac{\partial}{\partial {\mathbf v}}\left(\frac{\partial}{\partial {\mathbf v}}\frac{\delta G}{\delta f}\right) - \frac{\partial}{\partial {\mathbf v}}\frac{\delta F}{\delta f} \nabla \cdot \frac{\partial}{\partial {\mathbf v}}\frac{\delta G}{\delta f} - \frac{\partial}{\partial {\mathbf x}}\frac{\delta G}{\delta f}\cdot\frac{\partial}{\partial {\mathbf v}}\left(\frac{\partial}{\partial {\mathbf v}}\frac{\delta F}{\delta f}\right)\\
    & + \nabla  \times \left(\frac{\partial}{\partial {\mathbf v}}\frac{\delta F}{\delta f} \times \frac{\partial}{\partial {\mathbf v}}\frac{\delta G}{\delta f}\right) - \frac{\partial}{\partial {\mathbf v}} \left( \frac{\partial}{\partial {\mathbf x}}\frac{\delta F}{\delta f} \cdot \frac{\partial}{\partial {\mathbf v}}\frac{\delta G}{\delta f} - \frac{\partial}{\partial {\mathbf v}}\frac{\delta F}{\delta f} \cdot \frac{\partial}{\partial {\mathbf x}}\frac{\delta G}{\delta f}\right) = 0,
    \end{aligned}
\end{equation*} 
\end{small}
with the help of the formula~\eqref{eq:idcurl}. Then we have shown that $\{\{F, G\}, K\}_{fff}^{ffB/n} + \text{cyc} = 0$.

\noindent{\bf The fourth term} 
\begin{small}
\begin{equation*}
    \begin{aligned}
         & \{\{F, G\}, K\}_{fff}^{fffB/n^2} +\text{cyc}\\
         & = \int f \left[ \frac{\mathbf B}{n^2} \cdot \left( \int f \frac{\partial}{\partial {\mathbf v}} \frac{\delta F}{\delta f} \mathrm{d}{\mathbf v} \times  \int f \frac{\partial}{\partial {\mathbf v}} \frac{\delta G}{\delta f} \mathrm{d}{\mathbf v}
  \right), \frac{\delta K}{\delta f} \right]_{xv} \mathrm{d}{\mathbf x} \mathrm{d}{\mathbf v}\\
  & + \int \frac{f}{n}{\mathbf B} \cdot \left( \frac{\partial}{\partial {\mathbf v}} \frac{\delta K}{\delta f} \times \nabla \times \left(\frac{1}{n}   \int f \frac{\partial}{\partial {\mathbf v}} \frac{\delta F}{\delta f} \mathrm{d}{\mathbf v} 
 \times  \int f \frac{\partial}{\partial {\mathbf v}} \frac{\delta G}{\delta f} \mathrm{d}{\mathbf v} \right)\right)\mathrm{d}{\mathbf x} \mathrm{d}{\mathbf v}\\
 & =  \int \nabla \left( \frac{\mathbf B}{n^2} \cdot \left( \int f \frac{\partial}{\partial {\mathbf v}} \frac{\delta F}{\delta f} \mathrm{d}{\mathbf v} \times  \int f \frac{\partial}{\partial {\mathbf v}} \frac{\delta G}{\delta f} \mathrm{d}{\mathbf v}
  \right)\right) \cdot \left(\int f \frac{\partial}{\partial{\mathbf v}}\frac{\delta K}{\delta f}\mathrm{d}{\mathbf v}\right) \mathrm{d}{\mathbf x}\\ 
  & + \int \frac{1}{n^2}{\mathbf B} \cdot \left( \int f \frac{\partial}{\partial {\mathbf v}} \frac{\delta K}{\delta f}\mathrm{d}{\mathbf v} \times \nabla \times \left(   \int f \frac{\partial}{\partial {\mathbf v}} \frac{\delta F}{\delta f} \mathrm{d}{\mathbf v} 
 \times  \int f \frac{\partial}{\partial {\mathbf v}} \frac{\delta G}{\delta f} \mathrm{d}{\mathbf v} \right)\right)\mathrm{d}{\mathbf x}\\
 & + \frac{1}{2}\int {\mathbf B} \cdot \left( \int f \frac{\partial}{\partial {\mathbf v}} \frac{\delta K}{\delta f} \mathrm{d}{\mathbf v} \times \left(\nabla \left(\frac{1}{n^2}\right) \times \left(   \int f \frac{\partial}{\partial {\mathbf v}} \frac{\delta F}{\delta f} \mathrm{d}{\mathbf v} 
 \times  \int f \frac{\partial}{\partial {\mathbf v}} \frac{\delta G}{\delta f} \mathrm{d}{\mathbf v} \right)\right)\right)\mathrm{d}{\mathbf x} +\text{cyc}\\
 & = \int \frac{1}{n^2} \left(\underbrace{- {\mathbf B} \cdot ({\mathbf F} \times {\mathbf G}) \nabla \cdot {\mathbf K}  +\text{cyc}}_{\text{Term}\,1} + \underbrace{\frac{\mathbf B}{2}\cdot ({\mathbf K} \times \nabla \times ({\mathbf F} \times {\mathbf G}))  +\text{cyc}}_{\text{Term}\,2} \right) \mathrm{d}{\mathbf x},\\
 & + \int \frac{1}{n^2} \underbrace{\frac{1}{2} ({\mathbf F} \times {\mathbf G}) \cdot \nabla \times ({\mathbf B} \times {\mathbf K}) +\text{cyc}}_{\text{Term}\,3} \mathrm{d}{\mathbf x},
    \end{aligned}
\end{equation*}
\end{small}
where we have used the formula~\eqref{eq:iddiv}.

By using the formula~\eqref{eq:idcurl},
the Term 2 can be expanded as 
\begin{equation}
    \begin{aligned}
        & \frac{\mathbf B}{2} \cdot ({\mathbf K} \times \nabla \times ({\mathbf F} \times {\mathbf G})) + \text{cyc} \\
        & = \underbrace{(\nabla \cdot {\mathbf G}) \frac{\mathbf B}{2} \cdot \left({\mathbf K} \times {\mathbf F} \right) - (\nabla \cdot {\mathbf F}) \frac{\mathbf B}{2} \cdot \left({\mathbf K} \times {\mathbf G} \right) + \text{cyc}}_{={\mathbf B} \cdot ({\mathbf F} \times {\mathbf G}) \nabla \cdot {\mathbf K} + \text{cyc}= - \text{Term}\,1}\\
        & + \underbrace{\frac{\mathbf B}{2} \cdot ({\mathbf K} \times ({\mathbf G} \cdot \nabla) {\mathbf F}) - \frac{\mathbf B}{2} \cdot ( {\mathbf K} \times ({\mathbf F} \cdot \nabla) {\mathbf G})}_{\text{Term}\,4} + \text{cyc}
    \end{aligned}
\end{equation}
By the formula~\eqref{eq:idcurl},
the Term 3 can be expanded as, 
\begin{equation}
    \begin{aligned}
        & \underbrace{\frac{1}{2}(\nabla \cdot {\mathbf K}) ({\mathbf F} \times {\mathbf G}) \cdot {\mathbf B}}_{\text{Term}\,6} + \text{cyc} \underbrace{- \frac{1}{2} (\nabla \cdot {\mathbf B}) ({\mathbf F} \times {\mathbf G}) \cdot {\mathbf K}}_{\text{Term}\,7}  + \text{cyc}\\
        & + \underbrace{\frac{1}{2} ({\mathbf F} \times {\mathbf G}) \cdot ({\mathbf K}\cdot \nabla){\mathbf B}}_{\text{Term}\,8}  + \text{cyc} \underbrace{- \frac{1}{2} ({\mathbf F} \times {\mathbf G}) \cdot ({\mathbf B}\cdot \nabla){\mathbf K}}_{\text{Term}\,5 } + \text{cyc}.
    \end{aligned}
\end{equation}
By permutation of the $F, G, K$ in Term 4 and summing up with Term 5, we obtain 
\begin{equation}
\begin{aligned}
& \frac{\mathbf B}{2} \cdot ({\mathbf G} \times ({\mathbf F} \cdot \nabla) {\mathbf K}) - \frac{\mathbf B}{2} \cdot ( {\mathbf F} \times ({\mathbf G} \cdot \nabla) {\mathbf K}) - \frac{1}{2} ({\mathbf F} \times {\mathbf G}) \cdot ({\mathbf B}\cdot \nabla){\mathbf K}\\
& = \frac{1}{2} (\nabla \cdot {\mathbf K}) ({\mathbf F} \cdot ({\mathbf B} \times {\mathbf G})) = - \text{Term} \, 6
\end{aligned}
\end{equation}
where we have used the Lemma 2 in~\cite{lifting}. Then we know that $$\text{Term} \, 4 + \text{cyc} + \text{Term} \, 5 + \text{cyc} + \text{Term} \, 6 + \text{cyc} = 0.$$
By the Lemma 2 in~\cite{lifting}, we have 
$$
\text{Term}\,8 + \text{cyc} = \frac{1}{2} \nabla \cdot {\mathbf B} ({\mathbf K} \cdot ({\mathbf F} \times {\mathbf G})).
$$
As we know that 
$$
\text{Term}\,7 + \text{cyc} = -\frac{3}{2}\nabla \cdot {\mathbf B} ({\mathbf K} \cdot ({\mathbf F} \times {\mathbf G})),
$$
we have 
$$
\text{Term}\,7 + \text{cyc} + \text{Term}\,8+ \text{cyc} = -\nabla \cdot {\mathbf B} ({\mathbf K} \cdot ({\mathbf F} \times {\mathbf G})).
$$
In the end, we know 
$$
\{\{F, G\}, K\}_{fff}^{fffB/n^2} +\text{cyc} = \int \frac{1}{n^2}  \left(-\nabla \cdot {\mathbf B} ({\mathbf K} \cdot ({\mathbf F} \times {\mathbf G}))\right) \mathrm{d}{\mathbf x} = 0,
$$
under the condition that $\nabla \cdot {\mathbf B} = 0$.

\noindent{\bf The fifth term}
\begin{small}
\begin{equation*}
    \begin{aligned}
         & \{\{F, G\}, K\}_{fff}^{fB^2} +\text{cyc}\\
         & = \int f {\mathbf B} \cdot  \left( \frac{\partial}{\partial {\mathbf v}} \left( {\mathbf B} \cdot \left( \frac{\partial}{\partial {\mathbf v}} \frac{\delta F}{\delta f}  \times   \frac{\partial}{\partial {\mathbf v}} \frac{\delta G}{\delta f} \right)
  \right) \times \frac{\partial}{\partial {\mathbf v}} \frac{\delta K}{\delta f} \right) \mathrm{d}{\mathbf x} \mathrm{d}{\mathbf v} +\text{cyc}\\
  & =\int  \underbrace{f \underline{\mathbf B} \cdot  \left(\left( {\mathbf B} \cdot \left( \frac{\partial} {\partial \underline{\mathbf v}} \frac{\partial} {\partial {\mathbf v}} F_f  \times   \frac{\partial G_f}{\partial {\mathbf v}} \right)
  \right) \times \frac{\partial K_f}{\partial \underline{\mathbf v}}  \right) }_{a} + \underbrace{ f \underline{\mathbf B} \cdot  \left(\left( {\mathbf B} \cdot \left( \frac{\partial F_f}{\partial {\mathbf v}}  \times   \frac{\partial }{\partial \underline{\mathbf v}}\frac{\partial }{\partial {\mathbf v}} G_f\right)
  \right) \times \frac{\partial K_f}{\partial \underline{\mathbf v}}  \right)}_{b} \mathrm{d}{\mathbf x} \mathrm{d}{\mathbf v} \\
  & + \int \underbrace{f \underline{\mathbf B} \cdot  \left(\left( {\mathbf B} \cdot \left( \frac{\partial }{\partial \underline{\mathbf v}}\frac{\partial }{\partial {\mathbf v}} G_f \times   \frac{\partial K_f}{\partial {\mathbf v}} \right)
  \right) \times \frac{\partial F_f}{\partial \underline{\mathbf v}} \right)}_{b} + \underbrace{ f \underline{\mathbf B} \cdot  \left(\left( {\mathbf B} \cdot \left( \frac{\partial G_f}{\partial {\mathbf v}}  \times  \frac{\partial }{\partial \underline{\mathbf v}} \frac{\partial }{\partial {\mathbf v}} K_f\right)
  \right) \times \frac{\partial F_f}{\partial \underline{\mathbf v}}  \right)}_{c} \mathrm{d}{\mathbf x} \mathrm{d}{\mathbf v} \\
   & + \int \underbrace{f \underline{\mathbf B} \cdot  \left(\left( {\mathbf B} \cdot \left( \frac{\partial }{\partial \underline{\mathbf v}}\frac{\partial }{\partial {\mathbf v}} K_f \times   \frac{\partial F_f}{\partial {\mathbf v}} \right)
  \right) \times \frac{\partial G_f}{\partial \underline{\mathbf v}} \right)}_{c} + \underbrace{ f \underline{\mathbf B} \cdot  \left(\left( {\mathbf B} \cdot \left( \frac{\partial K_f}{\partial {\mathbf v}}  \times   \frac{\partial }{\partial \underline{\mathbf v}} \frac{\partial }{\partial {\mathbf v}} F_f \right)
  \right) \times \frac{\partial G_f}{\partial \underline{\mathbf v}} \right)}_{a} \mathrm{d}{\mathbf x} \mathrm{d}{\mathbf v}.
    \end{aligned}
\end{equation*}
\end{small}
Here we again explain the notation. The integrad of the first term, i.e., the term labeled with $a$, is equal to $$
  f {\mathbf B} \cdot \left({\mathbf L} \times \frac{\partial G_f}{\partial {\mathbf v}}\right),
$$
where ${\mathbf L} = \left( {\mathbf B} \cdot \left(\frac{\partial}{\partial v_1}\frac{\partial F_f} {\partial {\mathbf v}}  \times \frac{\partial K_f}{\partial {\mathbf v}}\right),  {\mathbf B} \cdot \left(\frac{\partial}{\partial v_2}\frac{\partial F_f} {\partial {\mathbf v}}  \times \frac{\partial K_f}{\partial {\mathbf v}}\right),  {\mathbf B} \cdot \left(\frac{\partial}{\partial v_3}\frac{\partial F_f} {\partial {\mathbf v}}  \times \frac{\partial K_f}{\partial {\mathbf v}}\right) \right)^\top$.
The terms with the same label cancel to each other, so we know $ \{\{F, G\}, K\}_{fff}^{fB^2} +\text{cyc} = 0$.

\noindent{\bf The sixth term}
\begin{small}
\begin{equation*}
    \begin{aligned}
         & \{\{F, G\}, K\}_{fff}^{ffB^2/n} +\text{cyc}\\
         & = - \int f {\mathbf B} \cdot  \left( \frac{\partial}{\partial {\mathbf v}} \left( \frac{\mathbf B}{n} \cdot \left( \frac{\partial}{\partial {\mathbf v}} \frac{\delta F}{\delta f}  \times  \int f \frac{\partial}{\partial {\mathbf v}} \frac{\delta G}{\delta f} \mathrm{d}{\mathbf v}\right)
  \right) \times \frac{\partial}{\partial {\mathbf v}} \frac{\delta K}{\delta f} \right) \mathrm{d}{\mathbf x} \mathrm{d}{\mathbf v}\\
  & - \int f {\mathbf B} \cdot  \left( \frac{\partial}{\partial {\mathbf v}} \left( \frac{\mathbf B}{n} \cdot \left( \int f \frac{\partial}{\partial {\mathbf v}} \frac{\delta F}{\delta f} \mathrm{d}{\mathbf v} \times   \frac{\partial}{\partial {\mathbf v}} \frac{\delta G}{\delta f} \right)
  \right) \times \frac{\partial}{\partial {\mathbf v}} \frac{\delta K}{\delta f} \right) \mathrm{d}{\mathbf x} \mathrm{d}{\mathbf v}\\
  & - \int \frac{\mathbf B}{n} \cdot \left( \int f \frac{\partial}{\partial {\mathbf v}} \left({\mathbf B} \cdot \left( \frac{\partial}{\partial {\mathbf v}} \frac{\delta F}{\delta f} \times  \frac{\partial}{\partial {\mathbf v}} \frac{\delta G}{\delta f} \right) \mathrm{d}{\mathbf v} \right) \times \int f  \frac{\partial}{\partial {\mathbf v}} \frac{\delta K}{\delta f} \mathrm{d}{\mathbf v} \right) \mathrm{d}{\mathbf x} + \text{cyc} \\
  & =- \int f \underline{\mathbf B} \cdot  \left( \left( \frac{\mathbf B}{n} \cdot \left( \frac{\partial}{\partial \underline{\mathbf v}}\frac{\partial F_f}{\partial {\mathbf v}}  \times  {\mathbf G}\right)
  \right) \times \frac{\partial K_f}{\partial \underline{\mathbf v}} \right) \mathrm{d}{\mathbf x} \mathrm{d}{\mathbf v}- \int f \underline{\mathbf B} \cdot  \left( \left( \frac{\mathbf B}{n} \cdot \left({\mathbf F} \times   \frac{\partial}{\partial \underline{\mathbf v}}\frac{\partial G_f}{\partial {\mathbf v}}\right)
  \right) \times \frac{\partial K_f}{\partial \underline{\mathbf v}} \right) \mathrm{d}{\mathbf x} \mathrm{d}{\mathbf v}\\
  & - \int f \underline{\mathbf B} \cdot  \left( \left( \frac{\mathbf B}{n} \cdot \left( \frac{\partial}{\partial \underline{\mathbf v}}  \frac{\partial G_f}{\partial {\mathbf v}}  \times  {\mathbf K}\right)
  \right) \times \frac{\partial F_f}{\partial \underline{\mathbf v}} \right) \mathrm{d}{\mathbf x} \mathrm{d}{\mathbf v} - \int f \underline{\mathbf B} \cdot  \left( \left( \frac{\mathbf B}{n} \cdot \left( {\mathbf G} \times   \frac{\partial}{\partial \underline{\mathbf v}}\frac{\partial K_f}{\partial {\mathbf v}} \right)
  \right) \times \frac{\partial F_f}{\partial \underline{\mathbf v}} \right) \mathrm{d}{\mathbf x} \mathrm{d}{\mathbf v}\\
  & - \int f \underline{\mathbf B} \cdot  \left( \left( \frac{\mathbf B}{n} \cdot \left( \frac{\partial}{\partial \underline{\mathbf v}}\frac{\partial K_f}{\partial {\mathbf v}}  \times  {\mathbf F}\right)
  \right) \times \frac{\partial G_f}{\partial \underline{\mathbf v}} \right) \mathrm{d}{\mathbf x} \mathrm{d}{\mathbf v} - \int f \underline{\mathbf B} \cdot  \left( \left( \frac{\mathbf B}{n} \cdot \left({\mathbf K}\times   \frac{\partial}{\partial \underline{\mathbf v}}\frac{\partial F_f}{\partial {\mathbf v}} \right)
  \right) \times \frac{\partial G_f}{\partial \underline{\mathbf v}}  \right) \mathrm{d}{\mathbf x} \mathrm{d}{\mathbf v}\\
  & - \int f \frac{\underline{\mathbf B}}{n} \cdot \left( \int f \left({\mathbf B} \cdot \left( \frac{\partial}{\partial \underline{\mathbf v}} \frac{\partial F_f}{\partial {\mathbf v}} \times  \frac{\partial G_f}{\partial {\mathbf v}} \right) \mathrm{d}{\mathbf v} \right) \times \underline{\mathbf K} \right) \mathrm{d}{\mathbf x}  - \int \frac{\underline{\mathbf B}}{n} \cdot \left( \int f \left({\mathbf B} \cdot \left( \frac{\partial F_f}{\partial {\mathbf v}}  \times  \frac{\partial }{\partial \underline{\mathbf v}} \frac{\partial G_f}{\partial {\mathbf v}} \right) \mathrm{d}{\mathbf v} \right) \times \underline{\mathbf K}\right) \mathrm{d}{\mathbf x} \\
  &- \int f \frac{\underline{\mathbf B}}{n} \cdot \left( \int f \left({\mathbf B} \cdot \left( \frac{\partial}{\partial \underline{\mathbf v}} \frac{\partial G_f}{\partial {\mathbf v}}  \times  \frac{\partial K_f}{\partial {\mathbf v}} \right) \mathrm{d}{\mathbf v} \right) \times \underline{\mathbf F}\right) \mathrm{d}{\mathbf x} - \int \frac{\underline{\mathbf B}}{n} \cdot \left( \int f \left({\mathbf B} \cdot \left( \frac{\partial G_f}{\partial {\mathbf v}} \times  \frac{\partial }{\partial \underline{\mathbf v}}\frac{\partial K_f}{\partial {\mathbf v}} \right) \mathrm{d}{\mathbf v} \right) \times \underline{\mathbf F} \right) \mathrm{d}{\mathbf x} \\
  & - \int f \frac{\underline{\mathbf B}}{n} \cdot \left( \int f \left({\mathbf B} \cdot \left( \frac{\partial}{\partial \underline{\mathbf v}} \frac{\partial K_f}{\partial {\mathbf v}}  \times  \frac{\partial F_f}{\partial {\mathbf v}}  \right) \mathrm{d}{\mathbf v} \right) \times \underline{\mathbf G} \right) \mathrm{d}{\mathbf x} - \int \frac{\underline{\mathbf B}}{n} \cdot \left( \int f \left({\mathbf B} \cdot \left( \frac{\partial K_f}{\partial {\mathbf v}} \times  \frac{\partial }{\partial \underline{\mathbf v}} \frac{\partial F_f}{\partial {\mathbf v}} \right) \mathrm{d}{\mathbf v} \right) \times \underline{\mathbf G}\right) \mathrm{d}{\mathbf x}.
    \end{aligned}
\end{equation*}
\end{small}
In the above formulas, the first term cancel to the last term, the second term cancel to the ninth term, the third term cancel to eighth term, the fourth term cancel to the eleventh term, the fifth term cancel to the tenth term, the sixth term cancel to the seventh term. Then we know that $ \{\{F, G\}, K\}_{fff}^{ffB^2/n} +\text{cyc} = 0$.

\noindent{\bf The seventh term}
\begin{small}
\begin{equation*}
    \begin{aligned}
         & \{\{F, G\}, K\}_{fff}^{fffB^2/n^2} +\text{cyc}\\
         & =  \int \frac{\mathbf B}{n} \cdot  \left( \int f \frac{\partial}{\partial {\mathbf v}} \left( \frac{\mathbf B}{n} \cdot \left( \frac{\partial}{\partial {\mathbf v}} \frac{\delta F}{\delta f}  \times  \int f \frac{\partial}{\partial {\mathbf v}} \frac{\delta G}{\delta f} \mathrm{d}{\mathbf v}\right)\mathrm{d}{\mathbf v}
  \right) \times \int f \frac{\partial}{\partial {\mathbf v}} \frac{\delta K}{\delta f} \mathrm{d}{\mathbf v}\right) \mathrm{d}{\mathbf x} \\
   & - \int \frac{\mathbf B}{n} \cdot  \left( \int f \frac{\partial}{\partial {\mathbf v}} \left( \frac{\mathbf B}{n} \cdot \left(\int f \frac{\partial}{\partial {\mathbf v}} \frac{\delta F}{\delta f} \mathrm{d}{\mathbf v} \times   \frac{\partial}{\partial {\mathbf v}} \frac{\delta G}{\delta f} \right)\mathrm{d}{\mathbf v}
  \right) \times \int f \frac{\partial}{\partial {\mathbf v}} \frac{\delta K}{\delta f}\mathrm{d}{\mathbf v} \right) \mathrm{d}{\mathbf x}  + \text{cyc}\\
  & =  \int \frac{\underline{\mathbf B}}{n} \cdot  \left( \int f \left( \frac{\mathbf B}{n} \cdot \left( \frac{\partial}{\partial \underline{\mathbf v}} \frac{\partial}{\partial {\mathbf v}}F_f \times {\mathbf G}\right)\mathrm{d}{\mathbf v}
  \right) \times \underline{\mathbf K} \right)  + \frac{\underline{\mathbf B}}{n} \cdot  \left( \int f \left( \frac{\mathbf B}{n} \cdot \left(  \frac{\partial}{\partial \underline{\mathbf v}} \frac{\partial}{\partial {\mathbf v}}G_f \times {\mathbf K}\right)\mathrm{d}{\mathbf v}
  \right) \times \underline{\mathbf F}\right) \mathrm{d}{\mathbf x} \\
  & +  \int \frac{\underline{\mathbf B}}{n} \cdot  \left( \int f \left( \frac{\mathbf B}{n} \cdot \left(  \frac{\partial}{\partial \underline{\mathbf v}} \frac{\partial}{\partial {\mathbf v}}K_f \times {\mathbf F}\right)\mathrm{d}{\mathbf v}
  \right) \times \underline{\mathbf G}\right)  - \frac{\underline{\mathbf B}}{n} \cdot  \left( \int f \left( \frac{\mathbf B}{n} \cdot \left({\mathbf F} \times    \frac{\partial}{\partial \underline{\mathbf v}} \frac{\partial}{\partial {\mathbf v}}G_f\right)\mathrm{d}{\mathbf v}
  \right) \times \underline{\mathbf K}\right) \mathrm{d}{\mathbf x} \\
   & - \int \frac{\underline{\mathbf B}}{n} \cdot  \left( \int f \left( \frac{\mathbf B}{n} \cdot \left({\mathbf G} \times    \frac{\partial}{\partial \underline{\mathbf v}} \frac{\partial}{\partial {\mathbf v}}K_f \right)\mathrm{d}{\mathbf v}
  \right) \times \underline{\mathbf F}\right)  - \frac{\underline{\mathbf B}}{n} \cdot  \left( \int f \left( \frac{\mathbf B}{n} \cdot \left({\mathbf K} \times   \frac{\partial}{\partial \underline{\mathbf v}} \frac{\partial}{\partial {\mathbf v}} F_f  \right)\mathrm{d}{\mathbf v}
  \right) \times \underline{\mathbf G}\right) \mathrm{d}{\mathbf x},
    \end{aligned}
\end{equation*}
\end{small}
where the first term cancel with the last term, the second term cancel with the fourth term, and the third term cancel with the fifth term.

\noindent{\bf The eighth term}
\begin{small}
\begin{equation}
\label{eq:eightfgh}
    \begin{aligned}
         & \{\{F, G\}, K\}_{ffB}^{fB/n} +\text{cyc}\\
         & = - \int f \left[ \frac{\mathbf B}{n} \cdot \left( \frac{\partial}{\partial {\mathbf v}}\frac{\delta G}{\delta f} \times \nabla \times \frac{\delta F}{\delta {\mathbf B}} -  \frac{\partial}{\partial {\mathbf v}}\frac{\delta F}{\delta f} \times \nabla \times \frac{\delta G}{\delta {\mathbf B}} \right), \frac{\delta K}{\delta f}\right]_{xv} \mathrm{d}{\mathbf x} \mathrm{d}{\mathbf v}\\
   & - \int \frac{\mathbf B}{n} \cdot  \left(  \left( \nabla \times \int f\frac{\partial}{\partial {\mathbf v}}\frac{\delta F}{\delta f} \times \frac{\partial}{\partial {\mathbf v}}\frac{\delta G}{\delta f} \mathrm{d}{\mathbf v}\right) \times \nabla \times \frac{\delta K}{\delta {\mathbf B}} 
   \right) \mathrm{d}{\mathbf x}\\
   & + \int \frac{f}{n} {\mathbf B} \cdot \left( \frac{\partial}{\partial {\mathbf v}} \left[\frac{\delta F}{\delta f}, \frac{\delta G}{\delta f} \right]_{xv} \times \nabla \times \frac{\delta K}{\delta {\mathbf B}}\right)\mathrm{d}{\mathbf x} \mathrm{d}{\mathbf v} + \text{cyc}\\
   & = \int \nabla f \cdot \frac{\partial G_f}{\partial {\mathbf v}}\frac{\mathbf B}{n} \cdot \left( \frac{\partial F_f}{\partial {\mathbf v}} \times \frac{\delta K}{\delta {\mathbf B}}\right) \mathrm{d}{\mathbf x}\mathrm{d}{\mathbf v} \\
   & -  \int \nabla f \cdot \frac{\partial F_f}{\partial {\mathbf v}} \frac{\mathbf B}{n} \cdot \left( \frac{\partial G_f}{\partial {\mathbf v}} \times \frac{\delta K}{\delta {\mathbf B}}\right) \mathrm{d}{\mathbf x}\mathrm{d}{\mathbf v} - \int \frac{\mathbf B}{n} \cdot \left( \nabla f \times \left(  \frac{\partial F_f}{\partial {\mathbf v}} \times  \frac{\partial G_f}{\partial {\mathbf v}}\right) \right) \times  \nabla \times \frac{\delta K}{\delta {\mathbf B}}\mathrm{d}{\mathbf x}\mathrm{d}{\mathbf v} \\
   & + \int f \left(\frac{\partial}{\partial {\mathbf v}} \left( \frac{\mathbf B}{n} \cdot \left(\frac{\partial F_f}{\partial {\mathbf v}}  \times \nabla \times \frac{\delta K}{\delta {\mathbf B}} \right)\right) \right) \cdot \frac{\partial G_f}{\partial {\mathbf x}} \mathrm{d}{\mathbf x}\mathrm{d}{\mathbf v}\\
   & - \int f \left(\frac{\partial}{\partial {\mathbf v}} \left( \frac{\mathbf B}{n} \cdot \left(\frac{\partial G_f}{\partial {\mathbf v}}  \times \nabla \times \frac{\delta K}{\delta {\mathbf B}} \right)\right) \right) \cdot \frac{\partial F_f}{\partial {\mathbf x}} \mathrm{d}{\mathbf x}\mathrm{d}{\mathbf v}\\
   & - \int f \frac{\mathbf B}{n} \cdot \left( \nabla \times \left( \frac{\partial F_f}{\partial {\mathbf v}}\times  \frac{\partial G_f}{\partial {\mathbf v}} \right) \times \nabla \times \frac{\delta K}{\delta {\mathbf B}} \right) \mathrm{d}{\mathbf x}\mathrm{d}{\mathbf v} + \int \frac{f}{n} {\mathbf B} \cdot \left( \frac{\partial}{\partial {\mathbf v}} \left[F_f, G_f \right]_{xv} \times \nabla \times \frac{\delta K}{\delta {\mathbf B}}\right)\mathrm{d}{\mathbf x} \mathrm{d}{\mathbf v}\\
   & + \int f \nabla \cdot  \frac{\partial G_f}{\partial {\mathbf v}}  \frac{\mathbf B}{n} \cdot \left( \frac{\partial F_f}{\partial {\mathbf v}}  \times \nabla \times \frac{\delta K}{\delta {\mathbf B}} \right)\mathrm{d}{\mathbf x} \mathrm{d}{\mathbf v} - \int f \nabla \cdot  \frac{\partial F_f}{\partial {\mathbf v}} \frac{\mathbf B}{n} \cdot \left( \frac{\partial G_f}{\partial {\mathbf v}}  \times \nabla \times \frac{\delta K}{\delta {\mathbf B}} \right)\mathrm{d}{\mathbf x} \mathrm{d}{\mathbf v} + \text{cyc},
    \end{aligned}
\end{equation}
\end{small}
where we have used the permutation of $F, G, K$.  The first three terms cancel to each other via using the formula~\eqref{eq:idwai} to the third line. The last six terms cancel to each other by using the formula~\eqref{eq:idcurl} to the last fourth term.

\noindent{\bf The ninth term}
\begin{small}
\begin{equation*}
    \begin{aligned}
         & \{\{F, G\}, K\}_{ffB}^{ffB/n^2} +\text{cyc}\\
         & =  \int f \left[ \int \frac{f}{n^2} {\mathbf B} \cdot \left( \frac{\partial}{\partial {\mathbf v}}\frac{\delta G}{\delta f} \times \nabla \times \frac{\delta F}{\delta {\mathbf B}} -  \frac{\partial}{\partial {\mathbf v}}\frac{\delta F}{\delta f} \times \nabla \times \frac{\delta G}{\delta {\mathbf B}} \right) \mathrm{d}{\mathbf v}, \frac{\delta K}{\delta f}\right]_{xv} \mathrm{d}{\mathbf x} \mathrm{d}{\mathbf v}\\
   & + \int \frac{\mathbf B}{n} \cdot \left( \left( \nabla \times  \left( \frac{1}{n} \int f\frac{\partial}{\partial {\mathbf v}}\frac{\delta F}{\delta f} \mathrm{d}{\mathbf v}\times \int f  \frac{\partial}{\partial {\mathbf v}}\frac{\delta G}{\delta f} \mathrm{d}{\mathbf v}\right)\right) \times \nabla \times \frac{\delta K}{\delta {\mathbf B}} 
   \right) \mathrm{d}{\mathbf x}\\
   & + \int \frac{f}{n} {\mathbf B} \cdot \left( \frac{\partial}{\partial {\mathbf v}} \frac{\delta K}{\delta f} \times \nabla \times \int \frac{f}{n} \left(\frac{\partial}{\partial {\mathbf v}}\frac{\delta G}{\delta f}  \times \nabla \times \frac{\delta F}{\delta {\mathbf B}} -\frac{\partial}{\partial {\mathbf v}}\frac{\delta F}{\delta f}  \times \nabla \times \frac{\delta G}{\delta {\mathbf B}} \right) \mathrm{d}{\mathbf v}\right)\mathrm{d}{\mathbf x} \mathrm{d}{\mathbf v} + \text{cyc}\\
   & = - \int \left(\nabla \cdot \int f \frac{\partial}{\partial {\mathbf v}} \frac{\delta K}{\delta f} \mathrm{d}{\mathbf v} \right) \left(\int \frac{f}{n^2} {\mathbf B} \cdot \left( \frac{\partial}{\partial {\mathbf v}}\frac{\delta G}{\delta f} \times \nabla \times \frac{\delta F}{\delta {\mathbf B}} -  \frac{\partial}{\partial {\mathbf v}}\frac{\delta F}{\delta f} \times \nabla \times \frac{\delta G}{\delta {\mathbf B}} \right) \mathrm{d}{\mathbf v} \right)\mathrm{d}{\mathbf x}\\
   & + \int \frac{\mathbf B}{n^2} \cdot \left( \left( \nabla \times  \left(  \int f\frac{\partial}{\partial {\mathbf v}}\frac{\delta F}{\delta f} \mathrm{d}{\mathbf v} \times \int f  \frac{\partial}{\partial {\mathbf v}}\frac{\delta G}{\delta f} \mathrm{d}{\mathbf v}\right)\right) \times \nabla \times \frac{\delta K}{\delta {\mathbf B}} 
   \right) \mathrm{d}{\mathbf x}\\
   & + \frac{1}{2}\int {\mathbf B} \cdot \left( \left( \nabla \left(\frac{1}{n^2}\right) \times  \left(  \int f\frac{\partial}{\partial {\mathbf v}}\frac{\delta F}{\delta f} \mathrm{d}{\mathbf v}\times \int f  \frac{\partial}{\partial {\mathbf v}}\frac{\delta G}{\delta f} \mathrm{d}{\mathbf v}\right)\right) \times \nabla \times \frac{\delta K}{\delta {\mathbf B}} 
   \right) \mathrm{d}{\mathbf x}\\
   & + \int \frac{f}{n^2} {\mathbf B} \cdot \left( \frac{\partial}{\partial {\mathbf v}} \frac{\delta K}{\delta f} \times \nabla \times \int f \left(\frac{\partial}{\partial {\mathbf v}}\frac{\delta G}{\delta f}  \times \nabla \times \frac{\delta F}{\delta {\mathbf B}} -\frac{\partial}{\partial {\mathbf v}}\frac{\delta F}{\delta f}  \times \nabla \times \frac{\delta G}{\delta {\mathbf B}} \right) \mathrm{d}{\mathbf v}\right)\mathrm{d}{\mathbf x} \mathrm{d}{\mathbf v}\\
   & + \frac{1}{2}\int f {\mathbf B} \cdot \left( \frac{\partial}{\partial {\mathbf v}} \frac{\delta K}{\delta f} \times \left(\nabla \left( \frac{1}{n^2}\right) \times \int f \left(\frac{\partial}{\partial {\mathbf v}}\frac{\delta G}{\delta f}  \times \nabla \times \frac{\delta F}{\delta {\mathbf B}} -\frac{\partial}{\partial {\mathbf v}}\frac{\delta F}{\delta f}  \times \nabla \times \frac{\delta G}{\delta {\mathbf B}} \right) \mathrm{d}{\mathbf v}\right) \right)\mathrm{d}{\mathbf x} \mathrm{d}{\mathbf v} + \text{cyc}\\
   & = - \int \nabla \cdot {\mathbf K} \left( \frac{1}{n^2} {\mathbf B} \cdot \left( {\mathbf G} \times {\mathbf F}^A  - {\mathbf F} \times {\mathbf G}^A \right) \right) \mathrm{d}{\mathbf x}  + \int \frac{1}{n^2} {\mathbf B} \cdot \left( \left(\nabla \times \left( {\mathbf F} \times {\mathbf G}\right) \right) \times {\mathbf K}^A \right)\mathrm{d}{\mathbf x} \\
   & -  \int \frac{1}{2n^2} \nabla \cdot \left( \left( {\mathbf F} \times {\mathbf G}\right) \times \left( {\mathbf K}^A \times {\mathbf B} \right)\right)\mathrm{d}{\mathbf x}  + \int \frac{1}{n^2} {\mathbf B} \cdot \left( {\mathbf K} \times \nabla \times \left( {\mathbf G} \times {\mathbf F}^A - {\mathbf F} \times {\mathbf G}^A \right) \right) \mathrm{d}{\mathbf x}\\
   & - \frac{1}{2} \int \frac{1}{n^2} \nabla \cdot \left( \left( {\mathbf G} \times {\mathbf F}^A - {\mathbf F} \times {\mathbf G}^A \right) \times \left( {\mathbf B} \times {\mathbf K} \right) \right) \mathrm{d}{\mathbf x} + \text{cyc}\\
   & = - \int \frac{1}{n^2} \nabla \cdot {\mathbf G} \left({\mathbf B} \cdot \left( {\mathbf F} \times {\mathbf K}^A \right) \right) - \frac{1}{n^2} \nabla \cdot {\mathbf F} \left( {\mathbf B} \cdot \left( {\mathbf G} \times {\mathbf K}^A \right) \right) \mathrm{d}{\mathbf x}  + \int \frac{1}{n^2} {\mathbf B} \cdot \left( \left(\nabla \times \left( {\mathbf F} \times {\mathbf G}\right) \right) \times {\mathbf K}^A \right)\mathrm{d}{\mathbf x} \\
   & -  \int \frac{1}{2n^2} \nabla \cdot \left( \left( {\mathbf F} \times {\mathbf G}\right) \times \left( {\mathbf K}^A \times {\mathbf B} \right)\right)\mathrm{d}{\mathbf x} + \int \frac{1}{n^2} {\mathbf B} \cdot \left( {\mathbf G} \times \nabla \times \left( {\mathbf F} \times {\mathbf K}^A \right) - {\mathbf F} \times \nabla \times \left( {\mathbf G} \times {\mathbf K}^A\right)\right)\mathrm{d}{\mathbf x} \\
   & -\frac{1}{2}\int \frac{1}{n^2} \nabla \cdot \left(\left( {\mathbf F} \times {\mathbf K}^A\right) \times \left( {\mathbf B} \times {\mathbf G} \right) -  \frac{1}{n^2} \left( {\mathbf G} \times {\mathbf K}^A\right) \times \left( {\mathbf B} \times {\mathbf F} \right) \right)\mathrm{d}{\mathbf x} + \text{cyc}\\
   & = \int \frac{1}{n^2} \nabla \cdot {\mathbf B} \left( {\mathbf K}^A \cdot ({\mathbf F} \times {\mathbf G})\right) \mathrm{d}{\mathbf x} - \int \frac{1}{n^2} \nabla \cdot {\mathbf K}^A \left( {\mathbf B} \cdot ({\mathbf F} \times {\mathbf G})\right) \mathrm{d}{\mathbf x} + \text{cyc},
    \end{aligned}
\end{equation*}
\end{small}
In the above calculations, we used integration by parts, the permutations of $F, G, K$, the formula~\eqref{eq:idcurl}, the formula~\eqref{eq:iddiv}, and the lemma 2 in~\cite{lifting}.
As $\nabla \cdot {\mathbf K}^A = \nabla \cdot \nabla \times \frac{\delta K}{\delta {\mathbf B}}= 0$, we know that $\{\{F, G\}, K\}_{ffB}^{ffB/n^2} +\text{cyc}$ is zero under the condition that $\nabla \cdot {\mathbf B} = 0$.

\noindent{\bf The tenth term}
\begin{small}
\begin{equation*}
    \begin{aligned}
         & \{\{F, G\}, K\}_{ffB}^{fB^2/n} +\text{cyc}\\
         & =  -\int f {\mathbf B} \cdot \frac{\partial}{\partial {\mathbf v}} \left( \frac{\mathbf B}{n} \cdot \left( \frac{\partial}{\partial {\mathbf v}}\frac{\delta G}{\delta f} \times \nabla \times \frac{\delta F}{\delta {\mathbf B}} -  \frac{\partial}{\partial {\mathbf v}}\frac{\delta F}{\delta f} \times \nabla \times \frac{\delta G}{\delta {\mathbf B}} \right)  \right)\times \frac{\partial}{\partial {\mathbf v}}\frac{\delta K}{\delta f}  \mathrm{d}{\mathbf x} \mathrm{d}{\mathbf v}\\
   & + \int \frac{f}{n}{\mathbf B} \cdot \left( \frac{\partial}{\partial {\mathbf v}} \left( {\mathbf B} \cdot  \left(  \frac{\partial}{\partial {\mathbf v}}\frac{\delta F}{\delta f} \times  \frac{\partial}{\partial {\mathbf v}}\frac{\delta G}{\delta f} \right)\right) \right)\times \nabla \times \frac{\delta K}{\delta {\mathbf B}} 
    \mathrm{d}{\mathbf x} \mathrm{d}{\mathbf v} + \text{cyc}\\
    & =  -\int f \underline{\mathbf B} \cdot \left( \frac{\mathbf B}{n} \cdot \left( \frac{\partial}{\partial \underline{\mathbf v}}\frac{\partial}{\partial {\mathbf v}}\frac{\delta F}{\delta f} \times \nabla \times \frac{\delta K}{\delta {\mathbf B}} \right)  \right)\times \frac{\partial}{\partial \underline{\mathbf v}}\frac{\delta G}{\delta f}  \mathrm{d}{\mathbf x} \mathrm{d}{\mathbf v}\\
    & + \int f \underline{\mathbf B} \cdot \left( \frac{\mathbf B}{n} \cdot \left( \frac{\partial}{\partial \underline{\mathbf v}}\frac{\partial}{\partial {\mathbf v}}\frac{\delta G}{\delta f} \times \nabla \times \frac{\delta K}{\delta {\mathbf B}} \right)  \right)\times \frac{\partial}{\partial \underline{\mathbf v}}\frac{\delta F}{\delta f}  \mathrm{d}{\mathbf x} \mathrm{d}{\mathbf v}\\
    & + \int \frac{f}{n}\underline{\mathbf B} \cdot  \left( {\mathbf B} \cdot  \left(  \frac{\partial}{\partial \underline{\mathbf v}}\frac{\partial}{\partial {\mathbf v}}\frac{\delta F}{\delta f} \times  \frac{\partial}{\partial {\mathbf v}}\frac{\delta G}{\delta f}\right) \right)\times \underline{\nabla \times \frac{\delta K}{\delta {\mathbf B}}} 
    \mathrm{d}{\mathbf x} \mathrm{d}{\mathbf v}\\
    & + \int \frac{f}{n}\underline{\mathbf B} \cdot \left( {\mathbf B} \cdot  \left(  \frac{\partial}{\partial {\mathbf v}}\frac{\delta F}{\delta f} \times  \frac{\partial}{\partial \underline{\mathbf v}}\frac{\partial}{\partial {\mathbf v}}\frac{\delta G}{\delta f} \right) \right)\times \underline{\nabla \times \frac{\delta K}{\delta {\mathbf B}}} 
    \mathrm{d}{\mathbf x} \mathrm{d}{\mathbf v} + \text{cyc}.
    \end{aligned}
\end{equation*}
\end{small}
In the above formulas, the first term cancel with the third term, and the second term cancel with the fourth term. So $\{\{F, G\}, K\}_{ffB}^{fB^2/n} +\text{cyc}$ is zero.

\noindent{\bf The eleventh term}
\begin{small}
\begin{equation*}
    \begin{aligned}
         & \{\{F, G\}, K\}_{ffB}^{ffB^2/n^2} +\text{cyc}\\
         & =  -\int \frac{f}{n} {\mathbf B} \cdot \frac{\partial}{\partial {\mathbf v}} \left( \frac{\mathbf B}{n} \cdot \left( \frac{\partial}{\partial {\mathbf v}}\frac{\delta F}{\delta f} \times \int f \frac{\partial}{\partial {\mathbf v}}\frac{\delta G}{\delta f}  \mathrm{d}{\mathbf v} \right)  \right) \times \nabla \times \frac{\delta K}{\delta {\mathbf B}}  \mathrm{d}{\mathbf x} \mathrm{d}{\mathbf v}\\
          &  -\int \frac{f}{n} {\mathbf B} \cdot \frac{\partial}{\partial {\mathbf v}} \left( \frac{\mathbf B}{n} \cdot \left( \int f \frac{\partial}{\partial {\mathbf v}}\frac{\delta F}{\delta f} \mathrm{d}{\mathbf v} \times  \frac{\partial}{\partial {\mathbf v}}\frac{\delta G}{\delta f}  \right)  \right) \times \nabla \times \frac{\delta K}{\delta {\mathbf B}}  \mathrm{d}{\mathbf x} \mathrm{d}{\mathbf v}\\
   & + \int \frac{\mathbf B}{n} \cdot \left( \int f \frac{\partial}{\partial {\mathbf v}} \left( \frac{\mathbf B}{n} \cdot  \left(  \frac{\partial}{\partial {\mathbf v}}\frac{\delta G}{\delta f} \times \nabla \times  \frac{\delta F}{\delta {\mathbf B}} -   \frac{\partial}{\partial {\mathbf v}}\frac{\delta F}{\delta f} \times \nabla \times  \frac{\delta G}{\delta {\mathbf B}} \right) \mathrm{d}{\mathbf v} \right)\times \int f \frac{\partial}{\partial {\mathbf v}} \frac{\delta K}{\delta f} \mathrm{d}{\mathbf v}
    \right) \mathrm{d}{\mathbf x} + \text{cyc}\\
    & =  -\int \frac{f}{n} \underline{\mathbf B} \cdot \left( \frac{\mathbf B}{n} \cdot \left( \frac{\partial}{\partial \underline{\mathbf v}}\frac{\partial}{\partial {\mathbf v}}\frac{\delta F}{\delta f} \times \int f \frac{\partial}{\partial {\mathbf v}}\frac{\delta G}{\delta f}  \mathrm{d}{\mathbf v} \right)  \right) \times \underline{\nabla \times \frac{\delta K}{\delta {\mathbf B}}}  \mathrm{d}{\mathbf x} \mathrm{d}{\mathbf v}\\
    &  -\int \frac{f}{n} \underline{\mathbf B} \cdot  \left( \frac{\mathbf B}{n} \cdot \left( \int f \frac{\partial}{\partial {\mathbf v}}\frac{\delta F}{\delta f} \mathrm{d}{\mathbf v} \times  \frac{\partial}{\partial \underline{\mathbf v}}\frac{\partial}{\partial {\mathbf v}}\frac{\delta G}{\delta f}  \right)  \right) \times \underline{\nabla \times \frac{\delta K}{\delta {\mathbf B}}}  \mathrm{d}{\mathbf x} \mathrm{d}{\mathbf v}\\
    & + \int \frac{\underline{\mathbf B}}{n} \cdot \left( \int f  \left( \frac{\mathbf B}{n} \cdot  \left(  \frac{\partial}{\partial \underline{\mathbf v}}\frac{\partial}{\partial {\mathbf v}}\frac{\delta F}{\delta f} \times \nabla \times  \frac{\delta K}{\delta {\mathbf B}}\right) \mathrm{d}{\mathbf v} \right)\times \underline{\int f \frac{\partial}{\partial {\mathbf v}} \frac{\delta G}{\delta f} \mathrm{d}{\mathbf v}}
    \right) \mathrm{d}{\mathbf x}\\
    & - \int \frac{\underline{\mathbf B}}{n} \cdot \left( \int f \left( \frac{\mathbf B}{n} \cdot  \left(\frac{\partial}{\partial \underline{\mathbf v}}\frac{\partial}{\partial {\mathbf v}}\frac{\delta G}{\delta f} \times \nabla \times  \frac{\delta K}{\delta {\mathbf B}} \right) \mathrm{d}{\mathbf v} \right)\times \underline{\int f \frac{\partial}{\partial {\mathbf v}} \frac{\delta F}{\delta f} \mathrm{d}{\mathbf v}}
    \right) \mathrm{d}{\mathbf x} + \text{cyc}.
    \end{aligned}
\end{equation*}
\end{small}
In the above formulas, the first term cancel with the third term, and the second term cancel with the fourth term, so $ \{\{F, G\}, K\}_{ffB}^{ffB^2/n^2} +\text{cyc}$ is zero.

\noindent{\bf The twelfth term}
\begin{small}
\begin{equation}
\label{eq:12thterm}
    \begin{aligned}
         & \{\{F, G\}, K\}_{fBB}^{fB/n^2} +\text{cyc}\\
         & = \int f \left[ \frac{\mathbf B}{n^2}  \cdot \left(\nabla \times \frac{\delta F}{\delta {\mathbf B}}  \times \nabla \times \frac{\delta G}{\delta {\mathbf B}}  \right), \frac{\delta K}{\delta f}\right]_{xv} \mathrm{d}{\mathbf x} \mathrm{d}{\mathbf v}\\
          & + \int \frac{\mathbf B}{n} \cdot \left( \nabla \times \int \frac{f}{n} \left(\frac{\partial}{\partial {\mathbf v}}\frac{\delta G}{\delta f} \times   \nabla \times \frac{\delta F}{\delta {\mathbf B}}  -   \frac{\partial}{\partial {\mathbf v}}\frac{\delta F}{\delta f} \times   \nabla \times \frac{\delta G}{\delta {\mathbf B}}  \right) \mathrm{d}{\mathbf v}\times \nabla \times \frac{\delta K}{\delta {\mathbf B}} \right) \mathrm{d}{\mathbf x} \\
   & + \int \frac{f}{n}{\mathbf B} \cdot \left(  \frac{\partial}{\partial {\mathbf v}}\frac{\delta K}{\delta f} \times \nabla \times \left( \frac{1}{n} \nabla \times  \frac{\delta F}{\delta {\mathbf B}}  \times \nabla \times  \frac{\delta G}{\delta {\mathbf B}} \right) \right) \mathrm{d}{\mathbf v}
 \mathrm{d}{\mathbf x} + \text{cyc},\\
 & = \int - \frac{\mathbf B}{n^2}  \cdot \left(\frac{\delta F}{\delta {\mathbf A}}  \times \frac{\delta G}{\delta {\mathbf A}}  \right) \nabla \cdot {\mathbf K} \ \mathrm{d}{\mathbf x}\\
 & + \int \frac{\mathbf B}{n} \cdot \left(\left( \nabla \times \frac{ {\mathbf G} \times \frac{\delta F}{\delta {\mathbf A}} - {\mathbf F} \times \frac{\delta G}{\delta {\mathbf A}}  }{n} \right) \times \frac{\delta K}{\delta {\mathbf A}} \right) \mathrm{d}{\mathbf x}\\
 & + \int \frac{\mathbf B}{n} \cdot \left( {\mathbf K} \times \nabla \times \left( \frac{1}{n} \frac{\delta F}{\delta {\mathbf A}} \times \frac{\delta G}{\delta {\mathbf A}} \right) \right) \mathrm{d}{\mathbf x} + \text{cyc},
    \end{aligned}
\end{equation}
\end{small}
The last three terms of~\eqref{eq:12thterm} are in the same form as the formula (45) in~\cite{remarkable} and the result of~\eqref{eq:12thterm} is thus zero, for which the condition $\nabla \cdot {\mathbf B} = 0$ is used, as mentioned in~\cite{remarkable}.

\noindent{\bf The thirteenth term}
\begin{small}
\begin{equation*}
    \begin{aligned}
         & \{\{F, G\}, K\}_{fBB}^{fB^2/n^2} +\text{cyc}\\
         & = - \int \frac{f}{n} {\mathbf B} \cdot \frac{\partial}{\partial {\mathbf v}} \left( \frac{\mathbf B}{n} \cdot  \left( \frac{\partial}{\partial {\mathbf v}} \frac{\delta G}{\delta f} \times \nabla \times \frac{\delta F}{\delta {\mathbf B}} - \frac{\partial}{\partial {\mathbf v}} \frac{\delta F}{\delta f} \times \nabla \times \frac{\delta G}{\delta {\mathbf B}}  \right) \right) \times \nabla \times \frac{\delta K}{\delta {\mathbf B}}\mathrm{d}{\mathbf x} \mathrm{d}{\mathbf v} + \text{cyc}\\
         & = - \int \frac{f}{n} \underline{\mathbf B} \cdot  \left( \frac{\mathbf B}{n} \cdot  \left( \frac{\partial}{\partial \underline{\mathbf v}}\frac{\partial}{\partial {\mathbf v}} \frac{\delta G}{\delta f} \times \nabla \times \frac{\delta F}{\delta {\mathbf B}}   \right) \right) \times \underline{\nabla \times \frac{\delta K}{\delta \mathbf B}}\mathrm{d}{\mathbf x} \mathrm{d}{\mathbf v}\\
         & + \int \frac{f}{n} \underline{\mathbf B} \cdot  \left( \frac{\mathbf B}{n} \cdot  \left( \frac{\partial}{\partial \underline{\mathbf v}} \frac{\partial}{\partial {\mathbf v}} \frac{\delta G}{\delta f} \times \nabla \times \frac{\delta K}{\delta {\mathbf B}}  \right) \right) \times \underline{\nabla \times \frac{\delta F}{\delta {\mathbf B}}}\mathrm{d}{\mathbf x} \mathrm{d}{\mathbf v} + \text{cyc},
    \end{aligned}
\end{equation*}
\end{small}
where the above two terms cancel to each other. So $\{\{F, G\}, K\}_{fBB}^{fB^2/n^2} +\text{cyc} = 0$.

\noindent{\bf The fourteenth term}
\begin{small}
\begin{equation*}
    \begin{aligned}
          \{\{F, G\}, K\}_{BBB}^{B/n^2} +\text{cyc} = \int \frac{\mathbf B}{n} \cdot \left(  \nabla \times \left( \frac{1}{n} \nabla \times \frac{\delta F}{\delta {\mathbf B}} \times \nabla \times \frac{\delta G}{\delta {\mathbf B}} \right) \times \nabla \times \frac{\delta K}{\delta {\mathbf B}}  \right)  \mathrm{d}{\mathbf x} +\text{cyc},
    \end{aligned}
\end{equation*}
\end{small}
which is equal to zero as shown in~\cite{remarkable} under the condition that $\nabla \cdot {\mathbf B} = 0$.

As the above 14 groups of terms are zero under the condition that $\nabla \cdot {\mathbf B} = 0$, we have proven the bracket~\eqref{eq:xvBbracket} satisfies the Jacobi identity under the condition $\nabla \cdot {\mathbf B} = 0$.


\begin{thebibliography}{00}

\bibitem{lifting} Morrison P J. A general theory for gauge-free lifting. Physics of Plasmas, 2-13, 20(1).

\bibitem{remarkable} Lingam  M, Morrison P J, Miloshevich G. Remarkable connections between extended magnetohydrodynamics models. Physics of Plasmas, 2015, 22(7).



\bibitem{Tronci}{Tronci C. Hamiltonian approach to hybrid plasma models. Journal of Physics A, 2010, 43(37).}


\bibitem{Abdelhamid}
Abdelhamid H M, Kawazura Y, Yoshida Z.  Hamiltonian formalism of extended magnetohydrodynamics. Journal of Physics A: Mathematical and Theoretical, 2015, 48.23: 235502.

\bibitem{burby} Burby J W. Magnetohydrodynamic motion of a two-fluid plasma. Physics of Plasmas, 2017, 24(8).

\bibitem{poissondirac} Pinto F W, Burby J W. Poisson--Dirac submanifolds as a paradigm for imposing constraints in non-dissipative plasma models. Physics of Plasmas, 2025, 32(7), 072305.

\bibitem{Kaltsasvp} Kaltsas D A, Burby J W, Morrison P J, Tassi E, Throumoulopoulos G N. Imposing quasineutrality on electrostatic plasmas via the Dirac theory of constraints. arXiv preprint arXiv:2508.02389, 2025.

\bibitem{burbyvp} Burby J W. Hamiltonian formulation of the quasineutral Vlasov--Poisson system. arXiv preprint arXiv:2506.21415, 2025.


 




\bibitem{LHPS} 
Li Y, Campos-Pinto M, Holderied F, Possanner S, Sonnendr\"ucker E. Geometric Particle-In-Cell discretizations of a plasma hybrid model with kinetic ions and mass-less fluid electrons,  Journal of Computational Physics, 2024, 498, 112671.

\bibitem{LHPS2} 
Li Y, Holderied F, Possanner S, Sonnendr\"ucker E.  Canonical variables based numerical schemes for hybrid plasma models with kinetic ions and massless electrons. Journal of Computational Physics, 2024, 505, 112916.



\bibitem{DP} Kaltsas D A, Throumoulopoulos G N,  Morrison P J. Hamiltonian kinetic-Hall magnetohydrodynamics with fluid and kinetic ions in the current and pressure coupling schemes. Journal of Plasma Physics, 2021, 87(5), 835870502.

\bibitem{MW} Marsden J E, Weinstein A. The Hamiltonian structure of the Maxwell--Vlasov equations. Physica D: nonlinear phenomena, 1982, 4(3), 394-406.

\bibitem{holm} Holm D D. Hall magnetohydrodynamics: conservation laws and Lyapunov stability. The Physics of fluids, 1987, 30(5), 1310-1322.

\bibitem{reduction} Marsden J,  Weinstein A. Reduction of symplectic manifolds with symmetry. Reports on mathematical physics, 1974, 5(1), 121-130.

\bibitem{winske} Winske D, Karimabadi H, Le A Y, Omidi N N, Roytershteyn V, Stanier A J. Hybrid-kinetic approach: Massless electrons. In Space and Astrophysical Plasma Simulation: Methods, Algorithms, and Applications (pp. 63-91). 2023, Springer.

\bibitem{chacon} Stanier A, Chac\'on L,  Chen G. A fully implicit, conservative, non-linear, electromagnetic hybrid particle-ion/fluid-electron algorithm. Journal of Computational Physics, 2019, 376, 597-616.

\bibitem{morrisonmhd} Morrison P J, Greene J M. Noncanonical Hamiltonian density formulation of hydrodynamics and ideal magnetohydrodynamics. Physical Review Letters, 1980, 45(10), 790.

\bibitem{morrisonvm} Morrison P J. The Maxwell--Vlasov equations as a continuous Hamiltonian system. Physics Letters A, 1980, 80(5-6), 383-386.

\bibitem{symmetry} Marsden J E,  Ratiu T S. Introduction to mechanics and symmetry: a basic exposition of classical mechanical systems (Vol. 17). 2013, Springer Science \& Business Media.

\bibitem{twofluid} Spencer R G,  Kaufman A N.  Hamiltonian structure of two-fluid plasma dynamics. Physical Review A, 1982, 25(4), 2437.

\end{thebibliography}
\end{document}